\newcommand{\size}[1]{\left| #1 \right|}
\newcommand{\E}{\mathbb{E}}
\newcommand{\remove}[1]{}
\newcommand{\N}{\mathbb{N}}
\newcommand{\cS}{\mathcal{S}}
\newcommand{\cA}{\mathcal{A}}
\newcommand{\eps}{\epsilon}
\newcommand{\vareps}{\varepsilon}
\newcommand{\complain}[1]{\textcolor{red}{#1}}
\newcommand{\comments}[1]{\textcolor{blue}{\bf{#1}}}
\newcommand{\pr}{{\mathbb{P}}\xspace}
\theoremstyle{plain}
\newtheorem{theo}{Theorem}[section]
\newtheorem{lem}[theo]{Lemma}
\newtheorem{cl}[theo]{Claim}
\theoremstyle{definition}
\newtheorem{rem}{Remark}
\newtheorem{obs}[theo]{Observation}
\newtheorem{problem}[theo]{Problem}
\newcommand{\ealong}{{\sc Edge Arrival}\xspace}
\newcommand{\ea}{{\sc EA}\xspace}
\newcommand{\valong}{{\sc Vertex Arrival}\xspace}
\newcommand{\va}{{\sc VA}\xspace}
\newcommand{\vadeglong}{{\sc Vertex Arrival with Degree Oracle}\xspace}
\newcommand{\vadeg}{{\sc VAdeg}\xspace}
\newcommand{\varandlong}{{\sc Vertex Arrival in Random Order}\xspace}
\newcommand{\varand}{{\sc VArand}\xspace}
\newcommand{\allong}{{\sc Adjacency List}\xspace}
\newcommand{\al}{{\sc AL}\xspace}
\newcommand{\colorverify}{{\sc Conflict-Sep}\xspace}
\newcommand{\conflictest}{{\sc Conflict-Est}\xspace}
\newcommand{\hmR}{$\mbox{high-m}_R$ }
\newcommand{\lmR}{$\mbox{low-m}_R$ }
\title{
Even the Easiest(?) Graph Coloring Problem is not Easy in Streaming!
}
\date{}
{
\author{
Anup Bhattacharya \footnote{Funded by NPDF fellowship at ISI, Kolkata}
\and
Arijit Bishnu
\footnote{
Indian Statistical Institute, Kolkata, India
}
\and
Gopinath Mishra
\footnotemark[1]
\and
Anannya Upasana
\footnotemark[1]
}}
\begin{document}

\maketitle

\thispagestyle{empty}

\remove{
Works to be done: 
\begin{itemize}
 \item Fix a title
 \item Check all the colored text in the main body of the paper.
 \item Change the numbering system in footnote to symbols.
 \item Write a conclusion and discussion. -- Done! Please see once. 
 \item Mention in Appendix~\ref{sec:est-vaubsc} about the confusion of existence of independence in applying Chernoff. -- Done! See Step-(iv) and Remark~\ref{rem:why-ind}.  I will verify it once more tomorrow -- Arijit.
 \item Try to fit Algorithm~\ref{algo:random-order} in one page. -- Done!
 \item Bring Algorithms~\ref{algo:random-order} and \ref{algo:random-order-verify} in same font.
 \item To take care of any relevant ISAAC comments/suggestions. -- W-streaming defined. Anything more?
 \item Shall we mention whether the space complexity is deterministic/randomized? -- Probably No!
 \item Mention about that the random order lower bound work for AL model and with degree oracle. -- To be done!
\end{itemize}
}

\begin{abstract}
We study a graph coloring problem that is otherwise easy but becomes quite non-trivial in the one-pass streaming model. 
In contrast to previous graph coloring problems in streaming that try to find an assignment of colors to vertices, our main work is on estimating the number of conflicting or monochromatic edges given a coloring function that is streaming along with the graph; we call the problem {\sc Conflict-Est}. The coloring function on a vertex can be read or accessed only when the vertex is revealed in the stream. If we need the color on a vertex that has streamed past, then that color, along with its vertex, has to be stored explicitly. 
We provide algorithms for a graph that is streaming in different variants of the one-pass vertex arrival streaming model, viz. the {\sc Vertex Arrival} ({\sc VA}), {Vertex Arrival With Degree Oracle} ({\sc VAdeg}), {\sc Vertex Arrival in Random Order} ({\sc VArand}) models, with special focus on the random order model. We also provide matching lower bounds for most of the cases. The mainstay of our work is in showing that the properties of a random order stream can be exploited to design streaming algorithms for estimating the number of conflicting edges. We have also obtained a lower bound, though not matching the upper bound, for the random order model. Among all the three models vis-a-vis this problem, we can show a clear separation of power in favor of the \varand model.
\remove{
We give algorithms for \conflictest and \colorverify. For the \valong (\va) model, we need $\widetilde{\mathcal{O}}\left(\min\{\size{V},\frac{|V|^2}{T}\}\right)$ space to solve the \conflictest and $\widetilde{\mathcal{O}}\left(\frac{|V|}{\sqrt{|E|}}\right)$ space to solve the \colorverify problem. When given access to a degree oracle (\vadeg model), the space required to estimate the number of conflicting edges is $\widetilde{\mathcal{O}}\left(\frac{|E|}{T}\right)$ and to verify the coloring function is $\widetilde{\mathcal{O}}\left(\frac{1}{\vareps}\right)$. When the vertices arrive in a random order, we are able to give a $\widetilde{\mathcal{O}}\left(\frac{|V|}{\sqrt{|T|}}\right)$ space algorithm to solve the estimation and verification problems. Our results also follow in the \allong (\al) model. 
We also establish matching lower bounds for the easier variant \colorverify proving that our algorithms are efficient.
}
\end{abstract}
{\bf{Keywords:} Streaming, Graph coloring, Sublinear Algorithms.}

\newpage
\pagestyle{plain}
\setcounter{page}{1}

\section{Introduction}\label{sec:intro}
\noindent
\remove{Graph coloring is a fundamental problem both in graph theory and algorithms.} The \emph{chromatic number} $\chi(G)$ of an $n$-vertex graph $G=(V,E)$ is the minimum number of colors needed to color the vertices of $V$ so that no two adjacent vertices get the same color. The \emph{chromatic number} problem is NP-hard and even hard to approximate within a factor of $n^{1-\vareps}$ for any constant $\vareps > 0$~\cite{DBLP:journals/jcss/FeigeK98, DBLP:journals/toc/Zuckerman07, DBLP:conf/icalp/KhotP06}. For any connected undirected graph $G$ with maximum degree $\Delta$, $\chi(G)$ is at most $\Delta + 1$~\cite{Vizing1964OnAE}. This existential coloring scheme can be made constructive across different models of computation. 
A seminal result of recent vintage is that the $\Delta+1$ coloring can be done in the streaming model~\cite{DBLP:conf/soda/AssadiCK19}. Of late, there has been interest in graph coloring problems in the sub-linear regime across a variety of models~\cite{approx/AlonA20, DBLP:conf/soda/AssadiCK19, DBLP:conf/esa/BehnezhadDHKS19, DBLP:journals/corr/abs-1807-07640, DBLP:journals/corr/abs-1905-00566}. Keeping with the trend of coloring problems, these works look at assigning colors to vertices. Since the size of the output will be as large as the number of vertices, reseachers study the semi-streaming model~\cite{DBLP:journals/sigmod/McGregor14} for streaming graphs. In the semi-streaming model, $\widetilde{\mathcal{O}}(n)$\footnote{$\widetilde{\mathcal{O}}(\cdot)$ hides a polylogarithmic factor.} space is allowed. 

In a marked departure from the above works that look at the classical coloring problem, the starting point of our work is (inarguably?) the simplest question one can ask in graph coloring -- given a coloring function $f: V \rightarrow \{1, \ldots, C\}$ on the vertex set $V$ of a graph $G=(V,E)$, is $f$ a valid coloring, i.e., for any edge $e \in E$, do both the endpoints of $e$ have different colors? This is the problem one encounters while proving that the problem of chromatic number belongs to the class {\sf NP}~\cite{DBLP:books/fm/GareyJ79}. 
\conflictest, the problem of estimating the number of monochromatic (or, conflicting) edges for a graph $G$ given a coloring function $f$, remains a simple problem in the {\sc RAM} model; it even remains simple in the one-pass streaming model if the coloring function $f$ is marked on a \emph{public board}, readable at all times. We show that the problem throws up interesting consequences if the coloring function $f$ on a vertex is revealed only when the vertex is revealed in the stream. For a streaming graph, if the vertices are assigned colors arbitrarily or randomly on-the-fly while it is exposed, our results can also be used to estimate the number of conflicting edges. 
 These problems also find their use in estimating the number of conflicts in a job schedule and verifying a given job schedule in a streaming setting. This can also be extended to problems in various domains like frequency assignment in wireless mobile networks and register allocation~\cite{DBLP:journals/scheduling/EvenHKR09}. 
As the problem, by its nature, admits an estimate or a yes-no answer, the need of the space to store all vertices as in the semi-streaming model goes away and we can focus on space efficient algorithms in the conventional graph streaming models like \valong~\cite{DBLP:conf/icalp/CormodeDK19}. We also note in passing that many of the trend setting problems in streaming, like frequency moments, distinct elements, majority, etc.\ have been simple problems in the ubiquitous {\sc RAM} model as the coloring problem we solve here. 

\section{Preliminaries}
\label{sec:prelim}
\subsection{Notations and the streaming models}
\noindent 
{\bf Notations.} We denote the set $\{1,\ldots,n\}$ by $[n]$. $G(V(G),E(G))$ denotes a graph where $V(G)$ and $E(G)$ denote the set of vertices and edges of $G$, respectively; $\size{V}=n$ and $\size{E}=m$. We will write only $V$ and $E$ for vertices and edges when the graph is clear from the context.  We denote $E_M \subseteq E$ as the set of monochromatic edges.
\remove{
For the \conflictest problem, we assume a lower bound on $E_M$ and call it $T$, where $1 \leq T \leq {n \choose 2}$. The input graph $G$  is promised to satisfy $\size{E_M}\geq T$. 
}
The set of neighbors of a vertex $u \in V(G)$ is denoted by $N_G(u)$ and the degree of a vertex $u \in V(G)$ is denoted by $d_G(u)$.
Let $N_G(u) = N^{-}_G(u) \uplus N^{+}_G(u)$ where $N^{-}_G(u)$ and $N^{+}_G(u)$ denote the set of neighbors of $u$ that have been exposed already and are yet to be exposed, respectively in the stream.
Also, $d_G(u) = d^{-}_G(u) + d^{+}_G(u)$ where $d^{-}_G(u) = \size{N^{-}_G(u)}$ and $d^{+}_G(u) = \size{N^{+}_G(u)}$. For a monochromatic edge $(u,v) \in E_M$, we refer to $u$ and $v$ as monochromatic neighbors of each other. We define $d_M(u)$ to be the number of monochromatic neighbors of $u$ and hence, the monochromatic degree of $u$. 
\remove{ \complain{So, $\size{E_M} =\frac{1}{2}\sum_{u \in V} d_M(u)$. Arijit: Why is it needed in the notations section?}}

\remove{
Let $\mathbb{E}[X]$ and $\mathbb{V}[X]$ \complain{(Do we need variance anywhere?)} denote the expectation and variance of the random variable $X$.} Let $\mathbb{E}[X]$ denote the expectation of the random variable $X$. For an event $\mathcal{E}$, $\overline{\mathcal{E}}$ denotes the complement of $\mathcal{E}$. 
$\mathbb{P}(\mathcal{E})$ denotes the probability of an event $\mathcal{E}$.
The statement ``event $\mathcal{E}$ occurs with high probability'' is equivalent to $\mathbb{P}(\mathcal{E}) \geq 1-\frac{1}{n^c}$, where $c$ is an absolute constant. The statement ``$a$ is a $1 \pm \varepsilon$ multiplicative approximation of $b$'' means $|b-a| \leq \varepsilon \cdot b$. For $x \in \mathbb{R}$, $\exp(x)$ denotes the standard exponential function, that is, $e^x$. By polylogarithmic, we mean $\mathcal{O}\left( \left({\log n}/{\varepsilon}\right)^{\mathcal{O}(1)}\right)$. The notation $\widetilde{\mathcal{O}}(\cdot)$ hides a polylogarithmic term in $\mathcal{O}(\cdot)$.

\medskip

\noindent
{\bf Streaming models for graphs.} As alluded to earlier, the crux of the problem depends on the way the coloring function $f$ is revealed in the stream. The details follow.

\noindent
(i) {\valong (\va):} The vertices of $V$ are exposed in an arbitrary order. After a vertex $v \in V$ is exposed, all the edges between $v$ and pre-exposed neighbors of $v$, are revealed. This set of edges are revealed one by one in an arbitrary order. Along with the vertex $v$, only the color $f(v)$ is exposed, and not the colors of any pre-exposed vertices. So, we can check the monochromaticity of an edge $(v,u)$ only if $u$ and $f(u)$ are explicitly stored. \\
(ii) {\vadeglong (\vadeg)~\cite{DBLP:conf/pods/McGregorVV16, DBLP:conf/pods/BeraS20}:} This model works same as the \va model in terms of exposure of the vertex $v$ and the coloring on it; but we are allowed to know the degree $d_G(v)$ of the currently exposed vertex $v$ from a degree oracle on $G$. \\
(iii) {\varandlong (\varand)~\cite{DBLP:conf/kdd/StantonK12, DBLP:conf/wsdm/TsourakakisGRV14}:} 
\remove{
This model works same as the \va model but the vertices are revealed in a random order, i.e., the revealing order of the vertices is equally likely to be any one of the permutations of the vertices.}
This model works same as the \va model but the vertex sequence revealed is equally likely to be any one of the permutations of the vertices. \\
(iv) {\ealong (\ea):} The stream consists of edges of $G$ in an arbitrary order. As the edge $e$ is revealed, so are the colors on its endpoints. Thus the conflicts can be easily checked. \\
(v) {\allong (\al):} The vertices of $V$ are exposed in an arbitrary order. When a vertex $v$ is exposed, all the edges that are incident to $v$, are revealed one by one in an arbitrary order. Note that in this model each edge is exposed twice, once for each exposure of an incident  vertex. As in the \va model, here also only $v$'s color $f(v)$ is exposed.

As the conflicts can be checked easily in the \ea model in $O(1)$ space, a logarithmic counter is enough to count the number of monochromatic edges. The \al model works almost the same as the \vadeg model. So, we focus on the three models -- \va, \vadeg and \varand in this work and show that they have a clear separation in their power vis-a-vis the problem we solve. A crucial takeaway from our work is that the random order assumption on exposure of vertices has huge improvements in space complexity.

\subsection{Problem definitions, results and the ideas}\label{sec:ideas}
\noindent 
{\bf Problem definition.} Let the vertices of $G$ be colored with a function $f:V(G) \rightarrow [C]$, for $C \in \N$. An edge $(u,v) \in E(G)$ is said to be \emph{monochromatic} or \emph{conflicting} with respect to $f$ if $f(u)=f(v)$. A coloring function $f$ is called \emph{valid} if no edge in $E(G)$ is monochromatic with respect to $f$. For a given parameter $\vareps \in (0,1)$, $f$ is said to be $\vareps$-far from being \emph{valid} if at least $\vareps \cdot \size{E(G)}$ edges are monochromatic with respect to $f$. We study the following problems. \remove{that involve estimating the number of monochromatic edges in a graph arriving over a stream. Formally, the problems are stated as follows.}
\begin{problem}[{\sc Conflict Estimation} aka \conflictest] A graph $G=(V,E)$ and a coloring function $f:V(G) \rightarrow [C]$ are streaming inputs. Given an input parameter $\vareps > 0$, the objective is to estimate the number of monochromatic edges in $G$ within a $(1\pm \vareps)$-factor.
\label{prob:estimate}
\end{problem}
\begin{problem}[{\sc Conflict Separation} aka \colorverify] A graph $G=(V,E)$ and a coloring function $f:V(G) \rightarrow [C]$ are streaming inputs. Given an input parameter $\vareps > 0$, the objective is to distinguish if the coloring function $f$ is valid or is $\vareps$-far from being valid. 
\label{prob:separate}
\end{problem}
\begin{rem} 
Problem~\ref{prob:estimate} is our main focus, but we will mention a result on Problem~\ref{prob:separate} in Section~\ref{sec:sep-vaub}. Notice that \conflictest is a \emph{difficult} problem than \colorverify.
\end{rem}

\noindent
{\bf The results and the ideas involved.}
All our upper and lower bounds on space are for one-pass streaming algorithms. Table~\ref{table:results-conflict-and-est-graph} states our results for the \conflictest problem, the main problem we solve in this paper, across different variants of the \va model. The main thrust of our work is on estimating monochromatic edges under random order stream. For random order stream, we present both upper and lower bounds in Sections~\ref{sec:est-vardubsc} and \ref{sec:lower-rand}. There is a gap between the upper and lower bounds in the \varand model, though we have a strong hunch that our upper bound is tight. Apart from the above, using a structural result on graphs, we show in Section~\ref{sec:sep-vaub} that the \colorverify problem admits an easy algorithm in the \varand model. To give a complete picture across different variants of the \va models,  we show matching upper and lower bounds for the \va and \vadeg models in Section~\ref{sec:est-vaubsc} and Appendix~\ref{sec:est-lowerbound}.

\remove{
\begin{table}[h]
\centering
\begin{tabular}{||c | c | c | c ||} 

 \hline
  \multirow{2}{*}{Model with extra power, if any} & \multicolumn{2}{c|}{Upper bound} & Lower bound \\
 \cline{2-4}
  & \conflictest & \colorverify & \colorverify \\
       \hline \hline
  \multirow{2}{*} {\valong} & $\widetilde{\mathcal{O}}\left(\min\{\size{V},\frac{|V|^2}{T}\}\right)$ & $\widetilde{\mathcal{O}}\left(\frac{|V|}{\sqrt{|E|}}\right)$ & $\Omega\left(\frac{|V|}{\sqrt{|E|}}\right)$\\
(\va) & (Sec.~\ref{}, Thm.~\ref{})& &\\ 
\cline{1-4}  
 \multirow{2}{*} {\varandlong} & $\widetilde{\mathcal{O}}\left(\frac{|V|}{\sqrt{|T|}}\right)$ & $\widetilde{\mathcal{O}}\left(\frac{|V|}{\sqrt{|T|}}\right)$ &  --\\
(\varand) & & &\\ 
\cline{1-4}
\multirow{2}{*} {\vadeglong} & $\widetilde{\mathcal{O}}\left(\frac{|E|}{T}\right)$ & $\widetilde{\mathcal{O}}\left(\frac{1}{\vareps}\right)$ & $\Omega\left(\frac{1}{\vareps}\right)$\\
 (\vadeg) & & &\\
\cline{1-4}
\multirow{2}{*} {\allong} & $\widetilde{\mathcal{O}}\left(\frac{|E|}{T}\right)$ & $\widetilde{\mathcal{O}}\left(\frac{1}{\vareps}\right)$ &  $\Omega\left(\frac{1}{\vareps}\right)$\\
(\al) & & &\\
\hline
 
\end{tabular}
\caption{This table shows our results on \conflictest and \colorverify on a graph $G(V,E)$ when the vertices arrive in \valong model. Here $T > 0$ denotes the promised lower bound on the number of monochromatic edges.}
\label{table:results-conflict-and-est-graph}
\end{table} 
}

\remove{
\small
\begin{table}[thb]
\centering
\begin{tabular}{c | c | c || c | c } 

  \multirow{2}{*}{Model} & \multicolumn{2}{c|}{Upper bound} & \multicolumn{2}{c}{Lower bound} \\
 \cline{2-5}
  & \conflictest & \colorverify & \conflictest & \colorverify \\
       \hline \hline
 \va & $\widetilde{\mathcal{O}}\left(\min\{\size{V},\frac{|V|^2}{T}\}\right)$ & $\widetilde{\mathcal{O}}\left(\min\{\size{V},\frac{|V|^2}{\vareps |E|}\}\right)$ & $\Omega\left(\min\{\size{V},\frac{|V|^2}{T}\}\right)$ & $\Omega\left(\frac{|V|}{\sqrt{|E|}}\right)$\\
  & (App.~\ref{sec:est-vaaoubsc}, Thm.~\ref{theo:est-vaaoubth}) & (App.~\ref{sec:sep-vaaoubsc}, Thm.~\ref{theo:sep-vaaoubth}) & (Sec.~\ref{sec:est-vaaolbsc}, Thm.~\ref{theo:est-vadglbth}) & (App.~\ref{sec:sep-vaaolbsc}, Thm.~\ref{theo:sep-vaaolbth})\\
\cline{1-5}  
 \multirow{2}{*} {\varand} & $\widetilde{\mathcal{O}}\left(\frac{|V|}{\sqrt{|T|}}\right)$ & $\widetilde{\mathcal{O}}\left(\frac{|V|}{\sqrt{\vareps|E|}}\right)$ & Open & Open\\
  & & & &\\
 & (Sec.~\ref{sec:est-vardubsc}, Thm.~\ref{theo:est-vardubth}) & (Sec.~\ref{sec:sep-vardubsc}, Thm.~\ref{thm:sep-vardubth}) &  &  \\
\cline{1-5}
\multirow{2}{*} {\vadeg} & $\widetilde{\mathcal{O}}\left(\frac{|E|}{T}\right)$ & $\widetilde{\mathcal{O}}\left(\frac{1}{\vareps}\right)$ & $\Omega\left(\frac{|E|}{T}\right)$ & $\Omega\left(\frac{1}{\vareps}\right)$\\
 & & & &\\
 & (Sec.~\ref{sec:est-vaubsc}, Thm.~\ref{theo:est-vadgubth}) & (App.~\ref{sec:sep-vadegubsc}, Thm.~\ref{theo:sep-vadegubth}) & (Sec.~\ref{sec:est-vadglbsc}, Thm.~\ref{theo:est-vadglbth}) & (App.~\ref{sec:sep-vadglbsc}, Thm.~\ref{theo:sep-vadglbth})\\
\cline{1-5}
\multirow{2}{*} {\al} & $\widetilde{\mathcal{O}}\left(\frac{|E|}{T}\right)$ & $\widetilde{\mathcal{O}}\left(\frac{1}{\vareps}\right)$ & &   $\Omega\left(\frac{1}{\vareps}\right)$\\
 & & & & \\
\hline
\end{tabular}
\caption{This table shows our results on \conflictest and \colorverify on a graph $G(V,E)$ when the vertices and a coloring function $f:V(G) \rightarrow [C]$ arrive in \valong model. Here, $\vareps|E|$ is the number of monochromatic edges in the graph and $T > 0$ denotes the promised lower bound on the number of monochromatic edges.}
\label{table:results-conflict-and-est-graph}
\end{table} 
}

\small
\begin{table}[thb]
\centering
\begin{tabular}{c || c | c | c } 

 
  Model & \va & \varand & \vadeg \\
 \cline{1-4}
       \hline \hline
  \multirow{2}{*} {Upper Bound} & $\widetilde{\mathcal{O}}\left(\min\{\size{V},\frac{|V|^2}{T}\}\right)$ & $\widetilde{\mathcal{O}}\left(\frac{|V|}{\sqrt{T}}\right)$ & $\widetilde{\mathcal{O}}\left(\min \{\size{V},\frac{|E|}{T}\}\right)$ \\
  & (Sec.~\ref{sec:est-vaubsc}, Thm.~\ref{theo:est-vaubth}) & (Sec.~\ref{sec:est-vardubsc}, Thm.~\ref{theo:est-vardubth}) & (Sec.~\ref{sec:est-vaubsc}, Thm.~\ref{theo:est-vadgubth}) \\
  \cline{1-4}
  \multirow{2}{*} {Lower Bound} & $\Omega\left(\min\{\size{V},\frac{|V|^2}{T}\}\right)$ & $\Omega \left(\frac{\size{V}}{T^2} \right)$ & $\Omega\left(\min\{\size{V},\frac{|E|}{T}\}\right)$ \\
  & (Sec.~\ref{sec:est-vaaolbsc}, Thm.~\ref{theo:est-vaaolbth}) & (Sec.~\ref{sec:lower-rand}, Thm.~\ref{thm:rand-lower}) & (Sec.~\ref{sec:est-vadglbsc}, Thm.~\ref{theo:est-vadglbth}) \\
  \hline
  
\end{tabular}
\caption{This table shows our results on \conflictest on a graph $G(V,E)$ across different \valong models. Here, $T > 0$ denotes the promised lower bound on the number of monochromatic edges.}
\label{table:results-conflict-and-est-graph}
\end{table} 

\normalsize

The promise $T$ on the number of monochromatic edges is a very standard assumption for estimating substructures in the world of graph streaming algorithm~\cite{DBLP:conf/focs/KallaugherKP18,DBLP:conf/icalp/KaneMSS12,DBLP:conf/pods/KallaugherMPV19, DBLP:conf/pods/McGregorVV16, DBLP:conf/stacs/BeraC17}.~\footnote{Here we have cited a few. However, there are huge amount of relevant literature.}

We now briefly mention the salient ideas involved. For the simpler variant of \conflictest in \va model, we first check if $\size{V} \geq T$. If yes, we store all the vertices and their colors in the stream to determine the exact value of the number of monochromatic edges. Otherwise, we sample each pair of vertices $\{u,v\}$ in ${V \choose 2}$~\footnote{${V \choose 2}$ denotes the set of all size 2 subsets of $V(G)$.}, with probability $\widetilde{\mathcal{O}}\left({1}/{T}\right)$ independently~\footnote{Note that we might sample some pairs that are not forming edges in the graph.} before the stream starts. {When the stream comes, we compute the number of monochromatic edges from this sample.} The details are in Section~\ref{sec:est-vaubsc}.
\remove{
For the simpler variant of \conflictest in \va model in Section~\ref{sec:est-vaubsc}, we sample a \emph{suitable} number of pairs of vertices before the stream starts, compute the number of monochromatic edges in the sample and use it to estimate the number of monochromatic edges in the graph.} Though the algorithm looks extremely simple, it matches the lower bound result for \conflictest in \va model, presented in  Appendix~\ref{sec:est-lowerbound}. The \vadeg model with its added power of a degree oracle, allows us to know $d_G(u)$ for a vertex $u$ and as edges to pre-exposed vertices are revealed, we also know $d^{-}_G(u)$ and $d^{+}_G(u)$. This allows us to use sampling to store vertices and to use a technique which we call \emph{sampling into the future} 
where indices of random neighbors, out of $d^{+}_G(u)$ neighbors, are selected for future checking. The upper bound result, 
for \conflictest in \vadeg model, is presented in Section~\ref{sec:est-vaubsc}, and it is tight as we prove a matching lower bound in Appendix~\ref{sec:est-lowerbound}.

The algorithm for \conflictest in \varand model is the mainstay of our work and is presented in Section~\ref{sec:est-vardubsc}. We redefine the degree in terms of the number of monochromatic neighbors a vertex has in the randomly sampled set. Here, we estimate the high monochromatic degree and low monochromatic degree vertices separately by sampling a random subset of vertices.  While the monochromatic degree for the high degree vertices can be extrapolated from the sample, handling low monochromatic degree vertices individually in the same way does not work. To get around, we group such vertices having similar monochromatic degress and treat them as an entity. We also provide a lower bound for the \varand model, in Section~\ref{sec:lower-rand}, using a reduction from \emph{multi-party set disjointness}; though there is a gap in terms of the exponent in $T$.

\remove{
 The result in Section~\ref{sec:est-vardubsc} that solves \conflictest in the \varand model is the mainstay of our work and it uses the random order on the input. Here, we estimate the \emph{monochromatic degree} of high degree and low degree vertices separately by sampling a random subset of vertices. We redefine this degree in terms of the number of monochromatic neighbors a vertex has in the randomly sampled set. While the monochromatic degree for the high degree vertices can be extrapolated from the sample, we use a bucketing technique to estimate the same for the low degree vertices. We have been able to provide a lower bound for the \varand model using a reduction from multi-party set disjointness; though there is a gap in terms of the exponent in $T$.
}
 \remove{
 The structural property of the graph is exploited to give our upper bounds for the \colorverify problem in Section~\ref{sec:sep-vaub}. We consider the subgraph defined on the conflicting edges and claim the existence of either a large matching or a high degree vertex. In the latter case, if the high degree vertex appears at the beginning of the stream and we fail to store it, we lose the chance to sample conflicting edges. To remedy this, we make use of the random order. This ensures that the high degree vertex appears after a constant fraction of its neighbors have appeared, in turn ensuring their storage and enabling the check of conflicting edges.
}
 
The highlights of our work are as follows: 
\begin{itemize}
 \item We show that possibly the easiest graph coloring problem is worth studying over streams. 
 \item For researchers working in streaming, the \emph{gold standard} is the \ea model as most problems are non-trivial in this model. We point out a problem that is harder to solve in the \va model as compared to the \ea model. 
 \item We show that the three \va related models have a clear separation in their space complexities vis-a-vis the problem we solve. We could exploit the random order of the arrival of the vertices to get substantial improvements in space complexity. 
 \item We could obtain lower bounds for all the three models but the lower bounds are matching for the \va and \vadeg models.
\end{itemize}

\remove{
The structural property of the graph is exploited to give our upper bounds for the \colorverify problem in Section~\ref{sec:sep-vardubsc}. We consider the subgraph defined on the conflicting edges and claim the existence of either a large matching or a high degree vertex. In the latter case, if the high degree vertex appears at the beginning of the stream and we fail to store it, we lose the chance to sample conflicting edges. To remedy this, we make use of the random order. This ensures that the high degree vertex appears after a constant fraction of its neighbors have appeared, in turn ensuring their storage and enabling the check of conflicting edges.
}

\subsection{Prior works on graph coloring in semi-streaming model.} 
\label{ssec:prior-work}
\noindent
Bera and Ghosh~\cite{DBLP:journals/corr/abs-1807-07640} commenced the study of vertex coloring in the semi-streaming model. They devise a randomized one pass streaming algorithm that finds a $(1+ \vareps) \Delta$ vertex coloring in $\widetilde{\mathcal{O}}(n)$ space. 
\remove{
They do this in two phases by first randomly partitioning the vertex set into $\mathcal{O}(\frac{\Delta}{\log{n}})$ subsets where the subgraph induced by each subset has a maximum degree of $\log{n}$ with high probability. Then, every vertex of the random partitioning is colored independently and uniformly at random using a $\mathcal{O}(\frac{\Delta}{\log{n}})$ sized color palette.
}
Assadi et al.~\cite{DBLP:conf/soda/AssadiCK19} find a proper vertex coloring using $\Delta + 1$ colors via various classes of sublinear algorithms. Their state of the art contributions can be attributed to a key result called the \emph{palette-sparsification theorem} which states that for an $n$-vertex graph with maximum degree $\Delta$, if $\mathcal{O}(\log{n})$ colors are sampled independently and uniformly at random for each vertex from a list of $\Delta + 1$ colors, then with a high probability a proper $\Delta + 1$ coloring exists for the graph. They design a randomized one-pass dynamic streaming algorithm for the $\Delta + 1$ coloring using $\widetilde{\mathcal{O}}(n)$ space. The algorithm takes post-processing $\widetilde{\mathcal{O}}(n \sqrt{\Delta})$ time and assumes a prior knowledge of $\Delta$. Alon and Assadi~\cite{abs-2006-10456} improve the palette sparsification result of~\cite{DBLP:conf/soda/AssadiCK19}. They consider situations where the number of colors available is both more than and less than $\Delta + 1$ colors. They show that sampling $\mathcal{O}_{\vareps}(\sqrt{\log n})$ colors per vertex is sufficient and necessary for a $(1+\vareps)\Delta$ coloring. 
Bera et al.~\cite{DBLP:journals/corr/abs-1905-00566} give a new graph coloring algorithm in the semi-streaming model where the number of colors used is parameterized by the degeneracy $\kappa$. The key idea is a \emph{low degeneracy partition}, also employed in \cite{DBLP:journals/corr/abs-1807-07640}. The numbers of colors used to properly color the graph is $\kappa + o(\kappa)$ and post-processing time of the algorithm is improved to $\widetilde{\mathcal{O}}(n)$, without any prior knowledge about $\kappa$. 
Behnezhad et al.~\cite{DBLP:conf/esa/BehnezhadDHKS19} were the first to give one-pass W-streaming algorithms (streaming algorithms where outputs are produced in a streaming fashion as opposed to outputs given finally at the end) for edge coloring both when the edges arrive in a random order or in an adversarial fashion.

\remove{
Behnezhad et al.~\cite{DBLP:conf/esa/BehnezhadDHKS19} give a one-pass $\widetilde{\mathcal{O}}(n)$ space W-streaming algorithm that always returns a valid edge coloring and uses $\mathcal{O}(\Delta)$ colors when the edges arrive in a random order. The algorithm uses $5.44 \Delta$ colors with high probability. A W-streaming model is one in which the output is also reported in a streaming fashion. They give another one pass $\tilde{\mathcal{O}}(n)$ space W-streaming algorithm to solve the edge coloring problem using $\mathcal{O}(\Delta^{2})$ colors when the edges arrive in an adversarial fashion. 
}

\remove{\complain{Arijit: Should we not review \cite{abs-2006-10456}?}}
\section{\conflictest in \va and \vadeg models}\label{sec:est-vaubsc} 
\noindent
In this Section, we design algorithms for \conflictest problem in the \va and \vadeg models. We show matching lower bounds later in Appendix~\ref{sec:est-lowerbound}. Mainly, we prove the following two theorems here. 

 \begin{theo}\label{theo:est-vaubth}
 Given any graph $G=(V,E)$ and a coloring function $f: V \rightarrow [C]$ as input in the stream, there exists an algorithm that solves the \conflictest problem in the \va model with high probability in $\widetilde{\mathcal{O}} \left( \min \left( \size{V},\frac{|V|^2}{T} \right) \right) $ space, where $T$ is a lower bound on the number of monochromatic edges in the graph. 
 \end{theo}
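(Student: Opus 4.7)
The algorithm branches on the value of $\size{V}$ versus $T$. Assume $\size{V}$ is known in advance (a standard assumption; otherwise, we can maintain a counter and restart once the relevant threshold is crossed).

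\emph{Case 1: $\size{V} \leq T$.} Here $\size{V} \leq |V|^2/T$, so we can afford to store every arriving vertex together with its color, using $\widetilde{\mathcal{O}}(\size{V})$ bits. Whenever a new vertex $v$ is revealed, each edge $(v,u)$ with a pre-exposed $u$ is compared against the stored $f(u)$, and we maintain a running counter of monochromatic edges. This returns $|E_M|$ exactly.

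\emph{Case 2: $\size{V} > T$.} Before the stream begins, sample each pair $\{u,v\} \in \binom{V}{2}$ independently with probability
\[
p \;=\; \Theta\!\left(\frac{\log n}{\varepsilon^2 \, T}\right),
\]
storing the sampled pairs (only their vertex identifiers) in a table $S$. During the stream, when a vertex $w$ arrives we do two things: (i) if $w$ appears in any pair of $S$, we record $f(w)$ next to that occurrence, and (ii) for each sampled pair $\{u,w\}$ whose other endpoint $u$ has already been exposed, we inspect whether $u \in N^{-}_G(w)$ (which is possible because the edges from $w$ to $N^{-}_G(w)$ are revealed when $w$ arrives) and, if so, compare $f(u)$ with $f(w)$. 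Let $X$ be the number of sampled pairs that turn out to be monochromatic edges of $G$; output $\widehat{M} = X/p$.

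\emph{Analysis.} Unbiasedness is immediate: for any edge $e \in E_M$, the indicator $Y_e$ that $e$ is sampled satisfies $\E[Y_e]=p$, and $X = \sum_{e \in E_M} Y_e$, so $\E[\widehat{M}] = |E_M|$. Since the $Y_e$'s are independent Bernoulli variables and $\E[X] = p \cdot |E_M| \geq pT = \Omega(\log n /\varepsilon^2)$, a Chernoff bound gives
\[
\Pr\!\left[\,|\widehat{M} - |E_M|| > \varepsilon \, |E_M|\,\right] \;\leq\; 2\exp\!\left(-\Omega(\varepsilon^2 \, p \, |E_M|)\right) \;\leq\; n^{-c},
\]
for a suitable constant hidden in $p$. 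The space used is dominated by the table $S$: the expected number of sampled pairs is $\binom{\size{V}}{2}\, p = \widetilde{\mathcal{O}}(|V|^2/T)$, and a standard Chernoff argument (together with storing $O(\log n)$ bits per entry for identifiers and colors) upper-bounds the actual storage by $\widetilde{\mathcal{O}}(|V|^2/T)$ with high probability.

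\emph{Expected obstacle.} The routine part is the estimator analysis; the main subtlety is Case~2's bookkeeping, namely showing that checking whether a sampled pair is an edge can be done on-the-fly in the \va model without any extra storage beyond $S$ itself. This works precisely because, at the moment the second endpoint $w$ of a pair $\{u,w\}$ is revealed, the edges from $w$ to all pre-exposed vertices (including $u$, if $u \in N_G(w)$) are simultaneously exposed, so the edge/non-edge decision and the monochromaticity check can be performed and then the pair can be discarded from $S$. Combining the two cases yields the claimed $\widetilde{\mathcal{O}}(\min(\size{V}, |V|^2/T))$ bound.
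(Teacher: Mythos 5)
Your algorithmic ideas (storing all vertices, versus presampling pairs and checking them on the fly) match the paper's approach, and your estimator analysis in the sampling branch is fine. However, your case split is reversed, and this breaks the claimed space bound. You write: ``Case 1: $\size{V} \leq T$. Here $\size{V} \leq |V|^2/T$.'' That implication is false: $\size{V}\le T$ gives $\size{V}/T\le 1$, hence $\size{V}^2/T \le \size{V}$, the opposite inequality. So under your Case~1 condition $\min(\size{V},\size{V}^2/T)=\size{V}^2/T$, and storing all $\size{V}$ vertices overshoots the target. Symmetrically, in your Case~2 ($\size{V}>T$) the presampled table has expected size $\binom{\size{V}}{2}\,p=\widetilde{\mathcal{O}}(\size{V}^2/T)$, which then exceeds $\size{V}$ — so there you would have been better off just storing every vertex. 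Overall your algorithm as stated achieves $\widetilde{\mathcal{O}}\bigl(\max(\size{V},\size{V}^2/T)\bigr)$, not $\widetilde{\mathcal{O}}\bigl(\min(\size{V},\size{V}^2/T)\bigr)$.

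The fix is to swap the two branches, which is exactly what the paper does: if $T\le\size{V}$, store all arriving vertices and their colors ($\widetilde{\mathcal{O}}(\size{V})$ space, and in this regime $\min(\size{V},\size{V}^2/T)=\size{V}$); if $T>\size{V}$, presample each pair with probability $\Theta(\log n/(\vareps^2 T))$ and run your Case~2 procedure ($\widetilde{\mathcal{O}}(\size{V}^2/T)$ space, which is now the smaller of the two). Everything else in your proposal — the unbiasedness of $X/p$, the Chernoff concentration using $\E[X]\ge pT=\Omega(\log n/\vareps^2)$, the concentration of $|S|$ around its mean, and the observation that the edge/non-edge and monochromaticity checks for a sampled pair $\{u,w\}$ can be resolved at the moment the second endpoint arrives — goes through unchanged once the two cases are interchanged.
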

\begin{theo}\label{theo:est-vadgubth}
Given any graph $G=(V,E)$ and a coloring function $f: V \rightarrow [C]$ as input in the stream, there exists an algorithm that solves the \conflictest problem in the \vadeg model with high probability in $\widetilde{\mathcal{O}}\left(\min\{\size{V}, \frac{\size{E}}{T}\}\right)$ space, where $T$ is a lower bound on the number of monochromatic edges in the graph.  
\end{theo}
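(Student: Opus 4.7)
The bound $\widetilde{\mathcal{O}}(\min\{|V|, |E|/T\})$ naturally splits into two regimes. When $|V| \le |E|/T$, I would simply store every arriving pair $(v, f(v))$ explicitly, test each incoming edge against the stored colors, and so count $|E_M|$ exactly in $\widetilde{\mathcal{O}}(|V|)$ space. The nontrivial regime, in which the degree oracle of \vadeg is actually used, is $|V| > |E|/T$, where I would run a degree-proportional vertex-sampling scheme. When vertex $v$ arrives, I query $d_G(v)$ and flip an independent coin landing heads with probability $q_v := \min\{1, \alpha d_G(v)/T\}$, for some $\alpha = \Theta(1/\vareps^2)$. Store $(v, f(v))$ iff heads, and evict $v$ after all $d_G(v)$ of its incident edges have been processed (which the oracle lets me detect). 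For every later-arriving edge $\{u,v\}$ with $v$ already stored and $f(u) = f(v)$, add $1/q_v$ to an estimator $\widehat M$. The expected number of vertices stored at any time is $\sum_v q_v \le 2\alpha|E|/T = \widetilde{\mathcal{O}}(|E|/T)$, and a standard Markov/repetition argument makes this a worst-case guarantee.

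Setting $Y_e := (M_e/q_v)\mathbf{1}[v \text{ sampled}]$ for each edge $e = \{u,v\}$ with $v$ its earlier endpoint, I have $\widehat M = \sum_e Y_e$ and $\E[Y_e] = M_e$, so $\widehat M$ is unbiased. Two summands $Y_{e_1}, Y_{e_2}$ are independent unless they share the earlier endpoint, which gives $\V[\widehat M] \le \sum_v (m^+(v))^2/q_v$, where $m^+(v)$ counts monochromatic edges having $v$ as the earlier endpoint. Using $m^+(v) \le d_G(v)$ and $q_v \ge \alpha d_G(v)/T$ whenever $q_v<1$ collapses this sum to at most $T|E_M|/\alpha \le |E_M|^2/\alpha$, where I use $\sum_v m^+(v)=|E_M|$ and $T \le |E_M|$. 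Chebyshev then yields a $(1\pm\vareps)$-estimate with constant probability, and the median of $O(\log n)$ independent copies boosts this to the required high probability at a $\log n$ multiplicative overhead in space.

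The main subtlety I anticipate is the tension between storage budget and variance: high-degree vertices anchor many potential conflicts, so losing them would blow up the variance, but sampling them uniformly would push storage toward $|V|$. Degree-proportional sampling resolves both issues simultaneously, since $q_v$ is large exactly when $d_G(v)$ is large, yet the total sampled mass $\sum_v q_v$ is controlled by $\sum_v d_G(v) = 2|E|$. This is precisely the step where the \vadeg oracle is essential: without it one cannot set $q_v$ to depend on $d_G(v)$ at the moment $v$ is exposed, and the best guarantee falls back to the $\widetilde{\mathcal{O}}(|V|^2/T)$ bound from Theorem~\ref{theo:est-vaubth}. Taking the better of the two regimes then yields the claimed $\widetilde{\mathcal{O}}(\min\{|V|, |E|/T\})$ bound.
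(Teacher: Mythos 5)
Your approach is correct, but it takes a genuinely different route from the paper's. The paper's algorithm, for each exposed vertex $v$, samples each of the $d_G^+(v)$ \emph{future-edge slots} $(v,k)$, $k \in [d_G^+(v)]$, independently with probability $\widetilde{\Theta}(1/T)$, stores $v$ iff some slot was hit, and then certifies the monochromatic edge $(v,v_j)$ only when $v_j$ lands in a sampled slot; this makes the event ``$e \in S$'' happen independently with probability exactly $\widetilde{\Theta}(1/T)$ for each monochromatic $e$, so a one-shot Chernoff bound gives the $(1\pm\vareps)$-estimate directly. You instead sample each vertex once, with probability $q_v = \min\{1,\alpha d_G(v)/T\}$, use inverse-probability weighting, and concede dependence between edges sharing an earlier endpoint; the price is that you need a variance calculation and Chebyshev, and then a median-of-$O(\log n)$ boost. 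Your variance bound $\sum_v (m^+(v))^2/q_v \le T|E_M|/\alpha \le |E_M|^2/\alpha$ is right (the key cancellation is $m^+(v)/d_G(v) \le 1$ against $q_v \propto d_G(v)$), and the space bound $\sum_v q_v \le 2\alpha|E|/T$, capped at $|V|$, is also right. Both routes land at the same $\widetilde{\mathcal{O}}(\min\{|V|,|E|/T\})$.

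One thing worth noting: the paper spends a substantial fraction of its proof dealing with the fact that $|E|$ is not known in advance (it introduces a counter \textsc{cnt}, a threshold $\tau$, and a fallback set $B$ of all later vertices). Your degree-proportional scheme sidesteps this almost entirely, since $q_v$ depends only on $d_G(v)$ and $T$, both available when $v$ is exposed, and the storage is automatically bounded by $\min\{|V|, \widetilde{\mathcal{O}}(|E|/T)\}$ without ever needing $|E|$. This makes the explicit regime split in your write-up unnecessary — you could run just the degree-proportional sampler throughout and get the same guarantee. The one part that is a bit hand-waved is converting ``expected storage $\widetilde{\mathcal{O}}(|E|/T)$'' into a worst-case bound; the clean fix (as the paper does with $|Y|$) is a Chernoff bound on the independent Bernoulli indicators $\mathbf{1}[v \text{ sampled}]$, or an abort-if-budget-exceeded rule, rather than a bare Markov/repetition appeal.
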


Before going to the algorithms for \conflictest problem in the \va and \vadeg model, we discuss as a warm-up, a two-pass algorithm for \conflictest in the \va model that uses $\widetilde{\mathcal{O}}\left(\min\{\size{V}, \frac{\size{E}}{T}\}\right)$ space, where $T$ is the promised lower bound on the number of monochromatic edges in the graph. Here we assume that $|E|$ is known to the algorithm. However, this assumption can be removed easily in a setting with two passes.
\paragraph*{A two-pass algorithm for \conflictest in \va model (described informally):} 
\begin{description}
\item[If $T \leq \frac{\size{E}}{\size{V}}$:] Our algorithm stores all the vertices and their colors. Thus we can determine the number of monochromatic edges exactly. The algorithm in this case is one pass and uses $\widetilde{\mathcal{O}}(\size{V})$ space.
\item[If $T > \frac{\size{E}}{\size{V}}$:] In the first pass, store each edge with probability $\widetilde{\mathcal{O}}\left(\frac{1}{T}\right)$. In the second pass, we check each edge stored in the first pass for conflict. In this way, we determine the number of monochromatic edges in the sample, from which, we can obtain a desired approximation of the number of monochromatic edges in the graph. The space complexity of our algorithm in this case is $\widetilde{\mathcal{O}}\left( \frac{\size{E}}{T}\right)$.
\end{description} 
If only one pass is allowed, the above algorithm, when $T>\frac{\size{E}}{\size{V}}$, can not be simulated in \va model because of the following reason. Consider an edge $(u,v) \in E_M$ such that $u$ is exposed before $v$. Note that we will be able to know about the edge only when $v$ is exposed but we will be able to check whether $(u,v) \in E_M$ only when we have stored $u$ and its color. However, there is no clue about the edge $(u,v)$ when $u$ is exposed. So, to solve it in one-pass, we sample each pair of vertices {(without bothering if there is an edge between them)} with probability $\widetilde{\mathcal{O}}\left( \frac{1}{T}\right)$, before the start of the stream, and determine the number of monochromatic edges in the sample to get an estimate of the number of monochromatic edges in $G$. This implies that the space complexity of the algorithm for \conflictest in \va model is $\widetilde{\mathcal{O}}\left( \frac{\size{V}^2}{T}\right)$ as stated in Theorem~\ref{theo:est-vaubth}. In \vadeg model, when $u$ is exposed we will get $d_G(u)$ and hence $d^+_G(u)$. The degree information, when $u$ is exposed, gives some statistics regarding how the vertex $u$ might be useful in the future. We exploit this advantage of \vadeg model over \va model to get an algorithm for \conflictest that has better space complexity (See Theorem~\ref{theo:est-vadgubth}).

\subsection{Proof of Theorem~\ref{theo:est-vaubth}}
\label{sec:ub-vadeg}
\noindent Our algorithm for \conflictest for \va model, first checks if $T \leq \size{V}$. If yes, we store all the vertices along with their colors to estimate the number of monochromatic edges in the graph exactly. So, the space used by the algorithm is $\widetilde{\mathcal{O}}(\size{V})$ when $T \leq \size{V}$. We will be done by giving an algorithm for \conflictest in \va model that uses $\widetilde{\mathcal{O}}\left( \frac{\size{V}^2}{T}\right)$ space. This algorithm will only be executed when $T > \size{V}$.
 
Let $V=\{v_1,\ldots,v_n\}$ be the vertices of the graph. Our algorithm starts by generating a sample $Z$ of vertex pairs where each $\{v_i,v_j\}$ is added to $Z$, independently, with probability $\frac{30 \log n}{\vareps^2 T}$. Note that $Z$ is obtained before the start of the stream. Over the stream, we check the following for each $\{v_i,v_j\}\in Z$: whether $(v_i,v_j) \in E$ and is monochromatic.
\remove{
\begin{itemize}
\item whether $(v_i,v_j)$ is an edge,
\item whether $f(v_i)=f(v_j)$, that is, whether $(v_i,v_j)$ is a monochromatic edge.
\end{itemize}
}
Let $S \subseteq Z$ be the set of monochromatic edges in $Z$. Note that the expected value of $\size{S}$ is  given by $\E[\size{S}]=\frac{30 \log n}{\vareps^2 T}\size{E_M}$.

We report $\widehat{m}=\frac{\vareps^2 T}{30 \log n}\size{S}$ as our estimate for $\size{E_M}$. Applying Chernoff bound (See Lemma~\ref{lem:cher_bound1} in Appendix~\ref{sec:prob}), we guarantee that 
$$\pr \left(\size{\widehat{m}-\size{E_M}} \geq \vareps \size{E_M}\right)\leq \pr \left(\size{\size{S}-\E[\size{S}]} \geq \vareps \E[\size{S}]\right) \leq \exp{\left(\frac{-\E[\size{S}]\vareps^2}{3}\right)}\leq \frac{1}{n^{10}}.$$
Note that the last inequality holds as $\E[\size{S}]=\frac{30 \log n}{\vareps^2 T}\size{E_M}$ and $\size{E_M}\geq T$.

Observe that the space used by our algorithm is ${\cal O}(\size{Z})$ when $T > \frac{\size{E}}{\size{V}}$. Note that $\E[\size{Z}]=\frac{30 \log n}{\vareps^2 T}{n \choose 2}$. Applying Chernoff bound (See Lemma~\ref{lem:cher_bound1} in Appendix~\ref{sec:prob}), we can show that $\size{Z}=\widetilde{\mathcal{O}}\left( \frac{n^2}{T}\right)$ with high probability.

Putting together the space complexities of our algorithms for the case $T \leq \size{V}$ and $T > \size{V}$, we have the desired bound on the space.

\remove{
\subsection{Proof of Theorem~\ref{theo:est-vaubth}}\label{sec:ub-vadeg}
\noindent
Our algorithm for \conflictest for \va model, first checks if $T \leq \size{V}$. If yes, we store all the vertices along with their colors to estimate the number of monochromatic edges in the graph exactly. So, the space used by the algorithm is $\widetilde{\mathcal{O}}(\size{V})$ when $T \leq \size{V}$, we will be done by giving an algorithm for \conflictest in \va model that uses $\widetilde{\mathcal{O}}\left( \frac{\size{V}^2}{T}\right)$ space. This algorithm will only be executed when $T > \size{V}$.
 
Let $V=\{v_1,\ldots,v_n\}$ be the vertices of the graph. Our algorithm starts by generating a sample $Z$ of vertex pairs where each $\{v_i,v_j\}$ is added to $Z$, independently, with probability $\frac{30 \log n}{\eps^2 T}$. Note that $Z$ is obtained before the start of the stream. Over the stream, we check the following for each $\{v_i,v_j\}\in Z$: whether $(v_i,v_j) \in E$ and is monochromatic.
\remove{
\begin{itemize}
\item whether $(v_i,v_j)$ is an edge,
\item whether $f(v_i)=f(v_j)$, that is, whether $(v_i,v_j)$ is a monochromatic edge.
\end{itemize}
}
Let $S \subseteq Z$ be the set of monochromatic edges in $Z$. Note that the expected value of $\size{S}$ is  given by $\E[\size{S}]=\frac{30 \log n}{\eps^2 T}\size{E_M}$.

We report $\widehat{m}=\frac{\eps^2 T}{30 \log n}\size{S}$ as our estimate for $\size{E_M}$. Applying Chernoff bound (See Lemma~\ref{lem:cher_bound1} in Appendix~\ref{sec:prob}), we guarantee that 
$$\pr \left(\size{\widehat{m}-\size{E_M}} \geq \eps \size{E_M}\right)\leq \pr \left(\size{\size{S}-\E[\size{S}]} \geq \eps \E[\size{S}]\right) \leq \exp{\left(\frac{-\E[\size{S}]\eps^2}{3}\right)}\leq \frac{1}{n^{10}}.$$
Note that the last inequality holds as $\E[\size{S}]=\frac{30 \log n}{\eps^2 T}\size{E_M}$ and $\size{E_M}\geq T$.

Observe that the space used by our algorithm is $\widetilde{\mathcal{O}}(\size{Z})$ when $T > \frac{\size{E}}{\size{V}}$. Note that $\E[\size{Z}]=\frac{30 \log n}{\eps^2 T}{n \choose 2}$. Applying Chernoff bound (See Lemma~\ref{lem:cher_bound1} in Appendix~\ref{sec:prob}), we can show that $\size{Z}=\widetilde{\mathcal{O}}\left( \frac{n^2}{T}\right)$ with high probability.

Putting together the space complexities of our algorithms for the case $T \leq \size{V}$ and $T > \size{V}$, we have the desired bound on the space.
}

\subsection{Proof of Theorem~\ref{theo:est-vadgubth}}\label{sec:ub-arbit}
\noindent
For simplicity of presentation, assume that we know the number of edges $|E|$ in the graph. We will discuss ways to remove this assumption later.
\subsubsection{Algorithm for \conflictest in \vadeg model when $|E|$ is known}\label{sec:e-known-vadeg}
\noindent
Our algorithm for \conflictest for \vadeg model, first checks if $T \leq \frac{\size{E}}{\size{V}}$. If $T \leq \frac{\size{E}}{\size{V}}$, we store all the vertices along with their colors to estimate the number of monochromatic edges in the graph exactly. So, the space used by the algorithm is $\widetilde{\mathcal{O}}(\size{V})$ when $T \leq \frac{\size{E}}{\size{V}}$. We will be done by giving an algorithm for \conflictest in \vadeg model that uses $\widetilde{\mathcal{O}}\left(\frac{\size{E}}{T}\right)$ space. This algorithm will be executed only when $T > \frac{\size{E}}{\size{V}}$.
 
Let $V=\{v_1,\ldots,v_n\}$ and w.o.l.g. the vertices are exposed in the order $v_1,\ldots,v_n$. However, our algorithm does not know about the ordering of the vertices  {in the stream}. Our algorithm stores the following information.
\begin{itemize}
\item  A random subset $Y \subset V \times [n]$ that will be generated over the stream;
\item  a subset $A$ of vertices formed from the first elements in the pairs present in $Y$; the colors of the vertices are also stored;
\item  for each vertex $v \in A$, a number $\ell_v$ that denotes the number of neighbors in $N_G^+(v)$ that have been exposed. So, $\ell_v$ is initialized to $0$ when $v$ gets exposed in the stream and is at most $\size{N_G^+(v)}$ at any instance of the stream;
\item a subset $S\subseteq E_M$ of the set of monochromatic edges in $G$.
\end{itemize}

When a vertex $v_j$ is exposed, our algorithm performs the following steps:
\begin{itemize}
\remove{\item[(i)] Set $d^-_G(v_j)$ equals to the number of neighbors that $v_j$ has in $\{v_1,\ldots,v_{j-1}\}$. Get $d_G(v_j)$ from 
the degree oracle and compute $d_G^+(v_j)$;}
\item[(i)] Get $d_G(v_j)$ from the degree oracle and $d_G^-(v_j)$ from the exposed edges and compute $d_G^+(v_j)$;
\item[(ii)] Add $(v_j,k), k\in \left[d_G^+(v_j)\right]$, with probability $\frac{30 \log n}{\vareps^2 T}$ to $Y$, independently;
\item[(iii)] Add $v_j$ along with its color to $A$ if at least one  $(v_j,k)$ is added to $Y$. 
\item[(iv)] For each $v_i \in A$ such that $(v_i,v_j) \in E$, increment $\ell_{v_i}$ by $1$. 
\item[(v)] For each $v_i \in A$ such that $(v_i,\ell_{v_i}) \in Y$, check whether $(v_i,v_j)$ forms a monochromatic edge. If yes, add $(v_i,v_j)$ to $S$. {}{(This step ensures independence so that Chernoff bounds can be used. See Remark~\ref{rem:why-ind} below.)}  
\end{itemize}
The main catch of the algorithm for \conflictest in \vadeg model is in Step-(ii). Due to the added power of degree oracle, we are able to sample edges that have not arrived explicitly in the stream. We referred to this phenomenon as \emph{sampling into the future} in Section~\ref{sec:ideas}.
 
At the end of the stream, we report $\widehat{m}=\frac{\vareps^2 T}{30 \log n}\size{S}$ as the estimate of $\size{E_M}$. Now, we show that 
$\pr\left(\size{\widehat{m}-\size{E_M}}\geq \vareps \size{E_M}\right) \leq \frac{1}{n^{10}}.$
Consider a monochromatic edge $(v_i,v_j)\in E_M$. W.l.o.g., assume that $v_j$ is exposed sometime after $v_i$ is exposed in the stream. Let $r \in \left[d_G^+(v_i)\right]$ be such that $v_i$ has $r-1$ neighbors in $\{v_{i+1},\ldots,v_{j-1}\}$. So, $v_j$ is the $r$-th neighbor of $v_i$ exposed after the exposure of $v_i$. From the description of the algorithm, $(v_i,v_j)$ is added to $S$ if and only if $(v_i,r)$ is added to $Y$. Note that $(v_i,r)$ can be added to $Y$ only when the vertex $v_i$ is exposed in the stream. Before calculating $E[\size{S}]$ and applying Chernoff bound, we focus on the following remark.

\begin{rem}
{}{At the first look, it might appear that the monochromatic edges are not independently added to $S$. For example, let us consider the following situation. Let $(v_i,r')$, with $r'\in 
\left[d_G^+(v_i)\right]$ and $r' \neq r$, is added to $Y$, that is, $v_i$ is present in $A
$ and the color of $v_i$ is stored. So, when $v_j$ gets exposed along 
with its color, we can check whether $(v_i,v_j)$ is monochromatic 
irrespective of $(v_i,r)$ being added to $Y$. But the crucial point is 
that we add $(v_i,v_j)$ to $S$ only when $(v_i,r)$ is added to $Y$. 
However, $(v_k,\ell)$s, with $k\in [n]$ and $\ell \in \left[d_G^+(v_k)
\right]$, are added to $Y$, independently. That is, each monochromatic edge 
in $E_M$ is added to $S$, independently.}
\label{rem:why-ind}
\end{rem}
The probability that a monochromatic edge is added to $S$ is $\frac{30 \log n}{\vareps^2 T}$. That is, $\E[\size{S}]=\frac{30 \log n}{\vareps^2 T} \size{E_M}$.
  Applying Chernoff bound (See 
Lemma~\ref{lem:cher_bound1} in Appendix~\ref{sec:prob}), we can guarantee that 
$$\pr \left(\size{\widehat{m}-\size{E_M}} \geq \vareps \size{E_M}\right)\leq \pr \left(\size{\size{S}-\E[\size{S}]} \geq \vareps \E[\size{S}]\right) \leq \exp{\left(\frac{-\E[\size{S}]\vareps^2}{3}\right)}\leq \frac{1}{n^{10}}.$$
Note that the last inequality holds as $\size{E_M} \geq T$.
Observe that the space used by the algorithm is $\widetilde{\mathcal{O}}(\size{Y}+\size{A}+\size{S})=\widetilde{\mathcal{O}}({\size{Y}})$. Note that $\E[\size{Y}]=\sum\limits_{i=1}^n d_G^+(v_i)\cdot \frac{30 \log n}{\vareps^2 T}=\frac{30 \size{E}\log n}{\vareps^2 T}$. Applying Chernoff bound (See Lemma~\ref{lem:cher_bound1} in Appendix~\ref{sec:prob}), we can say that $\size{Y}=\widetilde{\mathcal{O}}\left( \frac{\size{E}}{T}\right)$ with high probability.
Putting together the space complexities of our algorithms for the case $T \leq \frac{\size{E}}{\size{V}}$ and $T > \frac{\size{E}}{\size{V}}$, we have the desired bound on the space.

\subsubsection{Modifying the algorithm in Section~\ref{sec:e-known-vadeg} when $\size{E}$ is unknown}
\noindent
In the modified algorithm, we maintain a counter defined as follows.
 $$\mbox{\sc cnt}:=\sum\limits_{v~\mbox{has been exposed}} d_G^+(v).$$
 Consider the following observation about {\sc cnt} that will be used in our analysis. As mentioned earlier, $\size{V}=n$.

 \begin{obs}\label{obs:cnt}
 At any point of the streaming algorithm, {\sc cnt} is a lower bound on $\size{E}$, the number of edges in the graph. Moreover, at the end of the stream, {\sc cnt} becomes $\size{E}$. {Also, {\sc cnt} is non-decreasing.}
 \end{obs}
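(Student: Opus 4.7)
\noindent\textbf{Proof plan for Observation~\ref{obs:cnt}.}
The plan is to establish all three claims by a single double-counting argument based on the ordering in which vertices are exposed. Assume without loss of generality that the vertices are exposed in the order $v_1, v_2, \ldots, v_n$. The key point is that at the moment $v_i$ is exposed, the set $N_G^{+}(v_i)$ is exactly the set of neighbors of $v_i$ in $\{v_{i+1}, \ldots, v_n\}$, so $d_G^{+}(v_i)$ is the number of edges from $v_i$ to vertices that arrive strictly later.

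For the monotonicity claim, I would argue directly from the update rule: whenever a new vertex $v$ arrives, the value $d_G^{+}(v) \geq 0$ is added to \textsc{cnt}, and no previously accumulated contribution is ever decreased or removed. Hence \textsc{cnt} can only grow over the course of the stream.

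For the lower-bound claim, suppose at some point $v_1, \ldots, v_j$ have been exposed. Then $\textsc{cnt} = \sum_{i=1}^{j} d_G^{+}(v_i)$, where each term $d_G^{+}(v_i)$ counts the edges incident to $v_i$ whose other endpoint lies in $\{v_{i+1}, \ldots, v_n\}$. I would charge every edge $(v_a, v_b)$ with $a < b \leq n$ that is counted to its \emph{earlier} endpoint $v_a$; this charging map is well-defined and injective because the earlier endpoint in the ordering is uniquely determined by the edge. Consequently, \textsc{cnt} equals the number of distinct edges of $G$ whose earlier endpoint already belongs to $\{v_1, \ldots, v_j\}$, which is at most $|E|$.

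Finally, for the claim that $\textsc{cnt} = |E|$ at the end of the stream, I would apply the same charging argument at $j = n$: every edge $(v_a, v_b)$ with $a < b$ has its earlier endpoint $v_a$ exposed, so it is counted exactly once in $d_G^{+}(v_a)$. This yields $\textsc{cnt} = |E|$ at termination. There is no real obstacle here; the only care needed is to confirm the charging is a bijection (each edge counted once and only once), which follows immediately because the arrival order induces a strict total order on the endpoints of every edge.
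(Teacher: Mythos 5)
Your proof is correct and matches the natural (unwritten) justification for this observation: the paper states it without proof, treating it as immediate from the definition of $\textsc{cnt}$. The key point you identify---that each edge $(v_a,v_b)$ with $a<b$ in the arrival order contributes to $\textsc{cnt}$ exactly once, via the term $d_G^+(v_a)$ added when the earlier endpoint $v_a$ is exposed---cleanly yields monotonicity, the bound $\textsc{cnt}\leq|E|$, and equality at the end of the stream.
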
 
  We process the stream by maintaining $Y, \, A$ and $S$, as defined in the algorithm in Section~\ref{sec:e-known-vadeg}, for the case $T> \frac{|E|}{|V|}$, until {\sc cnt} reaches $\tau=100 \size{V} T \log n$, with a slight difference. Here, we add each $(v_j,\ell)$ to $Y$ with probability $\frac{3000\log n}{\vareps^3T}$ instead of $\frac{30\log n}{\vareps^2 T}$ as in Section~\ref{sec:e-known-vadeg}, where $v_j$ is a vertex exposed {while {\sc cnt} is less than $\tau$} and $\ell \in \left[d_G^+(v_j)\right]$. So, we have the following observation that will be used later in our analysis.
\begin{obs}\label{obs:degY}
{
With high probability, $\size{Y}=\widetilde{\mathcal{O}}\left(\size{V}\right)$ for all the instances in the stream while {\sc cnt} is less than $\tau$.}
\end{obs}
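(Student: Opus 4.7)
The plan is to exploit the monotonicity of $\size{Y}$ together with a single Chernoff bound on a fixed sum that pointwise dominates $\size{Y}$ at every relevant moment of the stream. While $\mbox{\sc cnt} < \tau$, the algorithm has performed strictly fewer than $\tau = 100\size{V}T \log n$ independent Bernoulli$(p)$ trials, where $p = \frac{3000 \log n}{\vareps^3 T}$, namely one trial per pair $(v_j, \ell)$ processed so far; and $\size{Y}$ is exactly the number of those trials that succeed. I would therefore couple the process with iid Bernoulli$(p)$ random variables $X_1, \dots, X_\tau$ so that, at every instance with $\mbox{\sc cnt} < \tau$, $\size{Y}$ equals the partial sum $\sum_{i=1}^{\mbox{\sc cnt}} X_i$, which is dominated pointwise by $W := \sum_{i=1}^\tau X_i$. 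This single dominating variable eliminates any need to union-bound over stream positions.

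Next I would bound $W$ directly. Its expectation is
\[
\E[W] = p\tau = \frac{3000 \log n}{\vareps^3 T} \cdot 100\, \size{V}\, T \log n = \frac{3 \cdot 10^5\, \size{V} \log^2 n}{\vareps^3} = \widetilde{\mathcal{O}}(\size{V}),
\]
and in particular $\E[W] \geq 30 \log n$. Applying the multiplicative Chernoff bound of Lemma~\ref{lem:cher_bound1} with deviation parameter $1$ yields
\[
\pr\left(W \geq 2\,\E[W]\right) \leq \exp\left(-\E[W]/3\right) \leq \frac{1}{n^{10}}.
\]
On the complementary event, $\size{Y} \leq W \leq 2\,\E[W] = \widetilde{\mathcal{O}}(\size{V})$ simultaneously at every instant during which $\mbox{\sc cnt} < \tau$.

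The only mildly delicate point is the ``for all instances'' clause: I would not attempt to argue concentration of $\size{Y}$ separately at each time $t$ and take a union bound, because the trials are interleaved with vertex arrivals in a graph-dependent way and the number of stream positions is itself polynomial in $n$. Instead, the coupling above lets a single Chernoff bound on $W$ control the entire trajectory at once, exploiting that $\size{Y}$ is monotone non-decreasing and is always a prefix sum of the same underlying independent Bernoulli sequence. This is the main (and essentially the only) obstacle worth naming, and it is handled cleanly by the reduction to the fixed-length sum $W$.
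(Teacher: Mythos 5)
Your proposal is correct and follows essentially the same route as the paper: both arguments bound the total number of Bernoulli$(p)$ trials by $\tau$, compute the resulting expectation $p\tau = \widetilde{\mathcal{O}}(\size{V})$, and apply one Chernoff bound; the paper packages this by defining $U=\sqcup_{j=1}^{k-1}\{(v_j,\ell):\ell\in[d_G^+(v_j)]\}$ with $\size{U}<\tau$ and invoking Lemma~\ref{lem:cher_bound1}(iii), while you make the ``for all instances'' clause explicit via monotonicity and domination by a fixed-length sum $W$. The only cosmetic difference is that your phrasing names the pointwise-domination step that the paper leaves implicit.
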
  
\begin{proof}
{Let $v_k$ be the first exposed vertex in the stream when {\sc cnt} is more than $\tau$.} Also, let $U=\sqcup_{j=1}^{k-1}\{(v_j,\ell):\ell \in [d_G^+(v_j)]\}$, {where $\sqcup$ denotes disjoint union}. Observe that $\size{U}=\sum\limits_{j=1}^{k-1} d_G^+(v_j)<\tau$. We construct $Y$ by selecting independently each element of $U$ with probability $\frac{3000\log n}{\vareps^3T}$. Recall that $\tau=100|V|T\log n$. So, $\E\left[|Y|\right]=\frac{3000|U|\log n}{\vareps^3T}<\frac{3000\tau  \log n}{\vareps^3T}=\frac{300000(\log n)^2|V|}{\vareps^3}$. The observation follows by applying Chernoff bound (see Lemma~\ref{lem:cher_bound1} (iii) in Appendix~\ref{sec:prob}).
\end{proof}
   However, the modified algorithm behaves differently once {\sc cnt} is more than $\tau = 100 \size{V} T \log n$. \remove{Let {\sc cnt} reaches $\tau$ when vertex $v_k$ is exposed} {Let $v_k$ be as defined earlier}. We maintain two extra objects, as described below, after {\sc cnt} crosses $\tau$. 
 \begin{itemize}
 \item The set of vertices $B=\{v_k,\ldots,v_n\}$ and their colors;
 \item A counter $C_{>\tau}$ that denotes the number of monochromatic edges having both the endpoints in $B$.
 \end{itemize}
 The formal description of the modified algorithm is presented in Algorithm~\ref{algo:random-vadeg}.
\begin{algorithm}[t]\label{algo:random-vadeg}
\caption{{\sc Conflict-Est-Deg}($\vareps,T)$: \conflictest in \vadeg model}
\KwIn{$G = (V, E)$ and a coloring function $f$ on $V$ in the \vadeg model, parameters $T$ and $\vareps$ where $\vareps, T \geq 0$.}
\KwOut{$\widehat{m}$, that is, a $(1\pm \vareps)$ approximation to $\size{E_M}$.}
\For{(each exposed vertex $v_j$)}
{
\begin{itemize}
\item For each $v_i \in A$ such that $(v_i,v_j) \in E$, 
increment $\ell_{v_i}$ by $1$;
 \item For each $v_i \in A$ such that $(v_i,\ell_{v_i}) \in Y$, check whether $(v_i,v_j)$ forms a monochromatic
  edge. If yes, add $(v_i,v_j)$ to $S$;
 
 \item Set $d^-_G(v_j)$ equals to the number of neighbors that $v_j$ has in $\{v_1,\ldots,v_{j-1}\}$. 
 
 Get $d_G(v_j)$ from the degree oracle and compute $d_G^+(v_j)$.
Set $\mbox{\sc cnt}=\mbox{\sc cnt}+d_G^+(v_j)$.
 Then, depending on whether $\mbox{{\sc cnt}} \leq \tau$, our algorithm performs the following steps.

\If{($\mbox{{\sc cnt}} \leq \tau$)}
{
\begin{itemize}
\item[(i)] Add $(v_j,\ell), \ell \in \left[d_G^+(v_j)\right]$, with probability $\frac{3000 \log n}{\vareps^3 T}$ to $Y$, independently;  
\item[(ii)] Add $v_j$ to $A$ (with its color stored) if at least one $(v_j,\ell)$ is added to $Y$. 
\end{itemize}
}
\ElseIf{($\mbox{{\sc cnt}} > \tau$)}
{
\begin{itemize}
\item[(i)] Add $v_j$ to $B$ (along with the color of $v_j$);
\item[(ii)] For each $v_i \in B$, check whether $(v_i,v_j)$ forms a monochromatic edge. If yes, 

increment $C_{>\tau}$ by $1$. 
\end{itemize}
} 
\end{itemize}

}
If $|S| \leq \frac{60 \log n}{\vareps^2}$, then set $C_{\leq \tau}=0$. Otherwise, set $C_{\leq \tau}=\frac{\vareps^3 T}{3000 \log n}\size{S}$.

Report $\widehat{m}=C_{\leq \tau} +C_{>\tau}$ as the {\sc Output}.
\end{algorithm}
We describe the algorithm and its analysis by breaking the range of $|E|$ into two cases, that is, $T \geq \frac{|E|}{100 |V| \log n}$ (or $|E|\leq 100T|V|\log n = \tau$) and $T<\frac{|E|}{100 |V| \log n}$ (or $|E|> 100T|V|\log n = \tau$). We show that the space complexity of the modified algorithm is $\widetilde{\mathcal{O}}\left(\frac{|E|}{T}\right)$ in the first case and is $\widetilde{\mathcal{O}}(\size{V})$ in the latter case with high probability. Observe that this will imply the desired result as claimed in Theorem~\ref{theo:est-vadgubth}.
\begin{description}
\item[$|E|\leq 100T|V|\log n$:] In this case, by Observation~\ref{obs:cnt}, {\sc cnt} never goes beyond $\tau=100|V|T \log n$. That is, the algorithm behaves exactly same as that of the algorithm presented in Section~\ref{sec:e-known-vadeg} for the case $T > \frac{|E|}{|V|}$. Hence, the algorithm reports the desired output using $\widetilde{\mathcal{O}}\left(\frac{|E|}{|T|}\right)$ space, with high probability. 

\item[$|E|> 100T|V|\log n$:] In this case, by Observation~\ref{obs:cnt}, there will be an instance (say when vertex $v_k$ is exposed) such that {\sc cnt} goes beyond $\tau$ for the first time. Then we start storing all the vertices and their colors in $B=\{v_k,\ldots,v_n\}$. We stop updating $Y$ and $A$ after $v_k$ is exposed. However, we update $S$ until end of the stream as we were doing previously in Section~\ref{sec:e-known-vadeg}. Along with $S$, we maintain the number of monochromatic edges (say $C_{>\tau}$) having both the endpoints in $B=\{{v_k},\ldots,v_n\}$. Note that $C_{>\tau}$ is maintained exactly. Finally, we report $\widehat{m}=C_{\leq \tau}+C_{>\tau}$ as the output, where $0$ or $C_{\leq \tau}=\frac{\vareps^3T}{3000\log n}|S|$ depending on whether $\size{S} \leq \frac{60 \log n}{\vareps^2}$ or not, respectively. By Observation~\ref{obs:degY}, with high probability, $\size{Y}=\widetilde{\mathcal{O}}\left(\size{V}\right)$ for all the instances when {\sc cnt} is less than $\tau$ (that is before the exposure of $v_k$). Also, after the exposure of $v_k$, we are storing all the vertices along with their colors explicitly. So, the space used by the algorithm is $\widetilde{\mathcal{O}}\left({\size{V}}\right)$, with high probability. To see the correctness of the algorithm, let $E_M^B$ be the set of monochromatic edges having both the endpoints in $B=\{v_k.\ldots,v_n\}$. Note that $\size{E_M^B}=C_{>\tau}$. Let $E_M^{V(G)\setminus B}$ be the set of monochromatic edges having at least one vertex in the set $V(G)\setminus B=\{v_1,\ldots, {v_{k-1}}\}$, that is, $E_M^{V(G)\setminus B}=E_M \setminus E_M^B$. Using Chernoff bound arguments (see Lemma~\ref{lem:cher_bound1} in Appendix~\ref{sec:prob}), we have the following lemma. The proof of the following lemma is presented in Appendix~\ref{app:deg-lem}.
\begin{lem}\label{lem:deg}
\begin{itemize}
\item[(i)] If $\size{E_M^{V(G)\setminus B}}\geq \frac{\vareps}{100}T$, then $\frac{\vareps^3T}{3000\log n}|S|$ is a $\left(1\pm \frac{\vareps}{100}\right)$ approximation to $\size{E_M^{V(G)\setminus B}}$ with probability at least $1-\frac{1}{n^{10}}$.
\item[(ii)] If $\size{E_M^{V(G)\setminus B}}\leq \frac{\vareps}{100}T$, $|S|\leq \frac{60 \log n}{\vareps^2} $ with probability at least $1-\frac{1}{n^{10}}$.
\end{itemize} 
\end{lem}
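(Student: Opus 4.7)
\emph{Plan.} The key observation is that $\size{S}$ is a sum of independent Bernoulli indicators, one per monochromatic edge in $E_M^{V(G)\setminus B}$, each with success probability $p := \frac{3000 \log n}{\vareps^3 T}$. The plan is first to establish this claim and compute $\E[\size{S}]$, and then to apply multiplicative Chernoff bounds separately in the two regimes.

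First, I would show that every $(v_i,v_j) \in E_M^{V(G)\setminus B}$ lands in $S$ independently with probability $p$. W.l.o.g., assume $v_i$ is exposed before $v_j$. Since at least one endpoint of the edge lies in $V(G)\setminus B = \{v_1,\ldots,v_{k-1}\}$ and $v_i$ is the earlier endpoint, $v_i$ must itself lie in $V(G)\setminus B$. Consequently, when $v_i$ was exposed the counter {\sc cnt} was still at most $\tau$, so the pair $(v_i,r)$---where $r \in [d_G^+(v_i)]$ is the index of $v_j$ among the neighbors of $v_i$ exposed after $v_i$---was added to $Y$ with probability $p$, independently of all other pairs. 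Exactly as in Remark~\ref{rem:why-ind}, the edge is placed in $S$ iff $(v_i,r) \in Y$, and since distinct monochromatic edges in $E_M^{V(G)\setminus B}$ trigger on distinct pairs $(v_i,r)$, the corresponding indicators are mutually independent. Hence $\E[\size{S}] = p \cdot \size{E_M^{V(G)\setminus B}}$.

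For part~(i), when $\size{E_M^{V(G)\setminus B}} \geq \frac{\vareps}{100} T$, we obtain $\E[\size{S}] \geq \frac{30 \log n}{\vareps^2}$. Applying the multiplicative Chernoff bound (Lemma~\ref{lem:cher_bound1}) with deviation parameter $\vareps/100$ then yields $\pr\bigl(\size{\size{S} - \E[\size{S}]} \geq \tfrac{\vareps}{100}\E[\size{S}]\bigr) \leq 1/n^{10}$, provided the absolute constant hidden in $p$ is chosen sufficiently large; rescaling by $1/p = \frac{\vareps^3 T}{3000 \log n}$ converts this directly into the claimed $(1\pm \vareps/100)$-approximation of $\size{E_M^{V(G)\setminus B}}$.

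For part~(ii), when $\size{E_M^{V(G)\setminus B}} \leq \frac{\vareps}{100} T$, we have $\E[\size{S}] \leq \frac{30 \log n}{\vareps^2}$. Here I would invoke a stochastic-domination step: $\size{S}$ is dominated by a sum of independent Bernoullis with mean exactly $\frac{30 \log n}{\vareps^2}$, to which the upper-tail Chernoff bound with $\delta = 1$ applies and gives $\pr\bigl(\size{S} \geq \frac{60 \log n}{\vareps^2}\bigr) \leq \exp\bigl(-\Omega(\log n/\vareps^2)\bigr) \leq 1/n^{10}$. The mildly delicate step is this domination, which lets the tail bound go through even when the true mean is well below the threshold; apart from this, both parts reduce to routine Chernoff computations.
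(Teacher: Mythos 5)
Your proof takes essentially the same route as the paper's: identify $\size{S}$ as a sum of independent Bernoulli indicators over $E_M^{V(G)\setminus B}$ with mean $p\cdot\size{E_M^{V(G)\setminus B}}$, then apply the lower/two-sided Chernoff bound for part~(i) and the ``$\mu\le t$'' variant (Lemma~\ref{lem:cher_bound1}(iii), which is exactly your stochastic-domination step packaged as a lemma) for part~(ii). Your additional observation that the \emph{earlier} endpoint of any edge in $E_M^{V(G)\setminus B}$ necessarily lies in $V(G)\setminus B$, so that the relevant pair $(v_i,r)$ was indeed subject to the $Y$-sampling step, makes the independence argument cleaner than the paper's pointer back to Remark~\ref{rem:why-ind}. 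One caveat you flag but should not wave away: with $p=\frac{3000\log n}{\vareps^3 T}$ and $\E[\size{S}]\ge\frac{30\log n}{\vareps^2}$, Chernoff with deviation $\vareps/100$ gives only $\exp(-\Theta(\log n/10^4))$, which is not $n^{-10}$; the constant $3000$ supports a $(1\pm\vareps)$ approximation (and that is what the paper's appendix actually proves), so the $(1\pm\vareps/100)$ phrasing in the statement requires the sampling probability's constant to be inflated by a factor of about $10^4$.
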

Now let us divide the analysis into two cases, that is, $|S|\geq\frac{60 \log n}{\vareps^2}$ and $|S|<\frac{60 \log n}{\vareps^2}$.
\begin{description}
\item[$|S|\leq\frac{60 \log n}{\vareps^2}$:] In this case, we set $C_{\leq\tau}=0$. So, $\widehat{m}=C_{>\tau}=\size{E_M^B}$ is the output, which is always bounded above by $|E_M|$. By Lemma~\ref{lem:deg} (i), $|S|\leq\frac{60 \log n}{\vareps^2}$ implies $\size{E_M^{V(G)\setminus B}}\leq \frac{\vareps}{25}T$ with probability at least $1-\frac{1}{n^{10}}$. Note that $|E_M|=\size{E_M^{V(G)\setminus B}}+\size{E_M^B}$ and $|E_M|\geq T$. Putting everything together, $\widehat{m}=C_{\leq \tau}+C_{>\tau}$ lies between $\left(1-\frac{\vareps}{25}\right)|E_M|$ and $|E_M|$, with probability at least $1-\frac{1}{n^{10}}$.
\item[$|S|>\frac{60 \log n}{\vareps^2}$:] In this case, we set $C_{\leq \tau}=\frac{\vareps^3T}{3000\log n}|S|$. By Lemma~\ref{lem:deg} (ii), $|S|>\frac{60 \log n}{\vareps^2}$ implies $\size{E_M^{V(G)\setminus B}} {>} \frac{\vareps}{100}T$ with probability at least $1-\frac{1}{n^{10}}$. Also, by Lemma~\ref{lem:deg} (i), $\size{E_M^{V(G)\setminus B}}> \frac{\vareps}{100}|T|$ implies $C_{\leq \tau}=\frac{\vareps^3T}{3000\log n}|S|$ is a $\left(1\pm \frac{\vareps}{100}\right)$ approximation to $\size{E_M^{V(G)\setminus B}}$ with probability at least $1-\frac{1}{n^{10}}$. Combining it with the fact that $C_{> \tau}=\size{E_M^B}$, we have $\widehat{m}=C_{\leq \tau}+C_{>\tau}$ is an $(1\pm \vareps)$-approximation to $|E_M|$, with probability at least $1-\frac{2}{n^{10}}$. 

\end{description}
This finishes the proof for the case $|E|> 100T|V|\log n$.
\end{description}
We have proved the correctness of Algorithm~\ref{algo:random-vadeg} by considering the cases $|E|\leq 100T|V|\log n$ and $|E|> 100T|V|\log n$ separately. We have also shown that the space complexity of Algorithm~\ref{algo:random-vadeg} is $\widetilde{\mathcal{O}}\left(\frac{|E|}{T}\right)$ in the former case and is $\widetilde{\mathcal{O}}(\size{V})$ in the latter case with high probability. Hence, we are done with the proof of Theorem~\ref{theo:est-vadgubth}.

\section{\conflictest and \colorverify in \varand model}\label{sec:est-vardubsc}
\noindent
In this Section, mainly, we show that the power of randomness can be used to design a better solution for the \conflictest problem in the \varand model. The \conflictest problem is the main highlight of our work. We feel that the crucial use of randomness in the input that is used to estimate a substructure (here, monochromatic edges) in a graph, will be of independent interest.

 In this variant, we are given an $ \vareps \in (0,1)$ and a promised lower bound $T$ on $\size{E_M}$, the number of monochromatic edges in $G$, as input and our objective is to determine a $(1 \pm \vareps)$-approximation to $\size{E_M}$. 
 
 \begin{theo} \label{theo:est-vardubth}
 Given any graph $G=(V,E)$ and a coloring function $f: V(G) \rightarrow [C]$ as input in the stream, the \conflictest problem in the \varand model can be solved with high probability in $\widetilde{\mathcal{O}} \left(\frac{\size{V}}{\sqrt T} \right)$ space, where $T$ is a lower bound on the number of monochromatic edges in the graph.\end{theo}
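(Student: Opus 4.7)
My plan is to use the random order to implicitly sample a uniform vertex subset $R\subseteq V$ of size $\widetilde{\cO}(n/\sqrt T)$, then estimate $\size{E_M}$ by splitting vertices into those with ``high'' versus ``low'' monochromatic degree and handling the two regimes differently.

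Set $p=\Theta(\log n/(\vareps^2\sqrt T))$ and let $R$ be the first $pn$ vertices of the stream, stored together with their colors. Because the order is uniformly random, $R$ is distributed as a uniformly random size-$pn$ subset of $V$; by Chernoff, $\size{R}=\widetilde{\cO}(n/\sqrt T)$ with high probability, accounting for the bulk of the claimed space. For every $v\in V$ define $d_R(v):=|\{u\in R:(u,v)\in E_M\}|$; I claim that $d_R(v)$ can be computed \emph{without ever storing any} $v\notin R$. Indeed, if $v\notin R$ then $v$ arrives after all of $R$, so at its arrival every edge of $v$ into $R$ is visible and its monochromaticity checkable (since all of $R$ is in memory); if $v\in R$ then $v$ is stored, and $d_R(v)$ is updated incrementally as the remaining $R$-vertices stream in. Note $\E[d_R(v)]\approx p\,d_M(v)$.

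Fix a threshold $\Delta^\star = c\sqrt T\log n/\vareps^2$ and call $v$ \emph{high} iff $d_M(v)\ge\Delta^\star$ (true set $V_H$, complement $V_L$), so that $\sum_v d_M(v)=\sum_{V_H}d_M(v)+\sum_{V_L}d_M(v)=2\size{E_M}$. For the high part, since $p\,d_M(v)=\Omega(\log n/\vareps^2)$ for $v\in V_H$, a Chernoff bound yields $d_R(v)/p=(1\pm\vareps)\,d_M(v)$ w.h.p. I therefore classify $v$ as high on-the-fly whenever $d_R(v)\ge\tfrac12 p\Delta^\star$ and maintain a single running scalar $\widehat S_H:=\sum_{v\text{ classified high}} d_R(v)/p$; a union bound over the at most $2\size{E_M}/\Delta^\star$ truly high vertices gives $\widehat S_H=(1\pm\vareps)\sum_{v\in V_H}d_M(v)$. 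Crucially, $\widehat S_H$ is a scalar, so the identities of high-degree vertices are never stored. For the low part, individual $d_R(v)/p$ is too noisy, so I pool via geometric bucketing: maintain a counter $c_j$ for the number of $v$ classified low with $d_R(v)\in[2^j,2^{j+1})$, and output $\widehat S_L:=\sum_j c_j\cdot 2^j/p$. The variance analysis rests on the rewriting $\sum_v d_R(v)\,\mathbf{1}[v\text{ low-classified}]=\sum_u d^L_M(u)\,\mathbf{1}[u\in R]$, combined with the bound $\sum_{u\in V_L}d^L_M(u)^2\le\Delta^\star\sum_{u\in V_L}d_M(u)\le 4\Delta^\star\size{E_M}$; at the chosen $p$ this gives variance $O(\vareps^2\size{E_M}^2)$, and any bucket whose true contribution is below $\vareps T/\log n$ can be safely absorbed into the error. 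Report $\widehat m:=\tfrac12(\widehat S_H+\widehat S_L)$.

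The main obstacle is the low-side bucketing analysis. Since $d_R(v)$ is a noisy proxy for $d_M(v)$ at small degrees (expected value as small as $O(\log n/\vareps)$ near the high/low boundary), individual vertices migrate between adjacent buckets and the bucket counts themselves are random variables. Making the pooled Chernoff-type argument rigorous requires controlling (i) the weak (in fact, slightly negative) correlations between the events $\{u\in R\}$ imposed by the fixed-size sample, (ii) the bias introduced by boundary misclassifications between high and low, and (iii) the treatment of $E_M^{HL}$ edges, whose contribution must be consistently split between $\widehat S_H$ and $\widehat S_L$ without double counting—this is precisely what forces the restriction $\sum_{u\in V_L}d^L_M(u)^2$ (rather than $\sum_u d_M(u)^2$) in the variance bound, and is the step most tied to the specific $p=\widetilde{\Theta}(1/\sqrt T)$ sampling rate underlying the claimed space.
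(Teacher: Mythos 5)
Your high-degree estimate $\widehat S_H$ is essentially the paper's $\widehat{m_h}$ and is fine. The low-degree estimate, however, has a genuine gap, and the obstacle you flag at the end is fatal rather than a detail to be filled in.

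You aggregate $d_R(v)/p$ over all low-classified $v$ and invoke the rewrite $\sum_{v\text{ low}} d_R(v)=\sum_u d_M^L(u)\,\mathbf{1}[u\in R]$. The right-hand sum runs over \emph{all} $u$, including high-degree $u$, yet your variance bound $\sum_{u\in V_L}d_M^L(u)^2\le\Delta^\star\sum_{u\in V_L}d_M(u)$ silently drops the $u\in V_H$ terms, for which $d_M^L(u)$ can be as large as $n$. A monochromatic star (one high center $c$, $n-1$ low leaves) breaks the estimator outright: every leaf $v$ has $d_R(v)=\mathbf{1}[c\in R]$, so $\widehat S_L$ equals $0$ with probability $1-p$ and $(n-1)/p$ with probability $p$, and never concentrates around $m_\ell=n-1$. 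Re-routing the high--low contributions --- e.g.\ summing only $v$'s \emph{low} monochromatic neighbors in $R$ and compensating with an extra HL term in $\widehat S_H$ --- might repair the variance, but that is a redesign of the estimator, not a step in the stated proof; as written, the proposal computes $\widehat S_L$ from the unrestricted $d_R(v)$ and the claimed concentration is false.

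The paper's low-side argument sidesteps this by a different idea that your proposal does not use: once a vertex $v$ lands in $R$, it and its color are stored, so the algorithm can track $v$'s \emph{exact} monochromatic degree $d_M(v)$ over the entire stream (initialize to $\kappa_v=d_R(v)$, then increment whenever a later vertex forms a monochromatic edge with $v$). It then buckets $L\cap R$ by \emph{true} degree into geometric classes $B_j$ and estimates $\size{B_j}$ by $\frac{n}{\size{R}}\size{B_j\cap R}$, a pure size-of-sample (without-replacement) concentration. This counts vertices rather than noisy degree mass, uses exact degrees on the stored sample, and is immune to the star pathology. The key observation you are missing is that for $v\in R$ you never need to settle for the noisy proxy $d_R(v)$.
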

 We prove the above theorem in Section~\ref{sec:subsec-random}. Note that the above algorithm can be used to solve \colorverify in \varand model. In Section~\ref{sec:sep-vaub}, we give a simple algorithm for \conflictest that exploits a \emph{structural} property of the subgraph having only monochromatic edges. However, the space complexity of the algorithm for \colorverify (in Section~\ref{sec:sep-vaub}) is same that of the algorithm for \conflictest (in Section~\ref{sec:subsec-random}). 

\subsection{\conflictest in \varand model (Proof of Theorem~\ref{theo:est-vardubth})}\label{sec:subsec-random}
\subsection*{The proof idea}

\paragraph*{A random sample comes for free -- pick the first few vertices:} Let $v_1,\ldots,v_n$ be the random ordering in which the vertices of $V$ are revealed. Let $R$ be a random subset of 
$\Gamma=\widetilde{\Theta}\left( \frac{n}{\sqrt{T}}\right)$ many vertices of $G$ sampled without replacement~\footnote{$\widetilde{\Theta}(\cdot)$ hides a polynomial factor of $\log n$ and $\frac{1}{\vareps}$ in the upper bound.}. As we are dealing with a random order stream, consider the first $\Gamma$ vertices in the stream; they can be treated as $R$, the random sample. We start by storing all the vertices in $R$ as well as their colors. 
Observe that if the monochromatic degree of any vertex $v_i$ is \emph{large} (say roughly more than $\sqrt{T}$), then it can be well approximated by looking at the number of monochromatic neighbors that $v_i$ has in $R$. As a vertex $v_i$ streams past, there is no way we can figure out its monochromatic degree, unless we store its monochromatic neighbors that appear before it in the stream; if we could, we were done. Our only savior is the stored random subset $R$.

\paragraph*{Classifying the vertices of the random sample $R$ based on its monochromatic degree:}
Our algorithm proceeds by figuring out the influence of the color of $v_i$ on the monochromatic degrees of vertices in $R$.
To estimate this, let $\kappa_{v_i}$ denote the number of monochromatic neighbors that $v_i$ has in $R$. We set a threshold $\tau=\frac{\size{R}}{n}\frac{\sqrt{\vareps T}}{8t}$, where $t=\lceil \log _{1+\frac{\vareps}{10}} n\rceil$. The significance of $t$ will be clear from the discussion below. Any vertex $v_i$ will be classified as a \hmR or \lmR degree vertex depending on its monochromatic degree within $R$, i.e., if $\kappa_{v_i} \geq \tau$, then $v_i$ is a \hmR vertex, else it is a \lmR vertex, respectively. (We use the subscripts m$_R$ to stress the fact that the monochromatic degrees are induced by the set $R$.)
Let $H$ and $L$ be the partition of $V$ into the set of \hmR and \lmR degree vertices in $G$. Let $H_R$ and $L_R$ denote the set of \hmR and \lmR degree vertices in $R$. Notice that, because of the definition of \hmR and \lmR degree vertices, not only the sets $H_R, \, L_R$ are subsets of  $R$, but they are determined by the vertices of $R$ only.

Let $m_h$ and $m_\ell$ denote the sum of the monochromatic degrees of all the \hmR degree vertices and \lmR degree vertices in $G$, respectively. 
So, $m_h=\sum_{v \in H}d_M(v)$ and $m_\ell=\sum_{v \in L}d_M(v)$. Note that $ \widehat{m} = \size{E_M} =\frac{1}{2}\sum_{v \in V} d_M(v) = \frac{1}{2}\left( m_h+m_\ell\right)$.
We will describe how to approximate $m_h$ and $m_\ell$ separately. The formal algorithm is described in Algorithm~\ref{algo:random-order} as {\sc Random-Order-Est}$(\vareps,T)$ that basically executes steps to approximate $m_h$ and $m_\ell$ in parallel.
\begin{algorithm}
\caption{{\sc Random-Order-Est}($\vareps,T)$: \conflictest in \varand model}
\label{algo:random-order}
\KwIn{$G = (V, E)$ and a coloring function $f$ on $V$ in the \varand model, parameters $T$ and $\vareps$.}
\KwOut{$\widehat{m}$, that is, a $(1\pm \vareps)$ approximation to $\size{E_M}$.}
\begin{itemize}
\item $\Gamma=\widetilde{\Theta} \left( \frac{n}{\sqrt{T}}\right)$; $v_1,\ldots,v_n$ be the random ordering in which vertices are revealed and $R=\{v_1,\ldots,v_\Gamma\}$; 
\item  $\kappa_{v_i}, i \in [n],$ denotes the number of monochromatic neighbors of $v_i$ in $R$, 
\item $\widehat{d_{v_i}}, i \in [n],$ denotes the (estimated) monochromatic neighbors of vertices in $G$.
\item $H$ denotes the set of \emph{high} degree vertex in $R$, i.e., $H=\{v_i:\kappa_{v_i} \geq \frac{\size{R}}{n}\frac{\sqrt{\vareps T}}{8t}\}$
and $L=V \setminus H$;  $L_R=L \cap R$ and $H_R=H \cap R$;
\item The vertices in $L$ are partitioned into $t$ buckets as follows:\\
$B_j=\{v_i \in L: \left(1+\frac{\vareps}{10} \right)^{j-1} \leq d_M(v_i) < \left(1+\frac{\vareps}{10}\right)^{j} \}$, where $j \in [t]$.
\end{itemize}
Set $t = \lceil \log _{1+\frac{\vareps}{10}}n \rceil$. If $T <63t^2$, then store all the vertices in $G$ along with their colors. At the end, 
report the exact value of $\size{E_M}$. Otherwise, we proceed through via three building blocks described below and marked as (1),(2), (3) and (4). {Refer to the notations described above this pseudocode.}
\begin{description}
\item[(1)]\underline{{\bf Processing the vertices in $R$, the first $\Gamma$ vertices, in the stream:}}
    
\For{( each vertex $v_i \in R$ exposed in the stream)}
{
Store $v_i$ as well as its color $f(v_i)$.\\
For each edge $(v_{i'},v_i)$ that arrives in the stream, increase the values of $\kappa_{v_{i'}}$ and $\kappa_{v_i}$.\\
}
\item[(2)]\underline{{\bf Computation of some parameters based on vertices in $R$ and their colors:}}

\For{(each $v_i \in R$ with $\kappa_{v_i} \geq  \frac{ \size{R}}{n}\frac{\sqrt{\vareps T}}{8t}$)}
{
Add $v_i$ to $H_R$, and 
set $\widehat{d_{v_i}}=\frac{n}{\size{R}}\kappa_{v_i}$.\\
}
$\widehat{m_h}=\sum\limits_{v_i \in H} \widehat{d_{v_i}}$.\\
Let $L_R = R \setminus H_R$. \\
\For{(each $v_i \in L_R$)}
{
Set $\widehat{d_{v_i}}=\kappa_{v_i}$.\\
}
\item[(3)]\underline{{\bf Processing the vertices in $V(G)\setminus R$ in the stream:}}

\For{(each vertex $v_i \notin R$ exposed in the stream)}
{
Determine the value of $\kappa_{v_i}$. If $\kappa_{v_i} \geq  \frac{ \size{R}}{n}\frac{\sqrt{\vareps T}}{8t}$, find $\widehat{d_{v_i}}=\frac{n}{\size{R}}\kappa_{v_i}$ and add $\widehat{d_{v_i}}$ to the current $\widehat{m_h}.$ \\
Also, for each $v_{i'} \in L_R$, increase the value of $\widehat{d_{v_{i'}}}$ if $(v_{i'},v_i)$ is an edge.\\
}
\item[(4)]\underline{{\bf Post processing, after the stream ends, to return the output:}}

From the values of $\widehat{d_{v_i}}$ for all $v_i \in L_R$, determine the buckets for each vertex in $L_R$. Also, for each $j \in [t]$, find  $ \size{A_j}=\size{L_R \cap B_j}$. Then determine $$\widehat{m_\ell}=\frac{n}{\size{R}}\sum\limits_{ j \in [t]} \size{A_j} \left(1+\frac{\vareps}{10}\right)^j .$$\\
Report $\widehat{m}=\frac{\widehat{m_h}+\widehat{m_\ell}}{2}$ as the final {\sc Output}.
\end{description}
\end{algorithm}
\paragraph*{To approximate $m_h$, the random sample $R$ comes to rescue:} 
 We can find $\widehat{m_h}$, that is, a $\left( 1\pm \frac{\vareps}{10}\right)$ approximation of $m_h$ as described below. For each vertex $v_i \in R$ and each monochromatic edge $(u,v_i)$, $u \in R$, we see in the stream, we increase the value of $\kappa_u$ for $u$ and $\kappa_{v_i}$ for $v_i$. After all the vertices in $R$ are revealed, we can determine $H_R$ by checking whether $\kappa_{v_i} \geq \tau$ for each $v_i \in R$. For each vertex $v_i \in H_R$, we set its approximate monochromatic degree $\widehat{d_{v_i}}$ to be $\frac{n}{\size{R}}\kappa_{v_i}$. We initialize the estimated sum of the monochromatic degree of high degree vertices as $\widehat{m_h}=\sum_{v_i \in H_R}\widehat{d_{v_i}}$. For each vertex $v_i \notin R$ in the stream, we can determine $\kappa_{v_i}$, as we have stored all the vertices in $R$ along with their colors, and hence we can also determine whether $v_i$ is a \hmR degree vertex in $G$. If $v_i \notin R$ is a \hmR degree vertex, we determine $\widehat{d_{v_i}}=\frac{n}{\size{R}}\kappa_{v_i}$ and update $\widehat{m_h}$ by $\widehat{m_h}+\widehat{d_{v_i}}$. Observe that, at the end, $\widehat{m_h}$ is $\sum_{v_i \in H}\widehat{d_{v_i}}$. Recall that $H$ is the set of all \hmR degree vertices in $G$. For each $v_i \in H$, we will show, as in Claim~\ref{clm:random-high}, that $\widehat{d_{v_i}}$ is a $\left(1\pm \frac{\vareps}{10}\right)$-approximation to $d_M(v_i)$ with high probability. This implies that 
 
\begin{equation}
\label{eqn:high-over}
 \left(1-\frac{\vareps}{10}\right)m_h \leq \widehat{m_h} \leq \left(1+\frac{\vareps}{10}\right)m_h
\end{equation}

 \paragraph*{To approximate $m_\ell$, group the vertices in $L$ based on similar monochromatic degree:}
 Recall that $m_\ell=\sum_{v_i \in L}d_M(v_i)$. Unlike the \hmR degree vertices, it is not possible to approximate the {monochromatic} degree of $v_i \in L $ from $\kappa_{v_i}$. To cope up with this problem, we partition the vertices of $L$ into $t$ buckets $B_1,\ldots,B_t$ such that all the vertices present in a bucket have \emph{similar} {monochromatic} degrees, where $t=\lceil \log _{1+\frac{\vareps}{10}} n\rceil$. The bucket $B_j$ is defined as follows: $B_j=\{v_i\in L: \left(1+\frac{\vareps}{10} \right)^{j-1} \leq d_M(v_i) < \left(1+\frac{\vareps}{10}\right)^{j} \}$.

 
 Note that our algorithm will not find the buckets explicitly. It will be used for the analysis only. Observe that 
 $\sum_{j \in [t]}\size{B_j}\left( 1+\frac{\vareps}{10}\right)^{j-1} \leq  m_\ell  < \sum_{j \in [t]}\size{B_j}\left( 1+\frac{\vareps}{10}\right)^{j}$. 
  We can surely approximate $m_\ell$ by approximating $\size{B_j}$s suitably. We estimate $\size{B_j}$s as follows.\remove{ However, we will be able to have a $\left(1+\frac{\vareps}{10}\right)$-approximation of $\size{B_j}$, with high probability, if $\size{B_j}\geq \frac{\sqrt{\vareps T}}{10t}$ as follows.} After the stream of the vertices in $R$ has gone past, we have the set of \lmR degree vertices $L_R$ in $R$ and $\widehat{d_{v_i}}=\kappa_{v_i}$ for each $v_i \in L_R$. For 
  each $v_i \notin R$ in the stream, we determine the monochromatic neighbors of $ v_i$ in $L_R$. It is possible as we have stored all the 
  vertices in $R$ and their colors. For each monochromatic neighbor $v_{i'}\in L_R$ of $v_i$, we increase the value of $\widehat{d}_{v_{i'}}$ 
  of $v_{i'}$.  Observe that, at the end of the stream, $\widehat{d_{v_{i'}}}=d_M(v_{i'})$ for each $v_{i'} \in L_R$, i.e., we can accurately estimate the monochromatic degree of each $v_{i'} \in L_R$. So, we can determine the 
  bucket where each vertex in $L_R$ belongs. Let $A_j$ $(=L_R \cap B_j)$ be the bucket $B_j$ projected onto $L_R$ in the random sample; {note that as $B_j \subseteq L$ and $L_R = L \cap R$, $A_j = R \cap B_j$ also}.  We determine $\widehat{m_\ell}=\frac{n}{\size{R}}\sum_{j \in [t]}\size{A_j}\left(1+\frac{\vareps}{10}\right)^j$. We can show that $\frac{n}{\size{R}}\size{A_j}$ is a $\left(1+\frac{\vareps}{10}\right)$-approximation of $\size{B_j}$, with high probability, if $\size{B_j}\geq \frac{\sqrt{\vareps T}}{10t}$. Also, we can show that, if $\size{B_j}< \frac{\sqrt{\vareps T}}{10t}$, then 
  $\size{A_j} \leq \frac{\size{R}}{n}\frac{\sqrt{\vareps T}}{8t}$ with high probability. Now using the fact that we consider bucketing of only \lmR degree vertices ($L_R$), we can show that 
  \begin{center}
  \begin{equation}
  \label{eqn:low-over}
   \left(1-\frac{\vareps}{10} \right) \left( m_\ell - \frac{\vareps T}{63t}\right) \leq \widehat{m_\ell} \leq \left(1+\frac{\vareps}{10}\right)^2\left(m_\ell + \frac{\vareps T}{56t} \right).
 \end{equation}
\end{center}
Note that $\vareps \in (0,1)$ and $t=\lceil \log _{1+\frac{\vareps}{10}} n \rceil$. Assuming $T \geq 63t^2$,  Equations~\ref{eqn:high-over} and~\ref{eqn:low-over} imply that $\widehat{m}=\frac{1}{2}(\widehat{m_h} +\widehat{m_\ell})$ is a $(1 \pm \vareps)$-approximation to $\size{E_M}$. If $T < 63t^2$, then note that $n=\widetilde{{\cal O}}\left(\frac{n}{\sqrt{T}}\right)$. So, in that case, we store all the vertices along with their colors and compute the exact value of $\size{E_M}$. 

\subsection*{Proof of correctness}
The correctness of the algorithm follows trivially if $T < 63t^2$. So, let us assume that $T \geq 63t^2$.
In the \varand model, we consider the first $\widetilde{\Theta}\left( \frac{n}{\sqrt{T}}\right)$ vertices as the random sample $R$ without replacement. Using the Chernoff bound for sampling without replacement (See Lemma~\ref{lem:without-replace} in Appendix~\ref{sec:prob}), we can have the following lemma ({The proof is in Appendix~\ref{sec:deviation}}), which will be useful for the correctness proof of Algorithm~\ref{algo:random-order} ({\sc Random-Order-Est}($\vareps,T)$) in case of $T \geq 63t^2$.
\begin{lem}\label{lem:cher-random-algo}
\begin{itemize}
\item[(i)] For each $j \in [t]$ with $\size{B_j} \geq \frac{\sqrt{\vareps T}}{10t}$, $\pr \left( \size{{\size{B_j \cap R}}-
\frac{\size{R}\size{B_j}}{n}} \geq \frac{\vareps}{10}\frac{\size{R}\size{B_j}}{n}  \right) \leq \frac{1}{n^{10}}$.
\item[(ii)] For each $j \in [t]$ with $\size{B_j} < \frac{\sqrt{\vareps T}}{10t}$, $\mathbb{P} \left( {{\size{B_j \cap R}}} \geq \frac{\size{R}}{n}\frac{\sqrt{\vareps T}}{8t}  \right) \leq \frac{1}{n^{10}}$.
\item[(iii)] For each vertex $v_i$ with $d_M(v_i) \geq \frac{\sqrt{\vareps T}}{10t}$, $\mathbb{P} \left( \size{\kappa_{v_i} - \frac{\size{R}d_M(v_i)}{n}} \geq \frac{\vareps}{10}\frac{\size{R}d_M(v_i)}{n}\right) \leq \frac{1}{n^{10}}$.
\item[(iv)] For each vertex $v_i$ with $d_M(v_i) < \frac{\sqrt{\vareps T}}{10t}$, $\mathbb{P} \left( {\kappa_{v_i}} \geq \frac{ \size{R}}{n}\frac{\sqrt{\vareps T}}{8t}\right) \leq \frac{1}{n^{10}}$.
\end{itemize}
\end{lem}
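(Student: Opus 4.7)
The plan is to recognize that in every one of the four statements, the quantity being concentrated is the intersection $|R\cap S|$ of the uniform without-replacement sample $R$ (of size $\Gamma=\widetilde{\Theta}(n/\sqrt{T})$) with some fixed set $S\subseteq V$. Indeed, $|R\cap B_j|$ is exactly of this form (with $S=B_j$), and $\kappa_{v_i}$ is the same quantity with $S=\{u\in V:(u,v_i)\in E_M\}$, the set of $d_M(v_i)$ monochromatic neighbors of $v_i$. Whether or not $v_i$ itself lies in $R$ is immaterial because $v_i\notin S$. For any such fixed $S$ of size $s$, $|R\cap S|$ is hypergeometric with mean $\mu=\Gamma s/n$, so the Chernoff bound for sampling without replacement (Lemma~\ref{lem:without-replace}) applies directly.

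Given this observation, parts (i) and (iii) are straightforward applications of the two-sided multiplicative Chernoff bound with slack $\varepsilon/10$. Under the hypothesis $s\ge \sqrt{\varepsilon T}/(10t)$, the mean satisfies $\mu=\Gamma s/n \ge \frac{\Gamma}{n}\cdot\frac{\sqrt{\varepsilon T}}{10t}$. With $\Gamma=\widetilde{\Theta}(n/\sqrt{T})$, and by absorbing enough factors of $\log n$ and $1/\varepsilon$ into the hidden $\widetilde{\Theta}(\cdot)$, one obtains $\mu=\Omega(\log n/\varepsilon^2)$, so the Chernoff bound gives $\exp(-\Omega(\varepsilon^2\mu))\le 1/n^{10}$, as required.

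Parts (ii) and (iv) are upper-tail bounds rather than two-sided bounds, but here we have a constant-factor cushion between the actual mean and the threshold we wish to avoid crossing. Under $s<\sqrt{\varepsilon T}/(10t)$, we have $\mu<\frac{\Gamma}{n}\cdot\frac{\sqrt{\varepsilon T}}{10t}$, whereas the target event is $|R\cap S|\ge \frac{\Gamma}{n}\cdot\frac{\sqrt{\varepsilon T}}{8t}\ge \tfrac{5}{4}\mu$. By monotonicity of the upper tail in $s$, it suffices to consider $s$ at the threshold $\sqrt{\varepsilon T}/(10t)$, and in that case the upper-tail Chernoff bound with constant multiplicative slack $1/4$ yields $\exp(-\Omega(\mu))\le 1/n^{10}$ with the same choice of $\Gamma$.

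The only delicate point is the book-keeping: one must check that a single choice of the hidden polylogarithmic factors inside $\Gamma$ is large enough to make all four tail probabilities simultaneously at most $1/n^{10}$. This is routine once the hypergeometric structure and the Chernoff inequality (Lemma~\ref{lem:without-replace}) are in hand, and I do not foresee any genuine obstacle beyond keeping track of constants.
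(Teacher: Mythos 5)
Your proposal matches the paper's proof: both identify each quantity as the hypergeometric intersection $\size{R\cap S}$ for a fixed set $S$ (a bucket $B_j$ or the monochromatic neighborhood of $v_i$) and invoke the sampling-without-replacement Chernoff bound (Lemma~\ref{lem:without-replace}), with the two-sided bound at slack $\vareps/10$ for (i), (iii) and the one-sided bound with the $5/4$ cushion (the paper takes $\delta=1/4$, $k=\sqrt{\vareps T}/(10t)$ in Lemma~\ref{lem:without-replace}(iii)) for (ii), (iv). Your ``monotonicity of the upper tail in $s$'' step is exactly what part~(iii) of Lemma~\ref{lem:without-replace} encapsulates, so the reasoning is the same in substance.
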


\remove{ Recall that $\widehat{m_h}=\sum\limits_{v_i:\kappa_i \geq \frac{\size{R}}{n}\frac{\sqrt{\vareps T}}{8t}}\widehat{d_i}$ and $\widehat{m_\ell}=\frac{n}{\size{R}}\sum\limits_{j\in [t]}\size{A_j}\left(1+\frac{\vareps}{10}\right)^{j}$.\\
 Also, $m_h=\sum\limits_{v_i:\kappa_i \geq \frac{\size{R}}{n}\frac{\sqrt{\vareps T}}{8t}} d_M(v_i)$ and $m_\ell=\sum\limits_{v_i:\kappa_i < \frac{\size{R}}{n}\frac{\sqrt{\vareps T}}{8t}} d_M(v_i)$. \\
So, $\size{E_M}=\frac{1}{2}(m_h+m_\ell)$. }
The correctness proof of the algorithm is divided into the following two claims.
\begin{cl}\label{clm:random-high}
$\left(1-\frac{\vareps}{10}\right)m_h\leq \widehat{m_h} \leq \left(1+\frac{\vareps}{10}\right)m_h $ with probability at least $1-\frac{1}{n^{9}}$.
\end{cl}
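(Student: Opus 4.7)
\medskip
\noindent
\textbf{Proof plan for Claim~\ref{clm:random-high}.}
The plan is to leverage parts (iii) and (iv) of Lemma~\ref{lem:cher-random-algo}, combined with a union bound over all vertices, to control $\widehat{d_{v_i}}$ vertex-by-vertex and then sum. The key link is that membership in $H$ is defined by the threshold $\kappa_{v_i}\geq \frac{|R|}{n}\frac{\sqrt{\vareps T}}{8t}$, while the estimator quality for $\widehat{d_{v_i}}=\frac{n}{|R|}\kappa_{v_i}$ is good only when $d_M(v_i)$ is not too small. So the first step is to rule out the existence of any ``impostor'' in $H$ -- vertices with genuinely low monochromatic degree that accidentally exceed the sampling threshold.

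First I would define, for each $v_i\in V$, the event $\mathcal{E}_i$ that either (a) $d_M(v_i)\geq \frac{\sqrt{\vareps T}}{10t}$ and $\widehat{d_{v_i}}=\frac{n}{|R|}\kappa_{v_i}$ is a $(1\pm \vareps/10)$ approximation of $d_M(v_i)$, or (b) $d_M(v_i)<\frac{\sqrt{\vareps T}}{10t}$ and $\kappa_{v_i}<\frac{|R|}{n}\frac{\sqrt{\vareps T}}{8t}$ (so $v_i\notin H$). By Lemma~\ref{lem:cher-random-algo}(iii)--(iv), each $\mathcal{E}_i$ fails with probability at most $1/n^{10}$, and a union bound over the $n$ vertices gives $\pr\!\big(\bigcap_{i\in[n]}\mathcal{E}_i\big)\geq 1-1/n^{9}$. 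I would condition on this joint event for the rest of the argument.

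Next, under this event, any vertex $v_i\in H$ must fall in case (a): if instead $d_M(v_i)<\frac{\sqrt{\vareps T}}{10t}$, the definition of $H$ would be violated. Hence for every $v_i\in H$ we have
\[
\left(1-\tfrac{\vareps}{10}\right) d_M(v_i) \;\leq\; \widehat{d_{v_i}} \;\leq\; \left(1+\tfrac{\vareps}{10}\right) d_M(v_i).
\]
Summing over $v_i\in H$ and using $m_h=\sum_{v_i\in H} d_M(v_i)$ and $\widehat{m_h}=\sum_{v_i\in H}\widehat{d_{v_i}}$ yields the desired two-sided bound $(1-\vareps/10)m_h \leq \widehat{m_h}\leq (1+\vareps/10)m_h$ with probability at least $1-1/n^{9}$.

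The main subtlety, and what makes the union bound approach non-trivial, is precisely the handling of ``impostor'' high-degree vertices: Lemma~\ref{lem:cher-random-algo}(iii) gives relative-error control of $\kappa_{v_i}$ only when $d_M(v_i)\geq \frac{\sqrt{\vareps T}}{10t}$, so we cannot directly approximate $d_M(v_i)$ for every $v_i\in H$; we must first argue that no low-degree vertex sneaks into $H$. That is exactly what part (iv), with its matching threshold $\frac{|R|}{n}\frac{\sqrt{\vareps T}}{8t}$ -- chosen with a gap of a factor of $8/10$ from the concentration threshold of (iii) -- guarantees.
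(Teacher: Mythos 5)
Your proof is correct and follows essentially the same route as the paper: invoke parts (iii) and (iv) of Lemma~\ref{lem:cher-random-algo} to control $\widehat{d_{v_i}}$ vertex-by-vertex, union bound over all vertices, and observe that (iv) rules out low-degree impostors in $H$ so that (iii) applies to every member of $H$. Your version is slightly more explicit in its bookkeeping (defining the per-vertex events $\mathcal{E}_i$ so that each vertex charges only one of (iii) or (iv), whereas the paper informally charges both, getting $2/n^{10}$ per vertex), but the argument is the same.
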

\begin{cl}\label{clm:random-low}
$\left(1-\frac{\vareps}{10}\right)\left(m_\ell-\frac{\vareps T}{63t}\right) \leq \widehat{m_\ell} \leq \left(1+\frac{\vareps}{10}\right)^2\left(m_\ell+\frac{\vareps T}{56t}\right) $ with probability at least $1-\frac{1}{n^{7}}$. 
\end{cl}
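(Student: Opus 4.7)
The plan is to split the proof into two pieces: first, to establish a structural consequence of membership in $L$, namely that on a high-probability event every $v \in L$ has $d_M(v) = O(\sqrt{\vareps T}/t)$; and second, to compare $\widehat{m_\ell}$ with $m_\ell$ bucket by bucket. I start by taking a union bound over all events in Lemma~\ref{lem:cher-random-algo} parts (i)--(iv) (there are at most $t$ bucket events and $n$ vertex events), so a failure probability of $1/n^{10}$ per event collapses to a total failure probability at most $O((n+t)/n^{10}) \leq 1/n^{7}$, as required.

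On the good event I first argue a \emph{truncation}: if $v \in L$, then $\kappa_v < \frac{|R|}{n}\frac{\sqrt{\vareps T}}{8t}$ by definition; but Lemma~\ref{lem:cher-random-algo}(iii)--(iv) say that if $d_M(v) \geq \frac{\sqrt{\vareps T}}{10 t}$ then $\kappa_v$ concentrates around $\frac{|R|}{n} d_M(v)$, so in fact $d_M(v) < \frac{\sqrt{\vareps T}}{8 t (1-\vareps/10)}$; and if $d_M(v) < \frac{\sqrt{\vareps T}}{10 t}$ the same bound holds trivially. Hence every nonempty bucket $B_j$ satisfies $(1+\vareps/10)^{j-1} < \frac{\sqrt{\vareps T}}{8t(1-\vareps/10)}$, giving a deterministic cap on $(1+\vareps/10)^j$ that will control the ``light'' buckets. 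This truncation is the main obstacle of the proof; without it the term $(1+\vareps/10)^j$ could be as large as $n$, and the light-bucket contribution would dominate.

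Next I partition indices into heavy buckets ($|B_j| \geq \sqrt{\vareps T}/(10t)$) and light buckets ($|B_j| < \sqrt{\vareps T}/(10t)$), and split both sums accordingly:
\[
\widehat{m_\ell} \;=\; \widehat{m_\ell}^{\,\mathrm{h}} + \widehat{m_\ell}^{\,\mathrm{l}}, \qquad m_\ell \;=\; m_\ell^{\mathrm{h}} + m_\ell^{\mathrm{l}}.
\]
On heavy buckets, Lemma~\ref{lem:cher-random-algo}(i) gives $(1-\vareps/10)|B_j| \leq \tfrac{n}{|R|}|A_j| \leq (1+\vareps/10)|B_j|$. Combining with the bucket sandwich $|B_j|(1+\vareps/10)^{j-1} \leq \sum_{v\in B_j} d_M(v) \leq |B_j|(1+\vareps/10)^{j}$ yields, term by term, $(1-\vareps/10)\, m_\ell^{\mathrm h} \leq \widehat{m_\ell}^{\,\mathrm h} \leq (1+\vareps/10)^{2}\, m_\ell^{\mathrm h}$. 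On light buckets, Lemma~\ref{lem:cher-random-algo}(ii) gives $\tfrac{n}{|R|}|A_j| \leq \sqrt{\vareps T}/(8t)$, while $|B_j| < \sqrt{\vareps T}/(10t)$ by the light condition; multiplying by the cap on $(1+\vareps/10)^j$ from the truncation step and summing over at most $t$ buckets bounds $\widehat{m_\ell}^{\,\mathrm l}$ by $(1+\vareps/10)^{2}\cdot \vareps T/(56t)$ and $m_\ell^{\mathrm l}$ by $\vareps T/(63t)$, after arithmetic.

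Finally I combine: for the upper bound, $\widehat{m_\ell} \leq (1+\vareps/10)^{2} m_\ell^{\mathrm h} + (1+\vareps/10)^{2}\,\vareps T/(56t) \leq (1+\vareps/10)^{2}(m_\ell + \vareps T/(56t))$; for the lower bound, $\widehat{m_\ell} \geq \widehat{m_\ell}^{\,\mathrm h} \geq (1-\vareps/10) m_\ell^{\mathrm h} \geq (1-\vareps/10)(m_\ell - m_\ell^{\mathrm l}) \geq (1-\vareps/10)(m_\ell - \vareps T/(63t))$. This matches the claimed inequalities, and the only place where a bit of care is needed beyond routine algebra is ensuring that the constants $1/56$ and $1/63$ survive the $(1+\vareps/10)/(1-\vareps/10)$ factors produced by the truncation bound; checking this for $\vareps \in (0,1)$ is a direct calculation.
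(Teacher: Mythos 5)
Your proof is correct, and it takes a modestly but genuinely different route from the paper. The paper splits $\widehat{m_\ell}$ according to whether $\size{A_j}$ is above or below $\frac{\size{R}}{n}\frac{\sqrt{\vareps T}}{8t}$ (a quantity computable from the sample $R$), and then needs the auxiliary Observation~\ref{obs:lemma(i)&(ii)} to chain Lemma~\ref{lem:cher-random-algo}~(ii) and~(i) together and convert $\size{A_j}$-thresholds into $\size{B_j}$-statements; it also introduces a separate threshold $\sqrt{\vareps T}/(9t)$ on $\size{B_j}$ in Observation~\ref{obs:exact-low} for the lower-bound direction. You instead split \emph{both} $\widehat{m_\ell}$ and $m_\ell$ by the single (unobservable, but that is fine for analysis) criterion $\size{B_j}\gtrless \sqrt{\vareps T}/(10t)$, so that Lemma~\ref{lem:cher-random-algo}~(i) applies directly to every heavy bucket and (ii) directly to every light bucket, with no cross-implication and no second threshold. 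This buys you a cleaner two-case argument at the cost of doing a symmetric pass over $m_\ell$ as well. Your explicit derivation of the truncation $d_M(v) = O(\sqrt{\vareps T}/t)$ for $v\in L$ via Lemma~\ref{lem:cher-random-algo}~(iii) is also slightly more careful than the paper's Observation~\ref{obs:low-deg-low-buck}, which cites part~(iv) when (iii) is the part actually doing the work, and asserts $(1+\vareps/10)^j \le \sqrt{\vareps T}/(7t)$ where the bucket definition only yields this for the exponent $j-1$; these are absorbed by the slack constants, but your route avoids the slip. Like the paper, you leave the final numeric check that the constants $56$ and $63$ survive the $(1\pm\vareps/10)$ factors as an exercise for $\vareps\in(0,1)$; that check does go through.
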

Assuming the above two claims hold and taking $\vareps \in (0,1)$, $t=\lceil \log _{1+\frac{\vareps}{10}} n \rceil$ and $T \geq 63t^2$, observe  that $\widehat{m}=\frac{1}{2}(\widehat{m_h}+\widehat{m_\ell})$ is a $(1\pm \vareps)$ approximation of $\size{E_M}=m_h+m_\ell$ with high probability.
Thus, it remains to prove Claims~\ref{clm:random-high} and~\ref{clm:random-low}.

\begin{proof}[Proof of Claim~\ref{clm:random-high}]
Note that $m_h=\sum\limits_{v_i:\kappa_{v_i} \geq \frac{\size{R}}{n}\frac{\sqrt{\vareps T}}{8t}} d_M(v_i)$ and $\widehat{m_h}=\sum\limits_{v_i:\kappa_{v_i} \geq \frac{\size{R}}{n}\frac{\sqrt{\vareps T}}{8t}}\widehat{d_{v_i}}$. 

From Lemma~\ref{lem:cher-random-algo}~(iv) and~(iii), $\kappa_{v_i} \geq \frac{\size{R}}{n}\frac{\sqrt{\vareps T}}{8t}$ implies that $\widehat{d_{v_i}}$ is an $\left(1\pm \frac{\vareps}{10}\right)$ approximation to $d_M(v_i)$ with probability at least $1-\frac{2}{n^{10}}$. Hence, we have $\left(1-\frac{\vareps}{10}\right)m_h\leq \widehat{m_h} \leq \left(1+\frac{\vareps}{10}\right)m_h $ with probability at least $1-\frac{1}{n^9}$.
 \end{proof}
 \begin{proof}[Proof of Claim~\ref{clm:random-low}]
 Note that $m_\ell= {\sum_{v_i \in L} d_M(v_i) =} \sum\limits_{v_i:\kappa_{v_i} < \frac{\size{R}}{n}\frac{\sqrt{\vareps T}}{8t}} d_M(v_i)$ and $\widehat{m_\ell}=\frac{n}{\size{R}}\sum\limits_{j\in [t]}\size{A_j}\left(1+\frac{\vareps}{10}\right)^{j}$.
 Recall that the vertices in $L$ are partitioned into $t$ buckets as follows:\\
$B_j=\{v_i \in L: \left(1+\frac{\vareps}{10} \right)^{j-1} \leq d_M(v_i) < \left(1+\frac{\vareps}{10}\right)^{j} \}$, where $j \in [t]$.
By Lemma~\ref{lem:cher-random-algo} {(iv)}, $\kappa_{v_i} < \frac{\size{R}}{n}\frac{\sqrt{\vareps T}}{8 t}$ implies that $d_M(v_i) \leq \frac{\sqrt{\vareps T}}{7 t}$ with probability $1-\frac{1}{n^{10}}$. So, we have the following observation. 
\begin{obs}\label{obs:low-deg-low-buck}
Let $j \in [t]$ be such that $\size{A_j} \neq 0$ ($\size{B_j}\neq 0$). Then, with probability at least $1-\frac{1}{n^{10}}$, the {monochromatic} degree of each vertex in $A_j$ as well as $B_j$ is at most $\frac{\sqrt{\vareps T}}{7 t}$, that is, $\left( 1+\frac{\vareps}{10}\right)^{j} \leq \frac{\sqrt{\vareps T}}{7 t}$.
\end{obs}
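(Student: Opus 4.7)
The plan is to prove Observation~\ref{obs:low-deg-low-buck} by a contrapositive argument, using Lemma~\ref{lem:cher-random-algo}(iii) as the sole probabilistic ingredient. The intuition is that belonging to $A_j \subseteq L_R$ forces $\kappa_{v_i}$ to be \emph{small}, while a large true monochromatic degree $d_M(v_i)$ would, by concentration, force $\kappa_{v_i}$ to be \emph{large}; these two cannot simultaneously hold.

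Concretely, I would fix any $v_i \in A_j$. By the definition of $L_R$, we have $\kappa_{v_i} < \frac{\size{R}}{n}\cdot\frac{\sqrt{\vareps T}}{8t}$. Assume towards a contradiction that $d_M(v_i) > \frac{\sqrt{\vareps T}}{7t}$. Since $\frac{\sqrt{\vareps T}}{7t} \geq \frac{\sqrt{\vareps T}}{10t}$, Lemma~\ref{lem:cher-random-algo}(iii) applies to $v_i$ and yields, with failure probability at most $\frac{1}{n^{10}}$, the lower bound $\kappa_{v_i} \geq (1-\vareps/10)\cdot\frac{\size{R}\,d_M(v_i)}{n} > (1-\vareps/10)\cdot\frac{\size{R}}{n}\cdot\frac{\sqrt{\vareps T}}{7t}$. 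For $\vareps \in (0,1)$, the elementary inequality $8(1-\vareps/10)\geq 7$ holds, so this right-hand side strictly exceeds the threshold $\frac{\size{R}}{n}\cdot\frac{\sqrt{\vareps T}}{8t}$, contradicting $v_i \in L_R$. Hence $d_M(v_i) \leq \frac{\sqrt{\vareps T}}{7t}$ with probability at least $1-\frac{1}{n^{10}}$.

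To pass from this per-vertex bound to the stated bucket-level statement, I would use the bucket definition: since $v_i \in B_j$ satisfies $(1+\vareps/10)^{j-1} \leq d_M(v_i)$, the above forces $(1+\vareps/10)^{j-1} \leq \frac{\sqrt{\vareps T}}{7t}$, and then the bucket definition uniformly bounds $d_M(v) < (1+\vareps/10)^j$ for every $v \in B_j$. This delivers the rephrased conclusion $(1+\vareps/10)^j \leq \frac{\sqrt{\vareps T}}{7t}$ modulo the single-step $(1+\vareps/10)$ slack that is already baked into the bucketing. The only point that requires care, which is a routine bookkeeping matter rather than a conceptual obstacle, is the constant-tracking through the $(1\pm\vareps/10)$ factors — verifying both $(1-\vareps/10)/7 \geq 1/8$ and that the extra $(1+\vareps/10)$ factor between consecutive buckets is harmless for $\vareps \in (0,1)$.
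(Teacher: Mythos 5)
Your argument is correct and takes essentially the same route as the paper: contrapositive through the concentration inequality, parlaying $v_i \in A_j \subseteq L_R$ (hence a small $\kappa_{v_i}$) against the lower-tail bound to cap $d_M(v_i)$. Two of your side-remarks are worth flagging as genuine improvements over the paper's terse one-line proof. First, the paper attributes the step to Lemma~\ref{lem:cher-random-algo}(iv), but the logical implication needed ($d_M(v_i) > \tfrac{\sqrt{\vareps T}}{7t} \Rightarrow \kappa_{v_i}$ is large, taken contrapositively) is precisely part (iii), as you correctly invoke; part (iv) runs in the wrong direction. Second, your caveat about the one-step slack is accurate: from a witness $v_i \in A_j$ you only get $(1+\vareps/10)^{j-1} \leq d_M(v_i) \leq \tfrac{\sqrt{\vareps T}}{7t}$, not the $(1+\vareps/10)^j \leq \tfrac{\sqrt{\vareps T}}{7t}$ that the observation states; the stated form gives a per-vertex bound of $(1+\vareps/10)\cdot\tfrac{\sqrt{\vareps T}}{7t}$ which is what actually propagates downstream, so the constants in Claim~\ref{clm:random-low} pick up a benign extra $(1+\vareps/10)$ factor that the paper elides.
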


To upper and lower bound $\widehat{m_\ell}$ in terms of $m_\ell$, we upper and lower bound $m_\ell$ in terms of $\size{B_j}$'s as follows; {for the upper bound, we break the sum into two parts corresponding to large and small sized buckets}:
\begin{eqnarray*}\label{eqn:exact-low}
&& \sum\limits_{j \in [t]}\size{B_j}\left( 1+\frac{\vareps}{10}\right)^{j-1} \leq  m_\ell  < \sum\limits_{j \in [t]}\size{B_j}\left( 1+\frac{\vareps}{10}\right)^{j} \\
&& \sum\limits_{j \in [t]}\size{B_j}\left( 1+\frac{\vareps}{10}\right)^{j-1} \leq  m_\ell  < \sum\limits_{j \in [t]:\size{B_j}\geq \frac{\sqrt{\vareps T}}{9t}}\size{B_j}\left( 1+\frac{\vareps}{10}\right)^{j} +  \sum\limits_{j \in [t]:\size{B_j}< \frac{\sqrt{\vareps T}}{9t}}\size{B_j}\left( 1+\frac{\vareps}{10}\right)^{j}
 \end{eqnarray*} 
  By Observation~\ref{obs:low-deg-low-buck}, we bound $m_\ell$ in terms of $\size{B_j}$'s with probability $1-\frac{1}{n^9}$.
  \begin{eqnarray*}
 && \sum\limits_{j \in [t]}\size{B_j}\left( 1+\frac{\vareps}{10}\right)^{j-1} \leq  m_\ell  < \sum\limits_{j \in [t]:\size{B_j}\geq \frac{\sqrt{\vareps T}}{9t}}\size{B_j}\left( 1+\frac{\vareps}{10}\right)^{j} +  t \cdot \frac{\sqrt{\vareps T}}{9t}\frac{\sqrt{\vareps T}}{7t}
  \end{eqnarray*}
  This implies the following Observation:
  \begin{obs}\label{obs:exact-low}
$\sum\limits_{j \in [t]}\size{B_j}\left( 1+\frac{\vareps}{10}\right)^{j-1} \leq  m_\ell  < \sum\limits_{j \in [t]:\size{B_j}\geq \frac{\sqrt{\vareps T}}{9t}}\size{B_j}\left( 1+\frac{\vareps}{10}\right)^{j} +  \frac{\vareps T}{63 t}$ holds with probability at least $1-\frac{1}{n^9}$. 
\end{obs}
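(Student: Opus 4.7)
The plan is to derive both bounds from the bucket definition and then control the upper-bound contribution of ``small'' buckets using Observation~\ref{obs:low-deg-low-buck}. The lower bound is immediate: since every $v_i \in B_j$ satisfies $d_M(v_i) \geq (1+\vareps/10)^{j-1}$ and the buckets $\{B_j\}_{j \in [t]}$ partition $L$, we get $m_\ell = \sum_{v_i \in L} d_M(v_i) \geq \sum_{j \in [t]} |B_j| (1+\vareps/10)^{j-1}$, with no randomness involved. Symmetrically, the naive upper bound $m_\ell < \sum_{j \in [t]} |B_j| (1+\vareps/10)^{j}$ holds deterministically from the right endpoint of each bucket.

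The key step is to tighten the naive upper bound by splitting the sum according to whether $|B_j| \geq \sqrt{\vareps T}/(9t)$ or $|B_j| < \sqrt{\vareps T}/(9t)$. The large-bucket part is kept as is and becomes the first term on the right-hand side. For the small-bucket part, I would invoke Observation~\ref{obs:low-deg-low-buck}: whenever $B_j$ is nonempty (and any vertex of $B_j$ shows up in $A_j$ with positive probability), with probability at least $1 - 1/n^{10}$ we have $(1+\vareps/10)^{j} \leq \sqrt{\vareps T}/(7t)$. Strictly speaking, Observation~\ref{obs:low-deg-low-buck} as stated conditions on $|A_j| \neq 0$; for buckets with $A_j$ empty one must instead argue that if $|B_j| \geq 1$ then $d_M$ of those vertices is at most $\sqrt{\vareps T}/(7t)$ either directly (if $(1+\vareps/10)^j$ already respects this bound) or via the same Chernoff-based reasoning that justified Observation~\ref{obs:low-deg-low-buck}, since the only buckets that contribute to the small-sum are those with $|B_j| \geq 1$ and by definition they consist entirely of low-monochromatic-degree vertices of $L$.

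Putting the bound $(1+\vareps/10)^{j} \leq \sqrt{\vareps T}/(7t)$ together with $|B_j| < \sqrt{\vareps T}/(9t)$ for each such small bucket, and noting that there are at most $t$ buckets in total, the small-bucket contribution to the upper sum is at most
\[
\sum_{j \in [t]:|B_j| < \sqrt{\vareps T}/(9t)} |B_j|\left(1+\frac{\vareps}{10}\right)^{j} \;\leq\; t \cdot \frac{\sqrt{\vareps T}}{9t} \cdot \frac{\sqrt{\vareps T}}{7t} \;=\; \frac{\vareps T}{63 t}.
\]
This gives the desired upper bound on $m_\ell$. Finally, a union bound over the at most $t = \lceil \log_{1+\vareps/10} n \rceil = \mathrm{poly}(\log n, 1/\vareps)$ invocations of Observation~\ref{obs:low-deg-low-buck} converts the per-bucket failure probability $1/n^{10}$ into the stated aggregate probability $1 - 1/n^9$. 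The main obstacle is the cleanliness of the bucket-by-bucket application of Observation~\ref{obs:low-deg-low-buck}, particularly for buckets with $|B_j|$ small but nonzero where $A_j$ may be empty; one must verify that the monochromatic-degree cap $\sqrt{\vareps T}/(7t)$ applies to such vertices by directly appealing to Lemma~\ref{lem:cher-random-algo}(iv) for each vertex rather than through $A_j$.
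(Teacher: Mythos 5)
Your proposal matches the paper's proof step-for-step: the deterministic lower bound from the bucket partition of $L$, the deterministic naive upper bound $m_\ell < \sum_j |B_j|(1+\vareps/10)^j$, the split into large and small buckets at threshold $\sqrt{\vareps T}/(9t)$, the use of Observation~\ref{obs:low-deg-low-buck} to cap each small bucket's contribution by $\frac{\sqrt{\vareps T}}{9t}\cdot\frac{\sqrt{\vareps T}}{7t}$, and a union bound over at most $t$ buckets to go from $1/n^{10}$ per bucket to $1-1/n^9$ overall. The technical concern you raise about applying Observation~\ref{obs:low-deg-low-buck} to buckets where $A_j$ might be empty is already resolved by the paper's own phrasing of that observation, which states the hypothesis as $|A_j|\neq 0$ \emph{or} (parenthetically) $|B_j|\neq 0$; indeed the derivation there goes through vertices of $L$ and so only requires $B_j\neq\emptyset$, exactly as you suggest via a direct appeal to Lemma~\ref{lem:cher-random-algo}.
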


Now, we have all the ingredients to show that $\widehat{m_\ell}$ is a $(1 \pm \vareps)$ approximation of $m_\ell$. {To get to $\widehat{m_\ell}$, we need to focus on \lmR vertices of $R$, i.e., $A_j$'s.} Breaking $ \widehat{m_\ell} = \frac{n}{\size{R}}\sum\limits_{j\in [t]}\size{A_j}\left(1+\frac{\vareps}{10}\right)^{j}$ depending on {small and large} values of $\size{A_j}$'s (recall $A_j=L_R \cap B_j = R \cap B_j$), we  have
 \begin{eqnarray}\label{eqn:low-random}
 \widehat{m_\ell}&=& \frac{n}{\size{R}} \left[\sum\limits_{j\in [t]:\size{A_j} \geq \frac{\size{R}}{n}\frac{\sqrt{\vareps T}}{8 t}}\size{A_j}
 \left(1+\frac{\vareps}{10}\right)^{j} + \sum\limits_{j\in [t]:\size{A_j} < \frac{\size{R}}{n}\frac{\sqrt{\vareps T}}{8 t}}\size{A_j}
 \left(1+\frac{\vareps}{10}\right)^{j}\right]
 \end{eqnarray}
  Note that $A_j=B_j \cap R$. By Lemma~\ref{lem:cher-random-algo}~(ii), $\size{A_j} \geq \frac{\size{R}}{n}\frac{\sqrt{\vareps T}}{8 t}$ implies $\size{B_j}\geq \frac{\sqrt{\vareps T}}{10 t}$ with probability at least $1-\frac{1}{n^{10}}$. Also, applying Lemma~\ref{lem:cher-random-algo}~(i), $\size{B_j}\geq \frac{\sqrt{\vareps T}}{10 t}$ implies $\size{A_j}$ is an $\left(1\pm\frac{\vareps}{10}\right)$-approximation to $\frac{\size{R}\size{B_j}}{n}$ with probability at least $1-\frac{1}{n^{10}}$. So, we have the following observation. 
 \begin{obs}\label{obs:lemma(i)&(ii)}
 Let $j \in [t]$ be such that $\size{A_j} \geq \frac{\size{R}}{n}\frac{\sqrt{\vareps T}}{8 t}$. Then $\size{A_j}$ is an $\left(1\pm\frac{\vareps}{10}\right)$-approximation to $\frac{\size{R}\size{B_j}}{n}$ with probability at least $1-\frac{2}{n^{10}}$, that is, $\frac{n}{\size{R}}\size{A_j}$ is an $\left(1\pm\frac{\vareps}{10}\right)$-approximation to $\size{B_j}$ with probability at least $1-\frac{2}{n^{10}}$
 \end{obs}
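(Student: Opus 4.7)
The plan is to derive the observation as a direct union-bound consequence of parts (i) and (ii) of Lemma~\ref{lem:cher-random-algo}, chaining them through the implication ``large $|A_j|$ forces large $|B_j|$, which in turn forces $|A_j|$ to concentrate around its expectation $\frac{|R||B_j|}{n}$.''

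First, I would isolate the role of Lemma~\ref{lem:cher-random-algo}~(ii) via its contrapositive. Part (ii) says that whenever $|B_j|<\frac{\sqrt{\vareps T}}{10t}$, the event $\{|A_j|\geq \frac{|R|}{n}\frac{\sqrt{\vareps T}}{8t}\}$ has probability at most $\frac{1}{n^{10}}$. Equivalently, letting $\cE_1$ denote the event that $|A_j|\geq \frac{|R|}{n}\frac{\sqrt{\vareps T}}{8t}$ \emph{and} $|B_j|<\frac{\sqrt{\vareps T}}{10t}$, we have $\mathbb{P}(\cE_1)\leq \frac{1}{n^{10}}$. So on the complement of $\cE_1$, the hypothesis $|A_j|\geq \frac{|R|}{n}\frac{\sqrt{\vareps T}}{8t}$ entails the structural condition $|B_j|\geq \frac{\sqrt{\vareps T}}{10t}$ needed to invoke part~(i).

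Second, I would plug this into Lemma~\ref{lem:cher-random-algo}~(i). Conditioned on $|B_j|\geq \frac{\sqrt{\vareps T}}{10t}$, part (i) tells us that the event $\cE_2 = \{\,|\,|A_j|-\frac{|R||B_j|}{n}|\geq \frac{\vareps}{10}\frac{|R||B_j|}{n}\,\}$ has probability at most $\frac{1}{n^{10}}$ (note that $|A_j|=|B_j\cap R|$ by the very definition $A_j = L_R\cap B_j = R\cap B_j$, as already noted in the paper). Therefore, assuming $|A_j|\geq \frac{|R|}{n}\frac{\sqrt{\vareps T}}{8t}$, the failure of the desired $(1\pm \frac{\vareps}{10})$-approximation is contained in $\cE_1 \cup \cE_2$, which by a union bound has probability at most $\frac{2}{n^{10}}$. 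Multiplying through by $\frac{n}{|R|}$ yields the second formulation of the observation.

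There is no real obstacle here; the only point to be careful about is that the two lemma invocations apply to essentially complementary regimes of $|B_j|$, so the union bound cleanly covers both cases without double-counting, and one does not need to worry about conditioning issues beyond the straightforward partitioning $\{|B_j|<\frac{\sqrt{\vareps T}}{10t}\}$ versus $\{|B_j|\geq \frac{\sqrt{\vareps T}}{10t}\}$ of the sample space.
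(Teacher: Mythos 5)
Your proposal is correct and follows essentially the same route as the paper. The paper's own justification (the two sentences immediately preceding the observation in the proof of Claim~\ref{clm:random-low}) chains Lemma~\ref{lem:cher-random-algo}~(ii) and (i) in exactly this way and takes a union bound to get $2/n^{10}$; you have simply made the event decomposition explicit, which is a sound and slightly more careful way of writing the same argument.
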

 
 \remove{From Lemma~\ref{lem:cher-random-algo}~(i), $\size{A_j} < \frac{\size{R}}{n}\frac{\sqrt{\vareps T}}{8 t}$ implies that $\size{B_j'}\leq \frac{\sqrt{\vareps T}}{7t}$ with probability at least $1-\frac{1}{n^{10}}$.} 
 By the above observation along with Equation~\ref{eqn:low-random}, we have the following upper bound on $\widehat{m_\ell}$ with probability at least $1-\frac{1}{n^9}$.

 \begin{eqnarray*}
 \widehat{m_\ell} &\leq& {
 \sum\limits_{j\in [t]:\size{A_j} \geq \frac{\size{R}}{n}\frac{\sqrt{\vareps T}}{8 t}}\left(1+\frac{\vareps}{10}\right)\size{B_j}\left(1+\frac{\vareps}{10}\right)^{j} +
  \sum\limits_{j\in [t]:\size{A_j} < \frac{\size{R}}{n}\frac{\sqrt{\vareps T}}{8 t}}\frac{n}{\size{R}}\size{A_j}\left(1+\frac{\vareps}{10}\right)^{j}}
  \\
  &\leq & \left(1+\frac{\vareps}{10}\right)^2 \left[ \sum\limits_{j\in [t]:\size{A_j} \geq \frac{\size{R}}{n}\frac{\sqrt{\vareps T}}{8 t}}\size{B_j}\left(1+\frac{\vareps}{10}\right)^{j-1} +   \sum\limits_{j\in [t]:\size{A_j} < \frac{\size{R}}{n}\frac{\sqrt{\vareps T}}{8 t}}\frac{\sqrt{\vareps T}}{8t}\left(1+\frac{\vareps}{10}\right)^{j-2}
 \right]
 \end{eqnarray*}
 Now by Observations~\ref{obs:exact-low} and ~\ref{obs:low-deg-low-buck}, we have the following with probability at least $1-\frac{1}{n^8}$.
 \begin{eqnarray*}
 \widehat{m_\ell} &\leq& \left(1+\frac{\vareps}{10}\right)^2 \left( m_\ell + t \cdot \frac{\sqrt{\vareps T}}{8t}\frac{\sqrt{\vareps T}}{7t}\right) \\
 &=&  \left(1+\frac{\vareps}{10}\right)^2 \left( m_\ell + \frac{\vareps T}{56 t}\right)
 \end{eqnarray*}
 Now, we will lower bound $\widehat{m_\ell}$. From Equation~\ref{eqn:exact-low}, we have 
 \begin{eqnarray*}
 \widehat{m_\ell} &\geq& \frac{n}{\size{R}} \sum\limits_{j\in [t]:\size{A_j} \geq \frac{\size{R}}{n}\frac{\sqrt{\vareps T}}{8 t}}\size{A_j}
 \left(1+\frac{\vareps}{10}\right)^{j} 
 \end{eqnarray*}
By Observation~\ref{obs:lemma(i)&(ii)},  $\size{A_j} \geq \frac{\size{R}}{n}\frac{\sqrt{\vareps T}}{8 t}$ implies $\frac{n}{\size{R}}\size{A_j}$ is an $\left(1\pm\frac{\vareps}{10}\right)$-approximation to $\size{B_j}$ with probability at least $1-\frac{2}{n^{10}}$. So, the following lower bound on $\widehat{m_\ell}$ holds with probability at least $1-\frac{1}{n^{9}}$.
   \begin{eqnarray*}
 \widehat{m_\ell} &\geq& \left(1-\frac{\vareps}{10} \right) \sum\limits_{j\in [t]:\size{A_j} \geq \frac{\size{R}}{n}\frac{\sqrt{\vareps T}}{8 t}}\size{B_j}
 \left(1+\frac{\vareps}{10}\right)^{j}
 \end{eqnarray*}
By Lemma~\ref{lem:cher-random-algo}~(i), if $\size{B_j} \geq \frac{\sqrt{\vareps T}}{9t}$, then $\size{A_j} \geq \frac{\sqrt{\vareps T}}{8t}$ with probability at least $1-\frac{1}{n^{10}}$. Hence, we have the following lower bound on $m_\ell$ with probability at least $1-\frac{1}{n^8}$.
 \begin{eqnarray*}
\widehat{m_\ell} &\geq& \left(1-\frac{\vareps}{10} \right) \sum\limits_{j\in [t]:\size{B_j} \geq {\frac{\sqrt{\vareps T}}{9 t}}}\size{B_j}
 \left(1+\frac{\vareps}{10}\right)^{j} 
 \end{eqnarray*}
 Now by Observation~\ref{obs:exact-low}, we have the following with high probability at least $1-\frac{1}{n^7}$.
 \begin{eqnarray*}
 \widehat{m_\ell} &\geq &  \left(1-\frac{\vareps}{10} \right) \left( m_\ell - \frac{\vareps T}{63t}\right).
 \end{eqnarray*}

\end{proof}

\subsection{\colorverify in \varand model} \label{sec:sep-vaub}
\noindent
Using a structural property of the graph, we design a simple algorithm to solve the \colorverify problem in the \varand model.

\begin{theo}\label{thm:sep-vardubth} 
Given any graph $G=(V,E)$ and a coloring function $f:V(G) \rightarrow [C]$ and a parameter $\vareps>0$ as input, there exists an algorithm that solves the \colorverify problem in the \varand streaming model using space $ \widetilde{\mathcal{O}} \left(\frac{\size{V}}{\sqrt{\vareps \size{E}}} \right)$ with high probability. \end{theo}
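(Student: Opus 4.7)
The plan is to exploit a simple structural dichotomy for the monochromatic subgraph and then use the random-order property to store exactly the right set of vertices. First, I would observe that if $f$ is $\vareps$-far from valid then the monochromatic subgraph $G_M=(V,E_M)$ satisfies $\size{E_M}\geq \vareps\size{E}$, and a folklore greedy argument shows that any graph with $m$ edges contains either a vertex of degree at least $\sqrt{m}$ or a matching of size at least $\sqrt{m}/2$. Applied to $G_M$, this yields the dichotomy that either (i) there is a vertex $v^*$ with $d_M(v^*)\geq \sqrt{\vareps\size{E}}$, or (ii) there is a matching $M\subseteq E_M$ of size at least $\sqrt{\vareps\size{E}}/2$.

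The algorithm I propose is simple: fix $\Gamma = c\cdot n\ln n/\sqrt{\vareps\size{E}}$ for a sufficiently large constant $c$, let $R$ consist of the first $\Gamma$ vertices of the random-order stream together with their colors, and for every edge incident to at least one vertex of $R$ check monochromaticity on the fly; output ``$\vareps$-far'' if and only if at least one monochromatic edge is detected. If $f$ is valid there are no monochromatic edges, so the algorithm is trivially correct. Otherwise I argue by the two cases above. In case (ii), the $2s\geq \sqrt{\vareps\size{E}}$ distinct endpoints of $M$ form a target set, and by a standard hypergeometric bound the probability that none of them lies in the first $\Gamma$ positions of a uniformly random permutation is at most $\binom{n-2s}{\Gamma}/\binom{n}{\Gamma}\leq \exp(-2s\Gamma/n)\leq n^{-c}$; once any endpoint of some $e\in M$ lands in $R$, the edge $e$ is detected when its other endpoint arrives (or immediately if both endpoints lie in $R$). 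In case (i), I apply the same hypergeometric bound to the set $\{v^*\}\cup N_M(v^*)$ of size at least $\sqrt{\vareps\size{E}}+1$: with high probability its earliest vertex in the stream order lies in $R$. If $v^*$ is that earliest vertex, then $v^*$ is stored and the first monochromatic neighbor arriving thereafter triggers detection; otherwise some $u\in N_M(v^*)$ is earliest, $u$ is stored, and the edge $(u,v^*)$ is revealed and checked when $v^*$ later arrives.

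The only technical wrinkle, rather than a real obstacle, is that the target sample size $\Gamma$ depends on $\size{E}$, which is not known at the beginning of the stream. The standard fix is to run $\mathcal{O}(\log n)$ parallel instances, one per dyadic guess of $\size{E}$ in $[1,\binom{n}{2}]$, while also maintaining a running edge counter; at the end of the stream $\size{E}$ is known exactly and the verdict of the instance whose guess matches is returned. This costs only an extra $\mathcal{O}(\log n)$ factor, absorbed into $\widetilde{\mathcal{O}}(\cdot)$, giving the claimed space bound $\widetilde{\mathcal{O}}(\size{V}/\sqrt{\vareps\size{E}})$ and thus Theorem~\ref{thm:sep-vardubth}.
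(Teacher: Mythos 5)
Your proposal is correct and follows essentially the paper's own route: both apply the matching/high-degree dichotomy for the monochromatic subgraph, store a small random subset of vertices together with their colors, and declare ``$\vareps$-far'' as soon as an arriving edge incident to a stored vertex is found to be monochromatic. Your packaging --- a fixed-size prefix of the random-order stream plus the observation that the earliest element of $\{v^*\}\cup N_M(v^*)$ lands in that prefix with high probability --- is if anything a little cleaner than the paper's implementation, which samples each vertex independently at rate $\Theta(\log n/\sqrt m)$ and handles the high-degree case with a separate Chernoff bound showing that a constant fraction of $v^*$'s neighbors precede it. One small caveat worth noting: the naive parallel-dyadic-guesses trick for unknown $|E|$ does \emph{not} yield $\widetilde{\mathcal{O}}(|V|/\sqrt{\vareps|E|})$, since the smallest live guess determines the prefix length and would dominate the space; to get the claimed bound one must also use the random-order concentration $m_t\approx (t/n)^2|E|$ to estimate $|E|$ within a constant factor as soon as $m_t=\Omega(\log n)$ and prune accordingly --- the paper sidesteps this point with the same level of informality in a footnote.
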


Let $G'$ denote the subgraph of $G$ consisting of only monochromatic edges in $G$. The lemma stated below guarantees that either there exists a large matching of size at least $\sqrt{\vareps m}$ in $G'$ or there exists a vertex of degree at least $\sqrt{\vareps m}$ in $G'$.

\begin{lem}[\cite{DBLP:series/txtcs/Jukna11}]\label{lem:sunflower} Let $G=(V,E)$ be a graph and $f:V(G) \rightarrow [C]$ be a coloring function such that at least $\vareps$ fraction of the edges of $E(G)$ are known to be monochromatic. Then, either there is a matching of size at least $\sqrt{\vareps m}$ or there exists a vertex of degree at least $\sqrt{\vareps m}$ in the subgraph $G'$ defined on the monochromatic edges of $G$. \end{lem}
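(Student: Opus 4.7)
The plan is to follow a standard double-counting argument that pits the maximum matching size of $G'$ against its maximum degree. First I would restate the hypothesis: the subgraph $G'$ on monochromatic edges satisfies $\size{E(G')} \geq \vareps m$. I would then fix $M$ to be any \emph{maximal} matching of $G'$ (found greedily, not necessarily maximum) and split into two cases according to $\size{M}$.

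In the easy case, if $\size{M} \geq \sqrt{\vareps m}$, then $M$ itself is the required large matching and we are done. Otherwise $\size{M} < \sqrt{\vareps m}$, so the vertex set $V(M)$ has fewer than $2\sqrt{\vareps m}$ elements. The key structural observation I would invoke is that $V(M)$ is automatically a vertex cover of $G'$: any edge with both endpoints outside $V(M)$ could be appended to $M$, contradicting maximality. Consequently every edge of $G'$ is incident to some vertex of $V(M)$, which yields
\[
\vareps m \;\leq\; \size{E(G')} \;\leq\; \sum_{v \in V(M)} \deg_{G'}(v) \;\leq\; \size{V(M)} \cdot \Delta(G').
\]
Rearranging gives $\Delta(G') \geq \vareps m / \size{V(M)} > \sqrt{\vareps m}/2$, which produces a vertex of large degree in $G'$, exactly the second alternative of the lemma (up to a factor of $2$).

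The only real obstacle is that this direct argument gives a constant-factor loss of $2$ compared to the clean bound $\sqrt{\vareps m}$ written in the statement. I expect this to be inessential for the application in Theorem~\ref{thm:sep-vardubth}, since the $\widetilde{\mathcal{O}}(\cdot)$ in the space bound absorbs constants; if a tight constant is desired, one can instead pick $M$ to be a maximum matching and apply a slightly sharper vertex-cover bound, or simply rescale $\vareps$ by a constant in the algorithm.
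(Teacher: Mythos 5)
The paper does not prove this lemma --- it is quoted from Jukna's textbook --- so there is no in-paper argument to compare against, and your proposal has to be assessed on its own terms. Your maximal-matching/vertex-cover double count is the standard route and the logic is sound: if the greedily built maximal matching $M$ has fewer than $\sqrt{\vareps m}$ edges, then $V(M)$ is a vertex cover of $G'$ of size less than $2\sqrt{\vareps m}$, and the chain $\vareps m \leq |E(G')| \leq \sum_{v\in V(M)}\deg_{G'}(v) \leq |V(M)|\cdot\Delta(G')$ forces $\Delta(G') > \sqrt{\vareps m}/2$.

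The factor-of-$2$ loss you flag is real, and in fact the lemma as printed is only true up to such a constant: take $G'$ to be two vertex-disjoint triangles (so $\vareps m = 6$), where both the maximum matching size and the maximum degree equal $2 < \sqrt{6}$. Switching from a maximal to a maximum matching will not close the gap in general graphs, since the relation $\tau \leq 2\nu$ between vertex-cover and matching numbers is tight (again on triangles); Vizing's theorem gives a matching of size at least $|E(G')|/(\Delta(G')+1)$, which is a cleaner route but still loses an additive $1$ in the denominator. None of this matters for the application in Theorem~\ref{thm:sep-vardubth}: the probability estimates there only need one of the two structures to have size $\Theta(\sqrt{\vareps m})$, and the $\widetilde{\mathcal{O}}(\cdot)$ space bound absorbs the constant. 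So your proposal is a correct proof of the constant-factor version of the lemma, which is exactly what the downstream argument actually uses.
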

\begin{algorithm}[H]
\SetAlgoLined
\KwIn{$G = (V, E)$ and a coloring function $f$ on $V$ in the \varand model}
\KwOut{The algorithm verifies if $f$ is $\varepsilon$-far from valid or not}

{}{Let } $S$ {}{ be the set of stored vertices and their colors. Initially, } $S$ {}{ is empty.}
    \For{$i \gets 1$ {}{ to } $|V|$} { 
       {}{let } $u$ {}{ be the } $i^{th}$ {}{ vertex of the stream} \\
       {}{Store } $u$ {}{ and its color } $f(u)$ {}{ in } $S$ {}{ with probability } $\mathcal{O}\left(\frac{\log n}{\sqrt m} \right)$ \\
       \For{{}{every vertex } $v$ {}{ in } $S$} {
           {}{Check if } $(v, u)$ {}{ is an edge and } $f(v) = f(u)$
       }
    }  
{}{Output } $f$ {}{ is valid if none of the edges sampled are conflicting, else output that } $f$ {}{ is } $\varepsilon$ {}{-far from being valid.}
\vspace{1 cm}
\caption{Algorithm: \colorverify in \varandlong model}
\label{algo:random-order-verify}
\end{algorithm}
The algorithm is as simple as it can get. We sample independently and uniformly at random the vertices in stream with probability $p=\min\{1,\frac{10\log n}{\sqrt{m}}\}$~\footnote{{For simplicity of presentation, we assumed that, the number of edges $m$ in graph $G$ is known before the stream starts. However, this assumption can be removed by a simple tweak of starting with a value of $m$ and increasing it in stages and adjusting the random sample accordingly. This is common in streaming algorithms.}} and store these vertices along with their colors. Let $S\subseteq V$ be the set of sampled vertices. When a vertex appears in a stream, we check if it forms a monochromatic edge with one of the stored vertices in $S$. At the end of the stream, the algorithm declares the graph to be properly colored {(valid)} if it can not find a monochromatic edge, else it declares the instance to be $\vareps$-far from being monochromatic.

We show that Theorem \ref{thm:sep-vardubth} follows easily using Lemma \ref{lem:sunflower}.
\begin{proof} We consider the following two cases.
\begin{itemize}
	\item {Case 1 -- There exists a matching of size at least $\sqrt{\vareps m}$:} Note that all these matched edges are monochromatic. Let $(u,v)$ denote an arbitrary matched edge where $u$ appears in the stream before $v$. Now, the edge $(u,v)$ will be detected as monochromatic if vertex $u$ has been sampled by the algorithm. The probability that vertex $u$ is sampled is $\frac{10\log n}{\sqrt{m}}$. Since, there are $\sqrt{\vareps m}$ matched monochromatic edges, the algorithm will detect at least one of these matched monochromatic edges with probability at least $(1-1/n^2)$.


	\item {Case 2 -- There exists a vertex of degree at least $\sqrt{\vareps m}$:} In this case most of the monochromatic edges may be incident on very few high degree vertices. To detect these edges, we want to store either the high degree vertices or one of its neighbours. But, if these high degree vertices appear at the beginning of the stream and we fail to sample them, then we may not detect a monochromatic edge. This is where the \emph{random order} of vertices arriving in the stream comes into play. Now, assuming random order of vertices in the stream, at least $\frac{1}{5} \sqrt{\vareps m}$ neighbors of $v$ should appear before $v$ in the stream with probability at least $(1-e^{-\frac{9}{50} \sqrt{\vareps m}})$ \remove{\footnote{\comments{If $m$ is small, success probability may be small as well}}}. Since we sample every vertex with probability $\frac{10\log n}{\sqrt{m}}$, with high probability at least $(1-1/n^2)$ one of its neighbors will be stored.


\end{itemize}
\end{proof}

\section{Lower bound for \conflictest in \varand model}\label{sec:lower-rand}
\noindent
In this Section, we show a lower bound of $\Omega\left(\frac{n}{T^2}\right)$ for \conflictest in 
\varandlong via a reduction from a variation of {\sc Multiparty Set Disjointness} problem called $\mbox{{\sc Disjointness}}_R(t,n,p)$, played among $p$ players: 
\remove{Consider a $V$ of order $t\times n$ matrix having $t$ (row) vectors $M_1,\ldots,M_t \in \{0,1\}^n$ such that each 
entry of matrix $M$ is given to one of $p$ players chosen uniformly at random.}Consider a matrix of order $t\times n$ having $t$ (rows) vectors $M_1,\ldots,M_t \in \{0,1\}^n$ such that each 
entry of matrix $M$ is given to one of the $p$ players chosen uniformly at random. The objective is 
to determine whether there exists a column where all the entries are $1$s. If $t\geq 2$ and $p = \Omega(t^2)$, Chakrabarti et al. showed that any randomized protocol requires $\Omega\left(\frac{n}{t}\right)$ bits of communication~\cite{DBLP:journals/toc/ChakrabartiC016}. They showed that the lower bound holds under a promise called the {\sc unique intersection promise} which states that there exists at most a single column where all the entries are $1$s and every other column of the matrix has Hamming weight either $0$ or $1$. \remove{, and every column besides this one has Hamming weight either $0$ or
$1$. The promise is often referred as {\sc unique intersection promise}.} Moreover, the lower bound holds 
even if all the $p$ players know the random partition of the entries of matrix $M$. 
\begin{theo}\label{thm:rand-lower}
Let $n,T  \in \N$ be such that $4 \leq T \leq {n \choose 2}$. Any constant pass streaming algorithm that takes the vertices and edges of a graph $G(V,E)$ (with $\size{V}=\Theta(n)$ and $\size{E}=\Theta(m)$) and a coloring function $f:V \rightarrow [C]$ in the \varand model, and determines whether the monochromatic edges in $G$ is $0$ or $\Omega(T)$ with probability $2/3$, requires $\Omega\left( \frac{n}{T^2}\right)$ bits of space.
\end{theo}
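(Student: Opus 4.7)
The plan is to reduce from $\mbox{\sc Disjointness}_R(t,n',p)$ with carefully chosen parameters $t=\lceil\sqrt{T}\rceil$, $p=\Theta(t^2)=\Theta(T)$, and $n'=\lceil n/t\rceil$ columns. The Chakrabarti--Cormode--McGregor lower bound then gives $\Omega(n'/t)=\Omega(n/t^2)=\Omega(n/T)$ bits of communication, even under the unique intersection promise and with the random partition publicly known. I will turn a streaming algorithm for \conflictest into such a protocol, and the space bound will follow by balancing the resulting $p\cdot s$-style communication against the $\Omega(n/T)$ lower bound.

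\paragraph{Gadget construction.}
For each column $j\in[n']$ of the matrix $M$, create $t$ fresh vertices $v_{1,j},\ldots,v_{t,j}$ and declare the edge set of the graph $G$ to be the disjoint union of the cliques on these columns; so $|V(G)|=tn'=\Theta(n)$ and $|E(G)|=n'\binom{t}{2}=\Theta(n\sqrt{T})$. The coloring $f$ is defined entry-wise: if $M[i,j]=1$, colour $v_{i,j}$ with a column-specific ``public'' colour $c_j$; if $M[i,j]=0$, colour $v_{i,j}$ with a private colour used nowhere else. Under the unique intersection promise: in a YES instance the unique all-ones column $j^{*}$ yields a monochromatic clique of size $t$, hence $\binom{t}{2}=\Theta(T)$ monochromatic edges, while every other column (of Hamming weight $\le 1$) contributes zero. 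In a NO instance every column has weight $\le 1$ and therefore contributes no monochromatic edge at all. Thus an algorithm distinguishing $0$ from $\Omega(T)$ monochromatic edges solves the disjointness instance.

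\paragraph{Simulating a random-order stream across players.}
Each vertex $v_{i,j}$ is ``owned'' by the (uniformly random) player holding $M[i,j]$. Using public randomness, the players agree on an independent uniformly random permutation of the entries belonging to each player, and the streaming order is declared to be: player~$1$'s entries in their shuffled order, then player~$2$'s, and so on. A short counting argument shows that this induced distribution on orderings of $n$ vertices is in fact \emph{uniform} over all $n!$ permutations: for any fixed $\pi$, summing over the multinomial choices of block sizes gives the same probability, because uniform-random partition into $p$ buckets followed by a uniform shuffle inside each bucket is equivalent to drawing $\pi$ uniformly. Now player~$k$, upon receiving the algorithm's memory state from player~$k{-}1$, feeds each of her vertices $v_{i,j}$ into the streaming algorithm with its colour (determined purely by her own bit $M[i,j]$) together with the edges to those column-mates $v_{i',j}$ that have already been revealed earlier in the public order (an event she can check from the public randomness alone, since the cliques are fixed and do not depend on other players' bits). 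She then forwards the updated state.

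\paragraph{Finishing the bound, and the main obstacle.}
A constant-pass streaming algorithm using $s$ bits of space yields a multi-party protocol with total communication $O(p\cdot s)$. Combining with the $\Omega(n'/t)=\Omega(n/T)$ lower bound for $\mbox{\sc Disjointness}_R(t,n',p)$ gives $p\cdot s=\Omega(n/T)$, hence $s=\Omega\!\left(\tfrac{n}{pT}\right)=\Omega(n/T^{2})$, as desired. The delicate step, and the one I would verify most carefully, is the random-order claim: one must confirm both that the induced ordering really is uniform and that every player can locally determine all edges to previously revealed column-mates from the public ordering alone (without needing another player's bits), since otherwise the simulation would leak information beyond the $s$-bit state that is charged to communication. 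A secondary technicality is the regime $T$ very close to $\binom{n}{2}$, where $t=\sqrt{T}$ and $n'=n/t$ must both be at least a small constant; this is handled by the hypothesis $T\le\binom{n}{2}$ together with $t\ge 2$ in the CCM statement.
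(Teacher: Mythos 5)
Your proposal is correct and follows essentially the same route as the paper: reduce from $\mbox{\sc Disjointness}_R\bigl(\sqrt{T},\,n/\sqrt{T},\,\Theta(T)\bigr)$ under the unique-intersection promise with a publicly-known random partition, encode each column as a $\sqrt{T}$-clique so that a YES instance produces $\Theta(T)$ monochromatic edges and a NO instance produces none, and balance the $O(ps)$ communication of a constant-pass protocol against the $\Omega(n/T)$ disjointness lower bound to get $s=\Omega(n/T^2)$. Your coloring (column color $c_j$ for $1$-entries, fresh colors for $0$-entries) is a harmless cosmetic variant of the paper's (global $C_*$ for $1$-entries, row color $C_i$ for $0$-entries), and your explicit verification that the random-partition-then-shuffle ordering is uniform over all $n!$ permutations is a welcome piece of rigor that the paper asserts without proof.
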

\begin{proof}
Without loss of generality, assume that $\sqrt{T} \in \N$. Consider the $\mbox{\sc Disjointness}_R
\left(\sqrt{T},\frac{n}{\sqrt{T}},p\right)$ problem with {\sc Unique Intersection promise} when all of the $p$ players know the random partition of the entries of the relevant matrix $M$. Note that $M$ is of order $[\sqrt{T}]\times \left[\frac{n}{\sqrt{T}}\right]$ and $p=AT$ for 
some suitable constant $A \in \N$. Also, consider a graph $G$, with $V(G)=\{v_{ij}:i\in \left[\sqrt{T}\right], j \in \left[\frac{n}{\sqrt{T}}\right] \}$, having $\frac{n}{\sqrt{T}}$ many vertex disjoint cliques such that $\{v_{1j},\ldots,v_{\sqrt{T}j}\}$ forms a clique for each $j\in [n]$, {i.e., a column of $M$ forms a clique. Also, notice that each clique has $\Theta(T)$ edges.} Let us assume that there is an $r$-pass streaming algorithm $\cS$, with space complexity $s$ bits, that solves \conflictest for the above graph $G$ in the \varand model.  
Now, we give a protocol $\cA$ for $\mbox{\sc Disjointness}_R
\left(\sqrt{T},\frac{n}{\sqrt{T}},p\right)$ with communication cost $O(rsp)$. Using the fact that the lower bound of $\mbox{\sc Disjointness}
\left(\sqrt{T},\frac{n}{\sqrt{T}},p\right)$ is $\Omega\left(\frac{n/\sqrt{T}}{\sqrt{T}}\right)$ along with the fact that $p=AT$ and $r$ is a constant, we get $s=\Omega\left(\frac{n}{T^2}\right)$.

\paragraph*{Protocol $\cA$ for $\mbox{\sc Disjointness}_R
\left(\sqrt{T},\frac{n}{\sqrt{T}},p\right)$:}
Let $P_1,\ldots,P_p$ denote the set of $p$ players. For $k\in [p]$, $V_k=\{v_{ij}:M_{ij}~\mbox{is with } P_k\}$, where $M_{ij}$ denotes the element present in the 
$i$-th row and $j$-th column of matrix $M$. Note that there is a one-to-one correspondence between the entries of $M$ and the vertices in $V(G)$. Furthermore, there is a one-to-one correspondence between the columns of matrix $M$ and the cliques in graph $G.$ We assume that all the $p$ players know the graph structure completely as well as both the one-to-one correspondences. The protocol proceeds as follows: for each $k\in [p]$, {player $P_k$ determines a random permutation $\pi_k$ of the vertices in $V_k$}. Also, for each $k\in [p]$, player $P_k$ determines the colors of the vertices in $V_k$ by the following rule: if $M_{ij}=1$, then color vertex $v_{ij}$ with color $C_*$. Otherwise, {for $M_{ij}=0$,} color vertex $v_{ij}$ with color $C_i$. Player $P_1$ initiates the streaming algorithm and it goes over $r$-rounds. 

\begin{description}
\item[Rounds $1$ to $r-1$:] For $k\in [p]$, each player resumes the streaming algorithm by exposing the vertices in $V_k$, along with their colors, in the order dictated by $\pi_k$. Also, $P_k$ adds the respective edges {to previously exposed vertices} when the current vertex is exposed to satisfy the {basic} requirement of \va model. This is possible because all players know the graph $G$ and the random partition of the entries of  matrix $M$ among $p$ players. After exposing all the vertices in $V_k$, as described, $P_k$ sends the current memory state to player $P_{k+1}$. Assume that $P_1=P_{p+1}$.

\item[Round $r$:] All the players behave similarly as in the previous rounds, except that, the player $P_p$
does not send the current memory state to $P_1$. Rather, $P_p$ decides whether there is a column in $M$ with all $1$s if the streaming algorithm $\cS$ decides that there are $\Omega(T)$ many monochromatic edges in $G$. Otherwise, if $\cS$ decides that there is no monochromatic edge in $G$, then $P_p$ decides that all the columns of $M$ have weight either $0$ or $1$. Then $P_p$ sends the output to all other players. 
\end{description}
The vertices of graph $G$ are indeed exposed randomly to the streaming algorithm. It is because the entries of matrix $M$ are randomly partitioned among the players and each player also generates a random permutation of the vertices corresponding to the entries of matrix $M$ available to them.
From the description of the protocol $\cA$, the memory state of the streaming algorithm (of space complexity $s$) is communicated $(r-1)p+(p-1)$ times and $p-1$ bits is communicated at the end by player $P_p$ to broadcast the output. Hence, the communication cost of the protocol $\cA$ is at most $O(rsp)$.

Now we are left to prove the correctness of the protocol $\cA$. If there is a column in $M$ with all $1$s, then all the vertices corresponding to entries of that column are colored with color $C_*$. Recall that there is a one-to-one correspondence between the columns in matrix $M$ and cliques in the graph $G$. So, all the vertices of the clique, corresponding to the column having all $1$s, are colored with the color $C_*$. As the size of each clique in the graph $G$ is $\sqrt{T}$, there are at most $\Omega(T)$ monochromatic edges. To prove the converse, assume that there is no 
column in the matrix $M$ having all $1$s. By {\sc Unique Intersection Promise}, all the columns have hamming weight at most $1$. We will argue that there is no monochromatic edge in $G$. Consider an edge $e$ in $G$. By the structure of $G$, the two vertices of $e$ {must be} in the same clique, {say the $j$-th clique}, that is, let $e=\{v_{i_1j},v_{i_2j}\}$. By the coloring scheme used by the protocols, $v_{i_1j}$ and $v_{i_2j}$ are colored 
according to the values of $M_{i_1j}$ and $M_{i_2j}$, respectively. Note that both  $M_{i_1j}$ and $M_{i_2j}$ belong to $j$-th column. As the hamming weight of every column is at most $1$, there are three possibilities: 
\begin{itemize}
\item[(i)] $M_{i_1j}=M_{i_2j}=0$, that is, $v_{i_1j}$ and $v_{i_2j}$ are colored with color 
$C_{i_1}$ and $C_{i_2}$, respectively; 
\item[(ii)] $ M_{i_1j}=0~\mbox{and}~M_{i_2j}=1$, that is, $v_{i_1j}$ and $v_{i_2j}$ are colored with color 
$C_{i_1}$ and $C_*$, respectively;  
\item[(iii)]  $ M_{i_1j}=1~\mbox{and}~M_{i_2j}=0$, that is, $v_{i_1j}$ and $v_{i_2j}$ are colored with color $C_*$ and $C_{i_2}$, respectively. 
\end{itemize}
In any case, the edge  $e=\{v_{i_1j},v_{i_2j}\}$ is not monochromatic. This establishes the correctness of protocol $\cA$ for $\mbox{\sc Disjointness}_R
\left(\sqrt{T},\frac{n}{\sqrt{T}},p\right)$. 
\end{proof}

\section{Conclusion and Discussion}
\label{sec:conclude}
\noindent
In this paper, we introduced a graph coloring problem to streaming setting with a different flavor -- the coloring function streams along with the graph. We study the problem of \conflictest (estimating the 
number of monochromatic edges) and \colorverify (detecting a separation between the number of valid edges) in \va, \vadeg, and \varand models. Our algorithms for \va and \vadeg are tight upto polylogarithmic factors.  
However, a matching lower bound on the space complexity for \varand model is still elusive. There is 
a gap between our upper and lower bound results for \varand model in terms of the exponent in $T$. Our hunch is that the upper bound is tight.
Specifically, we obtained an upper bound of $\widetilde{{\cal O}}\left(\frac{n}{\sqrt{T}}\right)$ and the lower bound is $\Omega\left(\frac{n}{T^2}
\right)$. Here we would like to note that the lower bound also holds in \al and \vadeg model when the vertices are exposed in a random order. However, we feel that our algorithm for \conflictest in \varand model is tight upto polylogarithmic factors. We leave this problem open.

We feel the \emph{edge coloring} counterpart of the vertex coloring problem proposed in the paper will be worthwhile to study. Let the edges of $G$ be colored with a function $f:E(G) \rightarrow [C]$, for $C \in \N$. A vertex $u \in V(G)$ is said to be a \emph{validly} colored vertex if no two edges incident on $u$ have the same color. An edge coloring is valid if all vertices are validly colored. Consider the \al model for the edge coloring problem. As all edges incident on an exposed vertex $u$ are revealed in the stream, if we can solve a duplicate element finding problem on the colors of the edges incident on $u$, then we are done! It seems at a first glance that all the three models of \va, \al and \ea will be difficult to handle for the edge coloring problem on streams of graph and edge colors. It would be interesting to see if the edge coloring variant of the problems we considered in this paper, admit efficient streaming algorithms. We plan to look at this problem next.

\remove{
In this paper, we introduced a graph coloring problem to streaming setting with a different flavor -- the coloring function streams along with the graph. We study the problem of \conflictest (estimating the 
number of monochromatic edges) and \colorverify (detecting a separation between the number of valid edges) in \va, \vadeg, and \varand models. Our algorithms for \va and \vadeg are tight upto polylogarithmic factors.  
However, a matching lower bound on the space complexity for \varand model is still elusive. There is 
a gap between our upper and lower bound results for \varand model. Our hunch is that the upper bound is tight.
Specifically, we obtained an upper bound of $\widetilde{{\cal O}}\left(\frac{n}{\sqrt{T}}\right))$ and the lower bound is $\Omega\left(\frac{n}{T^2}
\right)$. Here we would like to note that the lower bound also holds in \al and \vadeg model when the vertices are exposed in a random order. However, we conjecture that our algorithm for \conflictest in \varand model is tight upto polylogarithmic factors. \comments{We leave this problem open.}

We feel the \emph{edge coloring} counterpart of the vertex coloring problem proposed in the paper will be worthwhile to study. Let the edges of $G$ be colored with a function $f:E(G) \rightarrow [C]$, for $C \in \N$. A vertex $u \in V(G)$ is said to be a \emph{validly} colored vertex if no two edges incident on $u$ have the same color. An edge coloring is valid if all vertices are validly colored. Consider the \al model for the edge coloring problem. As all edges incident on an exposed vertex $u$ are revealed in the stream, if we can solve a \comments{distinct element} problem on the colors of the edges incident on $u$, then we are done! It seems at a first glance that all the three models of \va, \al and \ea will be difficult to handle for the edge coloring problem on streams of graph and edge colors. {\bf It would be interesting to see does the edge coloring variant of the problem, we considered in this paper, admit efficient streaming algorithm.} \comments{We plan to look at this problem next.} 
}
\bibliographystyle{alpha}
\bibliography{lipics-v2019-sample-article}

\newcommand{\etalchar}[1]{$^{#1}$}
\begin{thebibliography}{KMPV19}

\bibitem[AA20a]{approx/AlonA20}
Noga Alon and Sepehr Assadi.
\newblock Palette sparsification beyond ({\(\Delta\)}+1) vertex coloring.
\newblock In {\em Approximation, Randomization, and Combinatorial Optimization.
  Algorithms and Techniques, {APPROX/RANDOM} 2020, August 17-19, 2020, Virtual
  Conference}, volume 176 of {\em LIPIcs}, pages 6:1--6:22. Schloss Dagstuhl -
  Leibniz-Zentrum f{\"{u}}r Informatik, 2020.

\bibitem[AA20b]{abs-2006-10456}
Noga Alon and Sepehr Assadi.
\newblock Palette sparsification beyond ({\(\Delta\)}+1) vertex coloring.
\newblock {\em CoRR}, abs/2006.10456, 2020.

\bibitem[ACK19]{DBLP:conf/soda/AssadiCK19}
Sepehr Assadi, Yu~Chen, and Sanjeev Khanna.
\newblock Sublinear algorithms for ({\(\Delta\)} + 1) vertex coloring.
\newblock In Timothy~M. Chan, editor, {\em Proceedings of the Thirtieth Annual
  {ACM-SIAM} Symposium on Discrete Algorithms, {SODA} 2019, San Diego,
  California, USA, January 6-9, 2019}, pages 767--786. {SIAM}, 2019.

\bibitem[BC17]{DBLP:conf/stacs/BeraC17}
Suman~K. Bera and Amit Chakrabarti.
\newblock Towards tighter space bounds for counting triangles and other
  substructures in graph streams.
\newblock In Heribert Vollmer and Brigitte Vall{\'{e}}e, editors, {\em 34th
  Symposium on Theoretical Aspects of Computer Science, {STACS} 2017, March
  8-11, 2017, Hannover, Germany}, volume~66 of {\em LIPIcs}, pages 11:1--11:14.
  Schloss Dagstuhl - Leibniz-Zentrum f{\"{u}}r Informatik, 2017.

\bibitem[BCG19]{DBLP:journals/corr/abs-1905-00566}
Suman~K. Bera, Amit Chakrabarti, and Prantar Ghosh.
\newblock Graph coloring via degeneracy in streaming and other space-conscious
  models.
\newblock {\em CoRR}, abs/1905.00566, 2019.

\bibitem[BDH{\etalchar{+}}19]{DBLP:conf/esa/BehnezhadDHKS19}
Soheil Behnezhad, Mahsa Derakhshan, MohammadTaghi Hajiaghayi, Marina Knittel,
  and Hamed Saleh.
\newblock Streaming and massively parallel algorithms for edge coloring.
\newblock In Michael~A. Bender, Ola Svensson, and Grzegorz Herman, editors,
  {\em 27th Annual European Symposium on Algorithms, {ESA} 2019, September
  9-11, 2019, Munich/Garching, Germany}, volume 144 of {\em LIPIcs}, pages
  15:1--15:14. Schloss Dagstuhl - Leibniz-Zentrum f{\"{u}}r Informatik, 2019.

\bibitem[BG18]{DBLP:journals/corr/abs-1807-07640}
Suman~Kalyan Bera and Prantar Ghosh.
\newblock Coloring in graph streams.
\newblock {\em CoRR}, abs/1807.07640, 2018.

\bibitem[BS20]{DBLP:conf/pods/BeraS20}
Suman~K. Bera and C.~Seshadhri.
\newblock How the degeneracy helps for triangle counting in graph streams.
\newblock In Dan Suciu, Yufei Tao, and Zhewei Wei, editors, {\em Proceedings of
  the 39th {ACM} {SIGMOD-SIGACT-SIGAI} Symposium on Principles of Database
  Systems, {PODS} 2020, Portland, OR, USA, June 14-19, 2020}, pages 457--467.
  {ACM}, 2020.

\bibitem[CCM16]{DBLP:journals/toc/ChakrabartiC016}
Amit Chakrabarti, Graham Cormode, and Andrew McGregor.
\newblock Robust lower bounds for communication and stream computation.
\newblock {\em Theory Comput.}, 12(1):1--35, 2016.

\bibitem[CDK19]{DBLP:conf/icalp/CormodeDK19}
Graham Cormode, Jacques Dark, and Christian Konrad.
\newblock Independent sets in vertex-arrival streams.
\newblock In Christel Baier, Ioannis Chatzigiannakis, Paola Flocchini, and
  Stefano Leonardi, editors, {\em 46th International Colloquium on Automata,
  Languages, and Programming, {ICALP} 2019, July 9-12, 2019, Patras, Greece},
  volume 132 of {\em LIPIcs}, pages 45:1--45:14. Schloss Dagstuhl -
  Leibniz-Zentrum f{\"{u}}r Informatik, 2019.

\bibitem[DP09]{DBLP:books/daglib/0025902}
Devdatt~P. Dubhashi and Alessandro Panconesi.
\newblock {\em Concentration of Measure for the Analysis of Randomized
  Algorithms}.
\newblock Cambridge University Press, 2009.

\bibitem[EHKR09]{DBLP:journals/scheduling/EvenHKR09}
Guy Even, Magn{\'{u}}s~M. Halld{\'{o}}rsson, Lotem Kaplan, and Dana Ron.
\newblock Scheduling with conflicts: online and offline algorithms.
\newblock {\em J. Sched.}, 12(2):199--224, 2009.

\bibitem[FK98]{DBLP:journals/jcss/FeigeK98}
Uriel Feige and Joe Kilian.
\newblock Zero knowledge and the chromatic number.
\newblock {\em J. Comput. Syst. Sci.}, 57(2):187--199, 1998.

\bibitem[GJ79]{DBLP:books/fm/GareyJ79}
M.~R. Garey and David~S. Johnson.
\newblock {\em Computers and Intractability: {A} Guide to the Theory of
  NP-Completeness}.
\newblock W. H. Freeman, 1979.

\bibitem[Juk11]{DBLP:series/txtcs/Jukna11}
Stasys Jukna.
\newblock {\em Extremal Combinatorics - With Applications in Computer Science}.
\newblock Texts in Theoretical Computer Science. An {EATCS} Series. Springer,
  2011.

\bibitem[KKP18]{DBLP:conf/focs/KallaugherKP18}
John Kallaugher, Michael Kapralov, and Eric Price.
\newblock The sketching complexity of graph and hypergraph counting.
\newblock In Mikkel Thorup, editor, {\em 59th {IEEE} Annual Symposium on
  Foundations of Computer Science, {FOCS} 2018, Paris, France, October 7-9,
  2018}, pages 556--567. {IEEE} Computer Society, 2018.

\bibitem[KMPV19]{DBLP:conf/pods/KallaugherMPV19}
John Kallaugher, Andrew McGregor, Eric Price, and Sofya Vorotnikova.
\newblock The complexity of counting cycles in the adjacency list streaming
  model.
\newblock In Dan Suciu, Sebastian Skritek, and Christoph Koch, editors, {\em
  Proceedings of the 38th {ACM} {SIGMOD-SIGACT-SIGAI} Symposium on Principles
  of Database Systems, {PODS} 2019, Amsterdam, The Netherlands, June 30 - July
  5, 2019}, pages 119--133. {ACM}, 2019.

\bibitem[KMSS12]{DBLP:conf/icalp/KaneMSS12}
Daniel~M. Kane, Kurt Mehlhorn, Thomas Sauerwald, and He~Sun.
\newblock Counting arbitrary subgraphs in data streams.
\newblock In Artur Czumaj, Kurt Mehlhorn, Andrew~M. Pitts, and Roger
  Wattenhofer, editors, {\em Automata, Languages, and Programming - 39th
  International Colloquium, {ICALP} 2012, Warwick, UK, July 9-13, 2012,
  Proceedings, Part {II}}, volume 7392 of {\em Lecture Notes in Computer
  Science}, pages 598--609. Springer, 2012.

\bibitem[KN97]{DBLP:books/daglib/0011756}
Eyal Kushilevitz and Noam Nisan.
\newblock {\em Communication complexity}.
\newblock Cambridge University Press, 1997.

\bibitem[KP06]{DBLP:conf/icalp/KhotP06}
Subhash Khot and Ashok~Kumar Ponnuswami.
\newblock Better inapproximability results for maxclique, chromatic number and
  min-3lin-deletion.
\newblock In Michele Bugliesi, Bart Preneel, Vladimiro Sassone, and Ingo
  Wegener, editors, {\em Automata, Languages and Programming, 33rd
  International Colloquium, {ICALP} 2006, Venice, Italy, July 10-14, 2006,
  Proceedings, Part {I}}, volume 4051 of {\em Lecture Notes in Computer
  Science}, pages 226--237. Springer, 2006.

\bibitem[McG14]{DBLP:journals/sigmod/McGregor14}
Andrew McGregor.
\newblock Graph stream algorithms: a survey.
\newblock {\em {SIGMOD} Rec.}, 43(1):9--20, 2014.

\bibitem[Mul18]{DBLP:journals/eatcs/Mulzer18}
Wolfgang Mulzer.
\newblock Five proofs of chernoff's bound with applications.
\newblock {\em Bull. {EATCS}}, 124, 2018.

\bibitem[MVV16]{DBLP:conf/pods/McGregorVV16}
Andrew McGregor, Sofya Vorotnikova, and Hoa~T. Vu.
\newblock Better algorithms for counting triangles in data streams.
\newblock In Tova Milo and Wang{-}Chiew Tan, editors, {\em Proceedings of the
  35th {ACM} {SIGMOD-SIGACT-SIGAI} Symposium on Principles of Database Systems,
  {PODS} 2016, San Francisco, CA, USA, June 26 - July 01, 2016}, pages
  401--411. {ACM}, 2016.

\bibitem[SK12]{DBLP:conf/kdd/StantonK12}
Isabelle Stanton and Gabriel Kliot.
\newblock Streaming graph partitioning for large distributed graphs.
\newblock In Qiang Yang, Deepak Agarwal, and Jian Pei, editors, {\em The 18th
  {ACM} {SIGKDD} International Conference on Knowledge Discovery and Data
  Mining, {KDD} '12, Beijing, China, August 12-16, 2012}, pages 1222--1230.
  {ACM}, 2012.

\bibitem[TGRV14]{DBLP:conf/wsdm/TsourakakisGRV14}
Charalampos~E. Tsourakakis, Christos Gkantsidis, Bozidar Radunovic, and Milan
  Vojnovic.
\newblock {FENNEL:} streaming graph partitioning for massive scale graphs.
\newblock In Ben Carterette, Fernando Diaz, Carlos Castillo, and Donald
  Metzler, editors, {\em Seventh {ACM} International Conference on Web Search
  and Data Mining, {WSDM} 2014, New York, NY, USA, February 24-28, 2014}, pages
  333--342. {ACM}, 2014.

\bibitem[Viz64]{Vizing1964OnAE}
V.~G. Vizing.
\newblock On an estimate of the chromatic class of a p-graph.
\newblock 1964.

\bibitem[Zuc07]{DBLP:journals/toc/Zuckerman07}
David Zuckerman.
\newblock Linear degree extractors and the inapproximability of max clique and
  chromatic number.
\newblock {\em Theory of Computing}, 3(1):103--128, 2007.

\end{thebibliography}

\newpage
\appendix

\section{Some probability results}\label{sec:prob}
\begin{lem}[\cite{DBLP:books/daglib/0025902}(Chernoff-Hoeffding bound)]
\label{lem:cher_bound1}
Let $X_1, \ldots, X_N$ be independent random variables such that $X_i \in [0,1]$. For $X=\sum\limits_{i=1}^N X_i$ and $\mu=\mathbb{E}[X]$, the following holds for any $0 \leq \delta \leq 1$:
\begin{itemize}
\item[(i)]$ \mathbb{P}(X  \geq (1+\delta)\mu) \leq \exp{\left( \frac{-\mu \delta^{2}}{3}\right)} $;
\item[(ii)]$ \mathbb{P}(X  \leq (1-\delta)\mu) \leq \exp{\left( \frac{-\mu \delta^{2}}{3}\right)} $;
\item[(iii)] Furthermore, if $\mu \leq t$, then the following holds.
$$ \mathbb{P}(X  \geq (1+\epsilon)t) \leq \exp{\left( \frac{-t \delta^{2}}{3}\right)}.$$
\end{itemize}

\end{lem}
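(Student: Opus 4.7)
The plan is to use the classical exponential moment (Chernoff) method. For part (i), I would first apply Markov's inequality to $e^{sX}$ for an arbitrary $s>0$: $\mathbb{P}(X \geq (1+\delta)\mu) \leq e^{-s(1+\delta)\mu}\mathbb{E}[e^{sX}]$. By independence, the MGF factorizes as $\mathbb{E}[e^{sX}] = \prod_i \mathbb{E}[e^{sX_i}]$. The key per-variable bound is that since $X_i \in [0,1]$, convexity of $x \mapsto e^{sx}$ on $[0,1]$ gives $e^{sX_i} \leq 1 - X_i + X_i e^s$ pointwise; taking expectation yields $\mathbb{E}[e^{sX_i}] \leq 1 + \mathbb{E}[X_i](e^s - 1) \leq \exp(\mathbb{E}[X_i](e^s - 1))$ via the inequality $1+x \leq e^x$. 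Multiplying over $i$ gives $\mathbb{E}[e^{sX}] \leq \exp(\mu(e^s-1))$.

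Substituting back and optimizing with the standard choice $s = \ln(1+\delta)$ produces the well-known bound $\mathbb{P}(X \geq (1+\delta)\mu) \leq \bigl(e^\delta/(1+\delta)^{1+\delta}\bigr)^\mu$. To convert this into the simpler form $\exp(-\mu\delta^2/3)$ stated in the lemma, I would invoke the elementary inequality $(1+\delta)\ln(1+\delta) - \delta \geq \delta^2/3$, valid for $\delta \in [0,1]$. This can be verified by setting $g(\delta) = (1+\delta)\ln(1+\delta) - \delta - \delta^2/3$, noting $g(0)=0$, and checking that $g'(\delta) = \ln(1+\delta) - 2\delta/3 \geq 0$ on $[0,1]$ by a monotonicity/endpoint check.

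Part (ii) follows by an entirely symmetric argument. Take $s<0$, rewrite $\{X \leq (1-\delta)\mu\} = \{e^{sX} \geq e^{s(1-\delta)\mu}\}$, and apply Markov together with the same MGF bound; optimization at $s = \ln(1-\delta)$ and the analogous calculus inequality $(1-\delta)\ln(1-\delta) + \delta \geq \delta^2/3$ (on $\delta \in [0,1]$) delivers the matching lower-tail bound.

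For part (iii), where only an upper bound $\mu \leq t$ on the mean is available, the cleanest plan is a reduction to part (i). I would adjoin auxiliary independent $[0,1]$-valued random variables $X_{N+1},\ldots,X_{N'}$, independent of the originals, whose total expectation is exactly $t - \mu$; then the augmented sum $X' = X + \sum_{i>N} X_i$ consists of independent $[0,1]$ variables and has mean exactly $t$. Since $X_i \geq 0$, we have $X' \geq X$ pointwise, so $\mathbb{P}(X \geq (1+\delta)t) \leq \mathbb{P}(X' \geq (1+\delta)t)$, and applying part (i) to $X'$ immediately yields $\exp(-t\delta^2/3)$. The main obstacle across the whole proof is the calculus inequality $(1+\delta)\ln(1+\delta) - \delta \geq \delta^2/3$, which is elementary but is precisely the step that forces the hypothesis $\delta \in [0,1]$ and must be handled carefully; once it is in hand, everything else is a routine combination of Markov, independence, and convexity.
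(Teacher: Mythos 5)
Your proof is correct; note, however, that the paper does not prove this lemma at all---it is quoted verbatim from the cited textbook of Dubhashi and Panconesi as a black box, so there is no in-paper argument to compare against. Your derivation is the standard exponential-moment one: Markov applied to $e^{sX}$, the convexity bound $e^{sx}\leq 1-x+xe^{s}$ on $[0,1]$ giving $\mathbb{E}[e^{sX}]\leq \exp(\mu(e^{s}-1))$, optimization at $s=\ln(1+\delta)$ (resp.\ $s=\ln(1-\delta)$), and the calculus facts $(1+\delta)\ln(1+\delta)-\delta\geq \delta^{2}/3$ and $(1-\delta)\ln(1-\delta)+\delta\geq \delta^{2}/2\geq\delta^{2}/3$ on $[0,1]$, all of which check out; your reduction of part (iii) to part (i) by padding with independent $[0,1]$-valued variables of total mean $t-\mu$ is also valid (one can equivalently just rerun the MGF computation with $t$ in place of $\mu$, using $\mathbb{E}[e^{sX}]\leq\exp(\mu(e^{s}-1))\leq\exp(t(e^{s}-1))$ for $s>0$). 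The only cosmetic point is that the $\epsilon$ appearing in the statement of (iii) is a typo for $\delta$, which you have implicitly and correctly assumed.
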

\begin{lem}[\cite{DBLP:journals/eatcs/Mulzer18}] \label{lem:without-replace}
Let $I=\{1,\ldots,N\}$, $r \in [N]$ be a given parameter. If we sample a subset $R$ without replacement, then the following holds for any $J \subset I$ and $\delta \in (0,1)$.
\begin{itemize}
\item[(i)] $\pr \left( \size{J \cap R} \geq (1+\delta)\size{J}\frac{r}{N}\right) \leq \exp{\left(-\frac{\delta^2\size{J}r}{3N}\right)}$;
\item[(ii)] $\pr \left( \size{J \cap R} \leq (1-\delta)\size{J}\frac{r}{N}\right) \leq \exp{\left(-\frac{\delta^2\size{J}r}{3N}\right)}$;
\item[(iii)] Further more, we have the following if $\size{J} \leq k$, then the following holds.
 \[\pr \left( \size{J \cap R} \geq (1+\delta)k\frac{r}{N}\right) \leq \exp{\left(-\frac{\delta^2kr}{3N}\right)}\]
\end{itemize}
\end{lem}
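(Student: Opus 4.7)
The plan is to reduce sampling without replacement to independent Bernoulli sampling via Hoeffding's classical moment-generating-function domination, and then invoke the Chernoff bound of Lemma~\ref{lem:cher_bound1}.

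First, I would express $\size{J \cap R} = \sum_{i \in J} X_i$, where $X_i$ is the indicator that element $i$ is selected into $R$. Marginally each $X_i$ is Bernoulli with mean $r/N$, hence $\E[\size{J \cap R}] = \size{J} \cdot r/N$, which is exactly the centering used in parts (i) and (ii). The joint distribution of the $X_i$'s is not independent (they are negatively correlated since choosing one element reduces the chance of choosing another), so Lemma~\ref{lem:cher_bound1} does not apply directly.

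Next, I would invoke the classical result (due to Hoeffding, 1963; reproduced in~\cite{DBLP:journals/eatcs/Mulzer18}) that for any convex function $\phi$, the expectation $\E\left[\phi\left(\sum_{i \in J} X_i\right)\right]$ under sampling without replacement is no larger than the same expectation computed when the $X_i$ are replaced by i.i.d.\ Bernoulli$(r/N)$ variables. Specializing to $\phi(x) = e^{\lambda x}$ yields a moment-generating-function domination: the MGF of $\sum_{i\in J} X_i$ in the without-replacement model is upper-bounded by $\prod_{i \in J} \E[e^{\lambda \widetilde X_i}]$, where $\widetilde X_i$'s are independent Bernoulli$(r/N)$.

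Given this MGF domination, the standard proof of the multiplicative Chernoff bound goes through verbatim. In effect, one can apply Lemma~\ref{lem:cher_bound1}(i) with $\mu = \size{J} \cdot r/N$ to obtain part (i), and Lemma~\ref{lem:cher_bound1}(ii) analogously for part (ii). For part (iii), the hypothesis $\size{J} \leq k$ gives $\mu = \size{J}\cdot r/N \leq kr/N =: t$, and the combination of the MGF domination with Lemma~\ref{lem:cher_bound1}(iii)---which precisely allows replacing the true mean by an upper bound $t$ in the upper-tail inequality---yields the stated bound.

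The only nontrivial ingredient is the MGF-domination step; this is the content of Hoeffding's theorem, and since the lemma is stated with attribution to~\cite{DBLP:journals/eatcs/Mulzer18}, I would invoke it as a black box rather than reprove it. If a self-contained argument were required, the cleanest route is a coupling-based proof: one couples the without-replacement sequence to an independent sequence by conditioning on a uniformly random insertion of a fresh element at each step, and shows by induction on $|J|$ that the conditional MGF is pointwise dominated by the independent one. This coupling is the main technical obstacle; once it is in hand, the remainder of the proof is a mechanical application of Lemma~\ref{lem:cher_bound1}.
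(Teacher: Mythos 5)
Your proposal is correct and follows exactly the standard route of the cited source: Hoeffding's convex-order domination of sampling without replacement by i.i.d.\ sampling, followed by the usual multiplicative Chernoff argument (with the mean upper-bounded by $kr/N$ for part (iii)). The paper itself offers no proof of this lemma --- it is imported as a black box from \cite{DBLP:journals/eatcs/Mulzer18} --- so there is nothing further to compare against.
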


\section{Proof of Lemma~\ref{lem:deg}} \label{app:deg-lem}
\begin{lem}[Restatemnet of Lemma~\ref{lem:deg}]
\begin{itemize}
\item[(i)] If $|E_M^a|\geq \frac{\vareps}{100}|T|$, then $\frac{\vareps^3T}{3000\log n}|S|$ is an $\left(1\pm {\vareps}\right)$ approximation to $|E_M^a|$ with probability at least $1-\frac{1}{n^{10}}$.
\item[(ii)] If $|E_M^a|\leq \frac{\vareps}{100}|T|$, $|S|\leq \frac{60 \log n}{\vareps^2} $ with probability at least $1-\frac{1}{n^{10}}$.
\end{itemize} 
\end{lem}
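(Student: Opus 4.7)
The plan is to reduce both parts of the lemma to a single Chernoff-style calculation once we pin down the distribution of $|S|$. Write $E_M^a$ for $E_M^{V(G)\setminus B}$, and let $p := \frac{3000 \log n}{\vareps^3 T}$ denote the sampling rate used while $\mbox{\sc cnt} \leq \tau$. The first task is to show that
\[
|S| \;=\; \sum_{e \in E_M^a} X_e,
\]
where $\{X_e\}_{e \in E_M^a}$ are independent $\mathrm{Bernoulli}(p)$ random variables, so that $\E[|S|] = p\,|E_M^a|$. Everything after that is routine.

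For this distributional claim, consider a monochromatic edge $e = (v_i, v_j) \in E_M^a$ with $v_i$ exposed before $v_j$. Since at least one endpoint of $e$ lies in $V(G)\setminus B = \{v_1,\ldots,v_{k-1}\}$ and $v_i$ is the earlier one, $v_i \in \{v_1,\ldots,v_{k-1}\}$; so at the time $v_i$ is exposed we still have $\mbox{\sc cnt} \leq \tau$, and the algorithm samples each pair $(v_i,\ell)$, $\ell \in [d_G^+(v_i)]$, into $Y$ independently with probability $p$. Let $r \in [d_G^+(v_i)]$ be the rank of $v_j$ among $N_G^+(v_i)$; then $e$ lands in $S$ precisely when $(v_i, r) \in Y$. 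The responsible pairs $(v_i, r)$ are distinct across different edges of $E_M^a$, so by Remark~\ref{rem:why-ind} the $X_e$ are mutually independent.

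For part (i), the hypothesis $|E_M^a| \geq \frac{\vareps}{100} T$ gives $\mu := \E[|S|] = p\,|E_M^a| \geq \frac{30 \log n}{\vareps^2}$. Applying the two-sided Chernoff bound (Lemma~\ref{lem:cher_bound1}(i)--(ii)) with $\delta = \vareps$ yields
\[
\Pr\bigl[\bigl||S| - \mu\bigr| \geq \vareps\,\mu\bigr] \;\leq\; 2\exp\!\left(-\tfrac{\vareps^2 \mu}{3}\right) \;\leq\; 2\exp(-10 \log n) \;\leq\; n^{-10}.
\]
Since $\frac{\vareps^3 T}{3000 \log n}|S| = |S|/p$, multiplying through by $1/p$ converts this into the claimed $(1\pm \vareps)$-approximation to $|E_M^a|$.

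For part (ii), the hypothesis $|E_M^a| \leq \frac{\vareps}{100} T$ gives $\mu \leq \frac{30 \log n}{\vareps^2}$, and Lemma~\ref{lem:cher_bound1}(iii) with $t = \frac{30 \log n}{\vareps^2}$ and $\delta = 1$ yields
\[
\Pr\!\left[|S| \geq \tfrac{60 \log n}{\vareps^2}\right] \;\leq\; \exp\!\left(-\tfrac{t}{3}\right) \;=\; \exp\!\left(-\tfrac{10 \log n}{\vareps^2}\right) \;\leq\; n^{-10},
\]
where the last step uses $\vareps \in (0,1)$. The only non-routine step in the entire argument is the independence claim in the second paragraph, which is essentially the content of Remark~\ref{rem:why-ind}; the rest is direct Chernoff bookkeeping.
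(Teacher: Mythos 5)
Your proof is correct and follows essentially the same route as the paper's: identify $\E[|S|] = \frac{3000\log n}{\vareps^3 T}|E_M^a|$ and then apply Lemma~\ref{lem:cher_bound1}(i)--(ii) for part (i) and Lemma~\ref{lem:cher_bound1}(iii) with $t=\frac{30\log n}{\vareps^2},\ \delta=1$ for part (ii). The one place you are more explicit than the paper is in pinning down that each edge of $E_M^a$ has its earlier endpoint in $\{v_1,\ldots,v_{k-1}\}$ and is therefore governed by a distinct independently-sampled pair in $Y$ (the paper simply invokes the argument of Section~\ref{sec:e-known-vadeg} together with Remark~\ref{rem:why-ind}), which is a welcome clarification rather than a deviation.
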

\begin{proof}
We use the similar argument as that of in Section~\ref{sec:e-known-vadeg} to show $\widehat{m}$ is an $(1\pm \vareps)$-approximation of $|E_M|$.

Here, $\mu=\E[|S|]=\frac{3000\log n}{\vareps^3 T}|E_m^a|$. We prove (i) and (ii) separately.
\begin{description}
\item[(i)] As $|E_M^a|\geq \frac{\vareps}{100}T$, $\E[|S|] \geq \frac{30 \log n}{\vareps^2 }$. Applying Lemma~\ref{lem:cher_bound1} (i) and (ii), 
\begin{eqnarray*}
\pr\left(\size{|S|-\E[|S|]}\geq \vareps \E[|S|]\right) &\leq& 2\exp{\left(-\frac{\vareps^2 \E[|S|]}{3}\right)}\\
 \pr\left(\size{\frac{\vareps^3T}{3000\log n}|S|-\size{E_M^a}}\geq \vareps |E_M^a|\right) &\leq& \frac{1}{n^{10}}.
\end{eqnarray*}
Observe thet we are done with the claim.
\item[(ii)] As $|E_M^a|\leq \frac{\vareps}{100}T$, $\E[|S|] \leq \frac{30 \log n}{\vareps^2 }$. Applying Lemma~\ref{lem:cher_bound1} (iii), by taking $t=\frac{30 \log n}{\vareps^2 }$ and $\delta=1$, we have 
\begin{eqnarray*}
\pr\left(\size{|S| \geq (1+\delta)t}\right) &\leq& \exp{\left(-\frac{\delta^2 t}{3}\right)} \\
\pr\left(|S| \geq \frac{60 \log n}{\vareps^2}\right) &\leq& \frac{1}{n^{10}}.
\end{eqnarray*}
Observe that, we are done with the claim.
\end{description}

\end{proof}

\section{Proof of Lemma~\ref{lem:cher-random-algo}}\label{sec:deviation}
\begin{lem}[Restatement of Lemma~\ref{lem:cher-random-algo}]
\begin{description}
\item[]
\item[(i)] For each $j \in [t]$ with $\size{B_j} \geq \frac{\sqrt{\vareps T}}{10t}$, $\pr \left( \size{{\size{B_j \cap R}}-
\frac{\size{R}\size{B_j}}{n}} \geq \frac{\vareps}{10}\frac{\size{R}\size{B_j}}{n}  \right) \leq \frac{1}{n^{10}}$.
\item[(ii)] For each $j \in [t]$ with $\size{B_j} < \frac{\sqrt{\vareps T}}{10t}$, $\mathbb{P} \left( {{\size{B_j \cap R}}} \geq \frac{\size{R}}{n}\frac{\sqrt{\vareps T}}{8t}  \right) \leq \frac{1}{n^{10}}$.
\item[(iii)] For each vertex $v_i$ with $d_M(v_i) \geq \frac{\sqrt{\vareps T}}{10t}$, $\mathbb{P} \left( \size{\kappa_{v_i} - \frac{\size{R}d_M(v_i)}{n}} \geq \frac{\vareps}{10}\frac{\size{R}d_M(v_i)}{n}\right) \leq \frac{1}{n^{10}}$.
\item[(iv)] For each vertex $v_i$ with $d_M(v_i) < \frac{\sqrt{\vareps T}}{10t}$, $\mathbb{P} \left( {\kappa_{v_i}} \geq \frac{ \size{R}}{n}\frac{\sqrt{\vareps T}}{8t}\right) \leq \frac{1}{n^{10}}$.
\end{description}
\end{lem}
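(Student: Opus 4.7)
\medskip

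\noindent\textbf{Proof plan for Lemma~\ref{lem:cher-random-algo}.}

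All four statements are instances of the Chernoff-type bound for sampling without replacement stated in Lemma~\ref{lem:without-replace}, applied to the random sample $R \subseteq V$ of size $\Gamma = \widetilde{\Theta}(n/\sqrt{T})$ obtained as the first $\Gamma$ vertices in the random order stream. The key observation is that for any \emph{fixed} subset $J \subseteq V$, the count $|J \cap R|$ is hypergeometric with mean $|J|\,\Gamma/n$. For parts (i) and (ii) we take $J = B_j$; for parts (iii) and (iv) we take $J = N_M(v_i)$, the set of monochromatic neighbors of $v_i$, so that $\kappa_{v_i} = |J \cap R|$ has mean $d_M(v_i)\,\Gamma/n$. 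In this last identification it does not matter whether $v_i$ itself is selected into $R$, since $v_i \notin N_M(v_i)$.

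For parts (i) and (iii) (the \emph{large} regime), I would set $\delta = \vareps/10$ and apply parts (i) and (ii) of Lemma~\ref{lem:without-replace} with $N = n$, $r = \Gamma$. The resulting failure probability is $2\exp(-\delta^2 |J| \Gamma / (3n))$. Plugging in the lower bound $|J| \geq \sqrt{\vareps T}/(10 t)$ and $\Gamma = \widetilde{\Theta}(n/\sqrt{T})$ gives exponent
\[
\frac{\delta^2 \cdot |J| \cdot \Gamma}{3n} \;\geq\; \frac{(\vareps/10)^2}{3} \cdot \frac{\sqrt{\vareps T}}{10 t} \cdot \frac{\widetilde{\Theta}(1)}{\sqrt{T}} \;=\; \widetilde{\Omega}\!\left(\frac{\vareps^{5/2}}{t}\right),
\]
and since the $\widetilde{\Theta}$ hiding in $\Gamma$ absorbs sufficiently large polynomial factors in $\log n$, $1/\vareps$, and $t$, the exponent can be made at least $10\log n$ by choosing the hidden constant appropriately. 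This yields the desired bound $1/n^{10}$.

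For parts (ii) and (iv) (the \emph{small} regime), I would apply part (iii) of Lemma~\ref{lem:without-replace} with $k = \sqrt{\vareps T}/(10 t)$ and $\delta = 1/4$. Under the hypothesis $|J| \leq k$, the mean $|J|\Gamma/n$ is at most $k\Gamma/n = \frac{\Gamma}{n}\cdot\frac{\sqrt{\vareps T}}{10 t}$, and the lemma bounds $\mathbb{P}(|J \cap R| \geq (1+\tfrac14)\,k\,\Gamma/n) = \mathbb{P}(|J\cap R| \geq \tfrac{\Gamma}{n}\cdot\tfrac{\sqrt{\vareps T}}{8t})$ by $\exp(-\delta^2 k \Gamma/(3n))$. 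The exponent has the same order as in the large regime (with a constant $\delta$ rather than $\vareps/10$, so in fact slightly more slack), so the same choice of polylog factors in $\Gamma$ delivers $1/n^{10}$.

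The only ``work'' in the proof is bookkeeping: verifying that the $\text{poly}(\log n, 1/\vareps, t)$ factors concealed inside $\widetilde{\Theta}(n/\sqrt{T})$ are generous enough to push the Chernoff exponent above $10 \log n$ in every case. This is routine, and there is no substantive conceptual obstacle; in particular, the hypergeometric viewpoint makes the treatment of parts (iii)--(iv) identical to (i)--(ii) modulo the choice of the fixed set $J$. The lemma is stated per-bucket and per-vertex; the union bounds across all $n$ vertices and $t = O(\log n /\vareps)$ buckets are taken in the subsequent claims (Claims~\ref{clm:random-high} and~\ref{clm:random-low}) and not here.
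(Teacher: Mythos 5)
Your proposal is correct and matches the paper's own proof essentially verbatim: the paper also applies Lemma~\ref{lem:without-replace} with $N=n$, $r=\Gamma$, taking $J=B_j$ for parts (i)--(ii) and $J$ equal to the set of monochromatic neighbors of $v_i$ for parts (iii)--(iv), with $\delta=\vareps/10$ in the large regime and $k=\sqrt{\vareps T}/(10t)$, $\delta=1/4$ in the small regime. Your added remark that $v_i\notin N_M(v_i)$ (so $\kappa_{v_i}=|N_M(v_i)\cap R|$ is exactly hypergeometric) is a clean clarification of a point the paper glosses over, but the argument is otherwise identical.
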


\begin{proof} 
Let us take $N=n, r= \size{R} = \Gamma=\widetilde{\Theta}\left(\frac{n}{\sqrt{T}}\right)$ , $I=\{v_1,\ldots,v_n\}$ in Lemma~\ref{lem:without-replace}.
\begin{description}
\item[(i)] Setting $J=B_j$ and $\delta=\frac{\vareps}{10}$ in Lemma~\ref{lem:without-replace} (i) and (ii), we have 
$$ \pr \left( \size{{\size{B_j \cap R}}-
\frac{\size{R}\size{B_j}}{n}} \geq \frac{\vareps}{10}\frac{\size{R}\size{B_j}}{n}  \right) \leq 2\exp{\left(-\frac{(\vareps /10)^2\size{B_j}{\Gamma}}{3n}\right)}\leq \frac{1}{n^{10}}.$$
The last inequality holds as $\size{B_j} \geq \frac{\sqrt{\vareps T}}{10t}$, $t=\lceil \log _{1+\frac{\vareps}{10}}n\rceil=\Theta\left(\frac{\log n}{\vareps}\right)$ and $\Gamma=\widetilde{\Theta}\left(\frac{n}{\sqrt{T}}\right)$.
\item[(ii)] Set $J=B_j$, $k=\frac{\sqrt{\vareps T}}{10t}$, $\delta=\frac{1}{4}$ in Lemma~\ref{lem:without-replace} (iii). $\mbox{As}~\size{B_j} \leq\frac{\sqrt{\vareps T}}{10t},~\size{J} \leq k$. Hence, 
$$\mathbb{P} \left( {{\size{B_j \cap R}}} \geq \frac{\size{R}}{n}\frac{\sqrt{\vareps T}}{8t}  \right)\leq \exp{\left(-\frac{(1/4)^2 ({\sqrt{\vareps T}}/{10t})\Gamma}{3n}\right)}\leq \frac{1}{n^{10}}.$$
\item[(iii)] Setting $J$ as the set of monochromatic neighbors of $v_i$ in $R$ and $\delta=\frac{\vareps}{10}$ in Lemma~\ref{lem:without-replace} (i) and (ii), we get
$$\mathbb{P} \left( \size{\kappa_{v_i} - \frac{\size{R}d_M(v_i)}{n}} \geq \frac{\vareps}{10}\frac{\size{R}d_M(v_i)}{n}\right) \leq \exp{\left(-\frac{(\vareps /10)^2|J|\Gamma}{3n}\right)} \leq \frac{1}{n^{10}}.$$
The last inequality holds as $\size{J}=d_M(v_i) \geq \frac{\sqrt{\vareps T}}{10t}$, $t=\lceil \log _{1+\frac{\vareps}{10}}n\rceil=\Theta\left(\frac{\log n}{\vareps}\right)$ and $\Gamma=\widetilde{\Theta}\left(\frac{n}{\sqrt{T}}\right)$.
\item[(iv)] Set $J$ as the set of monochromatic neighbors of $v_i$ in $R$, $k=\frac{\sqrt{\vareps T}}{10t}$, $\delta=\frac{1}{4}$ in Lemma~\ref{lem:without-replace} (iii). Note that $\size{J}=d_M(v_i)\leq \frac{\sqrt{\vareps T}}{10t}=k $. Hence,
$$\mathbb{P} \left( {\kappa_{v_i}} \geq \frac{ \size{R}}{n}\frac{\sqrt{\vareps T}}{8t}\right)\leq \exp{\left(-\frac{(1/4)^2 ({\sqrt{\vareps T}}/{10t})\Gamma}{3n}\right)} \leq \frac{1}{n^{10}}.$$

\end{description}

\end{proof}





\section{Communication Complexity}\label{sec:comm}

Communication Complexity~\cite{DBLP:books/daglib/0011756} deals with finding the minimum amount of bits that is needed to communicate in order to compute a function when the input to the function is distributed among multiple parties. For the purpose of our work, we are concerned with two player games with one-way communication protocol. The players are traditionally called Alice and Bob. Both of them have a $n$-bit input string and are unaware of each other's input. The goal is to minimize the bits Alice needs to communicate to Bob so that he can compute a function on both their inputs. No assumption is made on their computational powers and there is no restriction on the amount of time needed for computing the function. Randomized one way communication complexity of a function, is defined as the number of bits sent by Alice, in the worst case, by the best randomized protocol to compute $fS$.

\subsection{INDEX problem in the communication complexity model}

Usually, the space lower bound results in the streaming model of computation are proved by a reduction from a problem in communication complexity. We establish our lower bounds by considering a reduction from the INDEX problem in the one-way communication protocol for two players to the specific problem in graphs in the \va model. The INDEX problem is defined as follows:
There are two parties, Alice and Bob. Alice has a $N$-bit input string $X \in \{0,1\}^{N}$ and Bob has an integer $j \in [N]$. Both are unaware of each other's input, but have an access to a public randomness, and the goal of Bob is to compute $X_{j}$, the $j^{th}$ bit of $X$, by receiving a single message from Alice. 
\begin{lem}~\cite{DBLP:books/daglib/0011756}\label{index}
The randomized one-way communication complexity of INDEX is $\Omega(N)$
\end{lem}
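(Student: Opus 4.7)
}

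The plan is to invoke Yao's minimax principle to reduce randomized communication complexity to distributional (deterministic) complexity under a fixed hard distribution, and then use a standard information-theoretic argument. Specifically, I would let $X$ be uniformly distributed over $\{0,1\}^N$ and, independently, let $j$ be uniform over $[N]$. By Yao's principle, if the randomized one-way complexity of INDEX with error $1/3$ is $s$, then there exists a deterministic one-way protocol of communication cost $s$ whose error probability under this product distribution is at most $1/3$. So it suffices to lower bound $s$ for such a deterministic protocol; let $M = M(X) \in \{0,1\}^{\leq s}$ denote Alice's message and $B(M,j) \in \{0,1\}$ denote Bob's output.

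The next step is the information-theoretic core. Since $X_1,\ldots,X_N$ are independent and uniform, $H(X) = N$ and by subadditivity $H(X\mid M) \le \sum_{i=1}^N H(X_i\mid M)$. Hence
\[
s \;\ge\; H(M) \;\ge\; I(X;M) \;=\; N - H(X\mid M) \;\ge\; \sum_{i=1}^N \bigl(1 - H(X_i\mid M)\bigr) \;=\; \sum_{i=1}^N I(X_i;M).
\]
Let $p_i = \Pr\bigl[B(M,i) \neq X_i\bigr]$ for each $i \in [N]$. The distributional error guarantee gives $\frac{1}{N}\sum_i p_i \le \tfrac{1}{3}$. Since $B(M,i)$ is a function of $M$, Fano's inequality yields $H(X_i\mid M) \le H_2(p_i)$, where $H_2$ is the binary entropy. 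By concavity of $H_2$ and Jensen's inequality,
\[
\frac{1}{N}\sum_{i=1}^N H_2(p_i) \;\le\; H_2\!\left(\frac{1}{N}\sum_{i=1}^N p_i\right) \;\le\; H_2(\tfrac{1}{3}).
\]
Combining the displays gives $s \ge N\bigl(1 - H_2(\tfrac{1}{3})\bigr)$, which is $\Omega(N)$ since $H_2(1/3) < 1$ is a constant strictly less than $1$.

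I expect the main subtlety (rather than a real obstacle) to be the careful bookkeeping at the Yao reduction step: the randomized protocol is required to succeed on every input with probability $\ge 2/3$, whereas after averaging we only get a deterministic protocol with distributional error $\le 1/3$ over the joint uniform law on $(X,j)$. The rest of the argument — chain rule, subadditivity, Fano, Jensen — is standard and only needs that each $X_i$ is a uniform bit independent of the others, which holds by construction. If one prefers to avoid Fano, the same bound can be obtained by a direct combinatorial encoding argument: a message of length $s$ partitions $\{0,1\}^N$ into at most $2^s$ classes, and Bob's predictor, together with the class index, can reconstruct $X$ with expected Hamming error $\le N/3$; a standard volume/counting bound on balls of radius $N/3$ around $2^s$ centers then forces $s = \Omega(N)$.
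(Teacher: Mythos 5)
The paper does not prove Lemma~\ref{index} at all; it simply cites the Kushilevitz--Nisan text, and your argument is precisely the standard information-theoretic proof found there (Yao's principle to pass to a deterministic protocol with distributional error $1/3$ under the uniform product distribution, then $s\ge I(X;M)\ge\sum_i I(X_i;M)$ by subadditivity, binary Fano, and Jensen on the concave $H_2$). The proof is correct and complete; the only cosmetic quibble is that a message of length at most $s$ gives $H(M)\le s+1$ rather than $s$, which is immaterial to the $\Omega(N)$ conclusion.
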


\remove{
\begin{lem}[\cite{DBLP:books/daglib/0025902}(Chernoff-Hoeffding bound)]
\label{lem:cher_bound2}
Let $X_1, \ldots, X_n$ be independent random variables such that $X_i \in [0,1]$. For $X=\sum\limits_{i=1}^n X_i$ and $\mu_l \leq \mathbb{E}[X] \leq \mu_h$, the following holds for any $\delta >0$.
\begin{romanenumerate}
\item $\mathbb{P} \left( X \geq \mu_h + \delta \right) \leq \exp{\left(\frac{-2\delta^2}{n}\right)}$
\item $\mathbb{P} \left( X \leq \mu_l - \delta \right) \leq \exp{\left(\frac{-2\delta^2}{n}\right)}$
\end{romanenumerate}
\end{lem}

The following lemma directly follows from Lemma~\ref{lem:depend:high_exact_statement}
\begin{lem}[Chernoff bound for bounded dependency]
\label{lem:depend:high_prob}
Let $X_1,\ldots,X_n$ be indicator random variables such that there are at most $d$ many $X_j$'s on which an $X_i$ depends. For $X=\sum\limits_{i=1}^n X_i$ and $\mu_l \leq \mathbb{E}[X] \leq \mu_h$, the following holds for any $\delta >0$:
\begin{romanenumerate}
\item $\mathbb{P}(X \geq \mu_h + \delta) \leq e^{\frac{-2\delta^2}{(d+1)n}}$
\item $\mathbb{P}(X-\mu_\ell  \leq \delta) \leq e^{\frac{-2\delta^2}{(d+1)n}}$
\end{romanenumerate}
\end{lem}
}

\remove{
\begin{algorithm}[h]
\SetAlgoLined
\caption{{\sc VA-DegreeOracle-Sep}}
\label{algo:deg-oracle}
\KwIn{$G = (V, E)$ and a coloring function $f$ on $V$ in the \vadeg model}
\KwOut{The algorithm verifies if $f$ is $\vareps$-far from valid or not}

{}{Initialize a reservoir } $R$ {}{ of size } $t$ \\
    \For{$i \gets 1$ {}{ to } $|V|$} { 
       {}{let } $u$ {}{ be the } $i^{th}$ {}{ vertex of the stream and } $d_{G}(u)$ {}{ be its degree}
        \For{$j \gets 1$ {}{ to } $t$} { 
            \If{$u$ {}{ is adjacent to the } $j^{th}$ {}{ vertex } $v$ {}{ stored in } $R$} {
                {}{increment the value of } $v's$ {}{ count by } $1$ \\
                \If{$count$ {}{ is } $b$ {}{(i.e., } $u$ {}{ is } $v's$ $b^{th}$ {}{ successive neighbor in the stream)} }  {
                    {}{store } $u$ {}{, along with its color } $f(u)$ {}{, as the} $b^{th}$ {}{ neighbor of } $v$ {}{ (equivalent to storing the edge } $(u,v)$ {}{ with their respective colors } $f(u)$ {}{ and } $f(v)$ {}{ to check if it is monochromatic or not)}
                 }
            }
       }
        \If{$i \leq t$} {
        {}{store } $u$ {}{ and its color } $f(u)$ {}{ and initialize its count to } $0$ \\
        $d^{+}_{G}(u) \gets d_{G}(u) - d^{-}_{G}(u) $ \\
        {}{choose } $b$ {}{ uniformly at random in the range } $[d^{+}_{G}(u)]$
        }
    	\Else {
        {}{with probability } $\frac{t}{i}$ {}{, replace an element of the reservoir } $R$ {}{, chosen uniformly at random, with } $u$ {}{ and its color } $f(u)$ \\
        {}{initialize } $u's$ {}{ count to } $0$ \\
        {}{choose } $b$ {}{ uniformly at random from } $[d^{+}_{G}(u)]$ {}{(and replace the previous vertex's neighbor, if it has been stored)}
    	}
  }
  \For{$i \gets 1$ {}{ to } $t$} {
        {}{let } $u$ {}{ be the } $i^{th}$ {}{ vertex stored in the reservoir } $R$ {}{ and } $v$ {}{ its stored neighbor (with their respective colors)} \\
        {}{check if the assigned the colors to } $u$ {}{ and } $v$ {}{ make the edge } $(u,v)$ {}{ monochromatic}
    }
  \caption{Algorithm: \colorverify in \vadeglong model}
\end{algorithm}
}

\section{Lower bounds for \conflictest} \label{sec:est-lowerbound}
 
We show a tight lower bound of $\Omega\left(\min\{\size{V},\frac{|V|^2}{T}\}\right)$ for the \conflictest problem in the vertex arrival model in Section~\ref{sec:est-vaaolbsc}. For the \conflictest problem in the vertex arrival with degree oracle model, we show a tight lower bound of $\Omega \left( \min \{ \size{V}, \frac{ \size{E} }{T} \} \right)$ in Section~\ref{sec:est-vadglbsc}. These bounds are proved using reductions from the \emph{INDEX} problem, discussed in Lemma~\ref{index} in Appendix~\ref{sec:comm}, in the one-way communication complexity model to the \conflictest problem in graphs (in the vertex arrival streaming models).

\remove{In Section~\ref{sec:lower-rand}, we show that a lower bound of $\Omega\left(\frac{n}{T^3}\right)$ for \conflictest in 
\varandlong via a reduction from the following variation of {\sc Multiparty Set Disjointness}. We discuss it in the respective section Section~\ref{sec:lower-rand} in details.}

\subsection{Lower bound for \conflictest in \va model} \label{sec:est-vaaolbsc}

\begin{theo}\label{theo:est-vaaolbth} Let $n,m,T  \in \N$ be such that $1 \leq T \leq {n \choose 2}$ and {}{$m\geq T$}. Any one pass streaming algorithm; that takes the {}{vertices and edges} of a graph $G(V,E)$ (with $\size{V}=\Theta(n)$ and $\size{E}=\Theta(m)$) and coloring function $f:V \rightarrow [C]$ on the vertices, in \va model; and determines whether the {number} of monochromatic edges in $G$ is {}{$0$} or {}{$T$} with probability $2/3$; requires 
$\Omega\left( \min \{n,\frac{n^2}{T}\}\right)$ bits of space.
 \end{theo}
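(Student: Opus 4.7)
The plan is to reduce from INDEX (Lemma~\ref{index}), in which Alice holds $X \in \{0,1\}^N$, Bob holds $j \in [N]$, and any one-way randomized protocol for Bob to output $X_j$ needs $\Omega(N)$ bits of communication. A one-pass streaming algorithm using $s$ bits can be simulated in one message: Alice runs the algorithm on her part of the stream and ships the memory state (of size $s$) to Bob, who continues the stream, so the $\Omega(N)$ communication bound forces $s = \Omega(N)$. I would prove the two sides of the $\min\{n,n^2/T\}$ bound by two separate reductions, chosen to meet the promises $|V|=\Theta(n)$, $|E|=\Theta(m)\ge T$.

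For the regime $T \le n$ (so $\min = n$), I would set $N = \lfloor n/2\rfloor$. Alice streams $N$ isolated vertices $u_1,\ldots,u_N$; she colors $u_i$ with a fixed marker color $c^*$ if $X_i=1$, and with a pairwise-distinct fresh color $c_i$ otherwise. Bob then streams $T$ new vertices $w_1,\ldots,w_T$, all colored $c^*$, each declared adjacent only to $u_j$. The graph has $N+T=\Theta(n)$ vertices and exactly $T$ edges. If $X_j=1$, every edge $(w_k,u_j)$ is monochromatic, giving exactly $T$ conflicts; if $X_j=0$, then $u_j$ has the fresh color $c_j\ne c^*$, and the conflict count is $0$. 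An algorithm distinguishing $0$ from $T$ monochromatic edges therefore recovers $X_j$, so $s=\Omega(N)=\Omega(n)$.

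For the harder regime $T>n$ (so $\min = n^2/T$), I would set $N=\Theta(n^2/T)$ and use a denser construction, because Bob can no longer afford $T$ witness vertices without exceeding the vertex budget. The plan is a ``block'' reduction: Alice encodes her $N$ bits by controlling the coloring of $N$ designated $\Theta(\sqrt T)$-sized vertex blocks $B_1,\ldots,B_N$, and Bob's query $j$ appends a small set of $\Theta(\sqrt T)$ witness vertices of the marker color that are joined by a complete bipartite graph to every vertex of $B_j$. If $X_j=1$ then $B_j$ has been colored uniformly with the marker, and all $\sqrt T\cdot\sqrt T=\Theta(T)$ bipartite edges are monochromatic; if $X_j=0$ then $B_j$ is rainbow-colored and no bipartite edge is monochromatic. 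This contributes the promised edge count $m\ge T$ through Bob's $\Theta(T)$ bipartite edges.

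The main obstacle will be packing the $N=\Theta(n^2/T)$ blocks of size $\sqrt T$ into only $\Theta(n)$ vertices when $n<T<n^2$: disjoint blocks would require $N\sqrt T=\Theta(n^2/\sqrt T)\gg n$ vertices, so the blocks must overlap. I plan to arrange the blocks through a carefully chosen combinatorial design so that (i) each vertex lies in only $O(n/\sqrt T)$ blocks on average, and (ii) activating bit $X_i$ changes the colors of vertices in $B_i$ in a way that is invisible when Bob queries any index $i'\ne i$ — for instance by using palettes indexed by the queried bit so that colors forced by $X_i=1$ do not match Bob's witness color $c_j^*$ for $j\ne i$. With such a design, the final monochromatic-edge count produced by Bob's query faithfully reflects only $X_j$, the INDEX reduction goes through, and we recover the desired $\Omega(n^2/T)$ lower bound.
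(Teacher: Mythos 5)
Your reduction for the regime $T\le n$ is correct and is essentially identical to the paper's: Alice encodes her $N=\Theta(n)$ bits in the colors of isolated vertices, and Bob attaches $T$ marker-colored witnesses to $u_j$, so that the monochromatic-edge count is $0$ or $T$ according to $X_j$. No issue there.

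For $T>n$ the plan has a concrete gap, and you have in fact identified it yourself: with Alice-side blocks of size $\Theta(\sqrt{T})$ and Bob-side witness sets of size $\Theta(\sqrt{T})$ you need $N\sqrt{T}=\Theta(n^2/\sqrt{T})\gg n$ vertices to hold $N=\Theta(n^2/T)$ disjoint blocks, and you propose to rescue this by overlapping the blocks via a combinatorial design. That is both unnecessary and dangerous: overlapping blocks means activating bit $X_i$ recolors vertices that also lie in $B_{i'}$ for $i'\ne i$, and you give no construction guaranteeing that this leakage cannot change the monochromatic count for Bob's query $j$; the ``palettes indexed by the queried bit'' idea does not obviously work because Alice does not know $j$ when she colors her vertices. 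The clean fix is to change the block geometry rather than pack overlapping blocks. Make Alice's blocks $P_1,\ldots,P_N$ disjoint independent sets of size $\Theta(T/n)$ each (colored uniformly $c^*$ if $X_i=1$, else a fresh color), and make Bob's gadget a single independent set $Q$ of size $\Theta(n)$ all colored $c^*$, with Bob adding the complete bipartite graph between $P_j$ and $Q$. Then Alice uses $N\cdot\Theta(T/n)=\Theta(n^2/T)\cdot\Theta(T/n)=\Theta(n)$ vertices, Bob uses $\Theta(n)$ vertices, the only edges are the $\Theta(T/n)\cdot\Theta(n)=\Theta(T)$ bipartite edges, and the count is $0$ or $\Theta(T)$ according to $X_j$. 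This is exactly what the paper does (it takes $|P_i|=2T/n$, $|Q|=n/2$, $N=n^2/T$), and it needs no design-theoretic machinery; the blocks are disjoint by construction.
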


\begin{proof} We show that the lower bound is $\Omega(n)$ when $T \leq \frac{n}{2}$ and $\Omega\left(\frac{n^2}{T}\right)$ when $T>\frac{n}{2}$, separately, to get the stated lower bound. We give a reduction from the INDEX problem to the \conflictest problem in graphs with $\Theta(n)$ vertices, $\Theta(m)$ edges and having at least $T$ conflicting edges, in the vertex arrival model. We show our reduction when {}{$m=T$}, but we can modify it for any {}{$m\geq T$}.

 The reduction works as follows. For $T \leq \frac{n}{2}$, Alice has an $N$-bit input string {$X \in \{0,1\}^{N}$}. For each input bit $X_i$, Alice creates a vertex {}{$p_i$}. If $X_i$ equals $1$, then vertex {}{$p_i$} is colored with color $C_1$, else it is colored with color $C_0$. After processing all bits of her input, Alice sends the current memory state to Bob. 
 Let $j\in [N]$ be the input of Bob. Bob constructs a gadget {}{$Q$} which is an independent set of $\left(n-N\right)$ vertices and colors all the vertices in the gadget with color $C_{1}$. He adds all the edges from the vertex {}{$p_{j}$} to the gadget {}{$Q$}. The number of vertices in the graph is {}{$(N) + (n-N) = n$ and the number of edges in the graph is $m = T$}. We set $N=n-T$. If $X_{j} = 0$, then the color of {}{$p_{j}$} is $C_{0}$ and there are {}{$0$} conflicting edges, where as if $X_{j} = 1$, then the color of {}{$p_{j}$} is $C_{1}$ and there will be {}{$T$} conflicting edges. Therefore, for $N = n-T$, deciding whether the number of monochromatic edges in the graph is $0$ or $T$, requires $\Omega(n-T)$ or $\Omega(n)$ space.

For $T\geq  \frac{n}{2}$, Alice has an $N$-bit input string $X \in \{0,1\}^{N}$. For each input bit $X_i$, Alice constructs an independent set {}{$P_{i}$} of size $\frac{2T}{n}$. If $X_i$ equals $1$, then the vertices of {}{$P_{i}$} are colored with color $C_1$, else the vertices are colored with color $C_0$. After processing all bits of her input, Alice sends the current memory state to Bob. Let $j\in [N]$ be the input of Bob. Let $j\in [N]$ be the input of Bob. Bob constructs a gadget {}{$Q$} which is an independent set of $\frac{n}{2}$ vertices and colors all the vertices with the color $C_{1}$. He adds all the edges from the gadget {}{$P_{j}$} to the gadget {}{$Q$}. We set $N = \frac{n^{2}}{T}$. The number of vertices in the graph is {}{$\frac{2T}{n} \cdot( N) + \frac{n}{2} = \Theta(n)$ and the number of edges in the graph is $m = T$}.
If $X_{j} = 0$, then the color of vertices in {}{$P_{j}$} is $C_{0}$ and there are {}{$0$} conflicting edges, where as if $X_{j} = 1$, then the color of vertices in {}{$P_{j}$} is $C_{1}$ and there will be {}{$T$} conflicting edges. Therefore, for $N = \frac{n^{2}}{T}$, deciding whether the number of monochromatic edges in the graph is $0$ or $T$, requires $\Omega \left(\frac{n^2}{T} \right)$ space.

Recall that we are doing our reductions for {}{$m=T$}. We make the above constructions work for any {}{$m\geq T$} by adding a complete subgraph on {}{$\sqrt{m-T}$} vertices such that none of the edges of the complete subgraph are conflicting.~\footnote{Note that $\sqrt{m-T}=O(n)$.}
\end{proof} 
 
\subsection{Lower bound for \conflictest in \vadeg model} \label{sec:est-vadglbsc}

\begin{theo}\label{theo:est-vadglbth} Let $n,T  \in \N$ be such that $1 \leq T \leq {n \choose 2}$. Then there exists an $m$ with $T\leq m \leq {n \choose 2}$ such that the following happens. Any one pass streaming algorithm; that takes the {}{vertices and} edges of a graph $G(V,E)$ (with $\size{V}=\Theta(n)$ and $\size{E}=\Theta(m)$) and a coloring function $f:V \rightarrow [C]$ on the vertices, in \vadeg model; and determines whether the number of monochromatic edges in $G$ is {}{$0$} or {}{$T$} with probability $2/3$; requires 
$\Omega\left( \min \{n,\frac{m}{T}\}\right)$ bits of space.
 \end{theo}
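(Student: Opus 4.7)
My plan is to reduce from the INDEX problem in one-way communication complexity (Lemma~\ref{index}), in the same spirit as Theorem~\ref{theo:est-vaaolbth}. The main new difficulty is that in the \va reduction of Theorem~\ref{theo:est-vaaolbth}, Bob's gadget is connected only to the single vertex (or supernode) corresponding to his index $j$; in the \vadeg model the degree oracle queried during Alice's prefix would therefore immediately betray $j$ to Alice, collapsing the reduction. The plan is to craft a modified hard instance whose graph has the property that every vertex contributed by Alice has the same degree in $G$, independent of both $X$ and $j$, while the total number of monochromatic edges is still $T$ or $0$ depending solely on $X_j$.

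The main trick is to use $2N$ pairwise distinct \emph{index-specific} colors $\{C_i^\ast, C_i^0 : i \in [N]\}$, fixed and known to both players, and to have Bob's gadget attach to \emph{all} of Alice's vertices. Consider first the regime $T \le n/2$, which gives the $\Omega(n)$ branch of the bound. Alice encodes each bit $X_i$ of her $N = \Theta(n)$-bit input as a single vertex $p_i$ coloured $C_i^\ast$ if $X_i = 1$ and $C_i^0$ if $X_i = 0$, with no edges among the $p_i$'s. Bob then appends an independent set $Q$ of size $T$, all coloured $C_j^\ast$, and connects every $q \in Q$ to every $p_i$. For $i \neq j$ the colour of $p_i$ lies in $\{C_i^\ast, C_i^0\}$, which is disjoint from $\{C_j^\ast\}$, so none of those edges are monochromatic; for $i = j$ the $|Q| = T$ edges between $Q$ and $p_j$ are monochromatic exactly when $X_j = 1$. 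The graph uses $N + T = \Theta(n)$ vertices and $m := NT = \Theta(nT)$ edges, so $T \le m \le \binom{n}{2}$ holds for $T \le n/2$, and crucially every $p_i$ has degree exactly $|Q| = T$ in $G$, a value depending only on $T$ and not on $(X, j)$.

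For the regime $T > n/2$ I will use the same template but with Alice's vertices blown up into ``supernodes'': Alice has $N = \lfloor n^2/(2T) \rfloor$ bits, each corresponding to an independent set $P_i$ of size $2T/n$ whose vertices all share the colour $C_i^\ast$ or $C_i^0$ according to $X_i$, and Bob appends an independent set $Q$ of size $n/2$ coloured $C_j^\ast$, connected complete-bipartitely to every $P_i$. The same index-specific-colour argument gives $|P_j|\cdot|Q| = T$ monochromatic edges if $X_j = 1$ and $0$ otherwise, every vertex of every $P_i$ has degree exactly $|Q| = n/2$, and the instance has $\Theta(n)$ vertices and $m := \Theta(n^2)$ edges with $m/T = \Theta(n^2/T)$.

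In either regime the reduction finishes as in Theorem~\ref{theo:est-vaaolbth}: Alice simulates an $s$-space streaming algorithm on her prefix, sends its memory to Bob, who completes the stream and declares $X_j = 1$ iff the algorithm reports $\Omega(T)$ monochromatic edges; Lemma~\ref{index} then gives $s = \Omega(N)$, matching $\Omega(\min\{n, m/T\})$. The one nontrivial step that I expect to need careful verification, and which is the reason for the above design, is that Alice can answer every degree-oracle query on her prefix without any help from Bob, because in both constructions the degree of each of her vertices is a fixed function of $|Q|$ alone; this is precisely what lets the \vadeg reduction go through while the naive \va-style reduction would fail.
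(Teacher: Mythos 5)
Your construction is essentially the one in the paper: both reduce from {\sc Index} and use per-index colors (the paper's $C_{i0},C_{i1}$, your $C_i^0,C_i^\ast$) so that Bob's gadget $Q$ is joined complete-bipartitely to \emph{every} one of Alice's vertices or supernodes, making each Alice-side degree equal to $|Q|$ and hence oblivious to $(X,j)$, which is exactly the point that lets the reduction survive the degree oracle. The only difference is cosmetic — the paper organizes the two constructions by $m\gtrless nT$ and parameterizes the supernode and gadget sizes by $m$, while you organize them by $T\gtrless n/2$ and fix $m=\Theta(nT)$ resp.\ $m=\Theta(n^2)$ — but the resulting instances coincide up to constants.
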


\begin{proof} We show that the lower bound is $\Omega(n)$ when $m > nT$ and $\Omega\left(\frac{m}{T}\right)$ when $m \leq nT$, separately, to get the stated lower bound. We give a reduction from the INDEX problem to the \conflictest problem in graphs with $\Theta(n)$ vertices, $\Theta(m)$ edges and having atleast $T$ conflicting edges, in the vertex arrival model. The existence of $m$ will be evident from the construction.

 The reduction works as follows. For $m > nT$, Alice has an $N$-bit input string $X \in \{0,1\}^{N}$. For each input bit $X_i$, Alice creates a vertex {}{$p_i$}. If $X_i$ equals $1$, then vertex {}{$p_i$} is colored with color $C_{i1}$, else it is colored with color $C_{i0}$.  After processing all bits of her input, Alice sends the current memory state to Bob. 
 Let $j \in [N]$ be the input of Bob. Bob constructs a gadget {}{$Q$} of {}{$n-N$} vertices such that {}{$Q$} is an independent set of $\left(n-N\right)$ vertices. Bob colors all the vertices in the gadget {}{$Q$} with color $C_{j1}$. He adds all the edges from {}{$Q$} to all the vertices in {}{$\{l_i:i \in [N]\}$}. The number of vertices in the graph is {}{$N+ (n-N) = \Theta(n)$ and the number of edges in the graph is $m = NT$}. We set $N=n-T$. If $X_{j} = 0$, then the color of {}{$p_{j}$} is $C_{j0}$ and there are {}{$0$} conflicting edges, where as if $X_{j} = 1$, then the color of {}{$p_{j}$} is $C_{j1}$ and there will be {}{$T$} many conflicting edges. Therefore, for $N = n-T$, deciding whether the number of monochromatic edges in the graph is $0$ or $T$, requires $\Omega(n-T)$ or $\Omega(n)$ space. Observe that, the degree of the vertices are independent of the inputs of Alice and Bob. In particular, the degree of every vertex in {}{$\{p_i:i \in [N]\}$} is {}{$\size{Q}=n-N$} and the degree of every vertex in {}{$Q$} is {}{$N$}. So, the availability of degree oracle will not help in the above construction.

For $m \leq nT$, Alice has an $N$-bit input string $X \in \{0,1\}^{N}$. For each input bit $X_i$, Alice constructs an independent set {}{$P_{i}$} of size $\frac{2nT}{m}$. If $X_i$ equals $1$, then the vertices of {}{$P_{i}$} are colored with color $C_{i1}$, else the vertices are colored with color $C_{i0}$. After processing all bits of her input, Alice sends the current memory state to Bob. 
Let $j \in [N]$ be the input of Bob. Let $j\in [N]$be the input of Bob. Bob constructs a gadget {}{$Q$} where {}{$Q$} is an independent set of $\frac{m}{2n}$ vertices. Bob colors the vertices in {}{$Q$} with $C_{j1}$ and he adds all the edges from {}{$Q$} to 
{}{$(P_1\cup \cdots \cup P_N)$}. We set $N = \frac{m}{T}$. The number of vertices in the graph is {}{$\frac{2nT}{m} \cdot N + \frac{m}{2n} =\Theta(n)$} as $T \geq \frac{n}{2}$ and $m \geq T$ {}{and the number of edges in the graph is $m = NT$}.
If $X_{j} = 0$, then the color of vertices in {}{$P_{j}$} is $C_{j0}$ and there are {}{$0$} conflicting edges, where as if $X_{j} = 1$, then the color of vertices in {}{$P_{j}$} is $C_{j1}$ and there will be {}{$T$} conflicting edges. Therefore, for $N = \frac{m}{T}$, deciding whether the number of monochromatic edges in the graph is $0$ or $T$,  requires $\Omega \left(\frac{m}{T} \right)$ space.
Observe that, the degree of the vertices are independent of the inputs of Alice and Bob. In particular,  the degree of every vertex in 
{}{$P_1\cup \cdots \cup P_N$} is {}{$\size{Q}=\frac{m}{2n}$}, and the degree of every vertex in {}{$Q$} is {}{$N\cdot \frac{2Tn}{m}$}. So, the availability of degree oracle will not help in the above construction. 
\end{proof}

\end{document}